\newcommand{\abs}[1]{\left\lvert#1\right\rvert} 
\newcommand{\norm}[1]{\left\lVert#1\right\rVert} 
\theoremstyle{plain}
\newtheorem{theorem}{Theorem}
\newtheorem{lemma}[theorem]{Lemma}
\newtheorem{definition}[theorem]{Definition}  
\newtheorem{remark}[theorem]{Remark}  
\newtheorem*{remark*}{Remark}   
\renewcommand\qedsymbol{$\blacksquare$}
\newenvironment{proof-of}[1][{\hspace{-\blank}}]{{\medskip\noindent\textit{Proof~{#1}.\ }}}{\hfill\qedsymbol}
\renewcommand{\Tr}{{\operatorname{Tr}\,}}
\newcommand{\id}{{\operatorname{id}}}
\newcommand{\1}{\openone}
\newcommand{\cE}{{\mathcal{E}}}
\newcommand{\cT}{{\mathcal{T}}}
\newcommand{\aw}[1]{{#1}}             
\begin{document}

\title{Distributed Compression of Correlated \aw{Classical-Quantum} Sources \protect\\ or: The Price of Ignorance}

\author{Zahra Baghali Khanian}
\affiliation{F\'{\i}sica Te\`{o}rica: Informaci\'{o} i Fen\`{o}mens Qu\`{a}ntics, %
Departament de F\'{\i}sica, Universitat Aut\`{o}noma de Barcelona, 08193 Bellaterra (Barcelona), Spain}
\affiliation{ICFO---Institut de Ci\`{e}ncies Fot\`{o}niques, \protect\\%
Barcelona Institute of Science and Technology, 08860 Castelldefels, Spain}
\email{zbkhanian@gmail.com}

\author{Andreas Winter}
\affiliation{F\'{\i}sica Te\`{o}rica: Informaci\'{o} i Fen\`{o}mens Qu\`{a}ntics, %
Departament de F\'{\i}sica, Universitat Aut\`{o}noma de Barcelona, 08193 Bellaterra (Barcelona), Spain}
\affiliation{ICREA---Instituci\'o Catalana de Recerca i Estudis Avan\c{c}ats, %
Pg.~Lluis Companys, 23, 08010 Barcelona, Spain} 
\email{andreas.winter@uab.cat}

\begin{abstract}
We resume the investigation of the problem of independent 
local compression of correlated quantum sources, the classical case 
of which is covered by the celebrated Slepian-Wolf theorem. 
We focus specifically on \aw{classical-quantum (cq)} sources, for which one edge 
of the rate region, corresponding to \aw{the compression of} the classical
part, using the quantum part as side information at the decoder, 
was previously determined by Devetak and Winter [Phys. Rev. A 68, 042301 (2003)].
Whereas the Devetak-Winter protocol attains a rate-sum equal to the von 
Neumann entropy of the joint source, here we show that the full rate 
region is much more complex, due to the partially quantum nature of
the source. In particular, in the opposite case of compressing the
quantum part of the source, using the classical part as side information
at the decoder, typically the rate sum is strictly larger
than the von Neumann entropy of the total source.

We determine the full rate region in the 
\textit{generic} case, showing that, apart from the Devetak-Winter
point, all other points in the achievable region 
have a rate sum strictly larger than the joint entropy. We can interpret 
the difference as the price paid for the quantum encoder being ignorant 
of the classical side information.
In the general case, we give an achievable rate region, via protocols 
that are built on the decoupling principle, \aw{and the principles of quantum} 
state merging and \aw{quantum} state redistribution. 
Our achievable region is matched almost by a single-letter converse,
which however still involves asymptotic errors and an unbounded
auxiliary system.
\end{abstract}

\date{22 November 2018}

\maketitle

\section{Source and compression model}
\label{introduction}
Data compression can be regarded as the foundation of 
information theory in the treatment of Shannon \cite{Shannon1948}, and it remains one 
of the most fruitful problems to be considered, especially when 
additional constraints on the source, the encoders or the decoder 
are imposed. In particular, the Slepian-Wolf problem of two sources  
correlated in a known way, but subject to separate, local compression \cite{Slepian1973}
has proved to provide a unifying principle for much of Shannon
theory, giving rise to natural information theoretic interpretations
of entropy and conditional entropy, and exhibiting deep 
connections with error correction, channel capacities and 
mutual information (cf.~\cite{csiszar_korner_2011}).
The quantum case has been investigated for two decades, starting with the
second author's PhD thesis \cite{Winter1999} \aw{and subsequently in
\cite{Devetak2003}}, up to the systematic study \cite{Ahn2006}, 
and while we still do not have a complete understanding of the rate region,
it has become clear that the problem is of much higher
complexity than the classical case. The quantum Slepian-Wolf
problem, and specifically quantum data compression with side
information at the decoder, has resulted in many fundamental
advances in quantum information theory, including the protocols
of quantum state merging \cite{Horodecki2007,Abeyesinghe2009} and quantum state 
redistribution \cite{Devetak2008_2}, 
which have given operational meaning to the conditional von 
Neumann entropy, the mutual information and the conditional quantum mutual 
information, respectively. 

A variety of resource models and different tasks have been 
considered over the years: The source and its recovery was
either modelled as an ensemble of pure states (following
Schumacher \cite{Schumacher1995}), or as a pure state between the encoders and a
reference system; the communication resource required was
either counted in qubits communicated, in addition either
allowing or disallowing entanglement, or it was counted in
ebits shared between the agents, but with free classical 
communication. While this latter model has lead to the most
complete picture of the general rate region, in the present
paper we will go back to the original idea \cite{Schumacher1995,Winter1999} 
of quantifying the communication, counted in qubits, between the
encoders and the decoder.

\bigskip
\textbf{Notation.}
We use the following conventions throughout the paper.
\aw{Quantum systems are associated with (finite dimensional) Hilbert spaces $A$, $B$, etc.,
whose dimensions are denoted $|A|$, $|B|$, respectively.}
We identify states with their density operators, and 
we use the notation $\phi= \ketbra{\phi}{\phi}$ as the density 
operator of the pure state vector $\ket{\phi}$. 
The von Neumann entropy is defined as $S(\rho) = - \Tr\rho\log\rho$ 
(throughout this paper, $\log$ denotes by default the binary logarithm,
and its inverse function $\exp$, unless otherwise stated, is also to basis $2$).
Conditional entropy and conditional mutual information, $S(A|B)_{\rho}$ and $I(A:B|C)_{\rho}$,
respectively, are defined in the same way as their classical counterparts: 
\begin{align*}
    S(A|B)_{\rho}   &= S(AB)_\rho-S(B)_{\rho}, \text{ and} \\ 
    I(A:B|C)_{\rho} &= S(A|C)_\rho-S(A|BC)_{\rho}
                     = S(AC)_\rho+S(BC)_\rho-S(ABC)_\rho-S(C)_\rho.
\end{align*}
The fidelity between two states $\rho$ and $\sigma$ is defined as 
\[
 F(\rho, \sigma) = \|\sqrt{\rho}\sqrt{\sigma}\|_1 
                 = \Tr \sqrt{\rho^{\frac{1}{2}} \sigma \rho^{\frac{1}{2}}}.
\] 
It relates to the trace distance in the following well-known way \cite{Fuchs1999}:
\[
  1-F(\rho,\sigma) \leq \frac12\|\rho-\sigma\|_1 \leq \sqrt{1-F(\rho,\sigma)^2}.
\]

\bigskip
The source model we shall consider is a hybrid classical-quantum one,
with two agents, Alice and Bob, whose task is is to compress the 
classical and quantum parts of the source, respectively. They then send their
shares to a decoder, \aw{Debbie}, who has to reconstruct the classical
information with high probability and the quantum information with
high (average) fidelity.

In detail, the source is characterised by a classical source, i.e.~a probability
distribution $p(x)$ on a discrete (in fact: finite) alphabet $\mathcal{X}$
which is observed by Alice, and a family of quantum states $\rho_x$
on a quantum system $B$, given by a Hilbert space of finite dimension $|B|$. 
To define the problem of independent local compression (and
decompression) of such a correlated \aw{classical-quantum} source, we 
shall consider purifications $\psi_x^{BR}$ of the $\rho_x$,
i.e.~$\rho_x^B = \Tr_R \psi_x^{RB}$. Thus the source can be described
compactly by the cq-state
\[
  \omega^{XBR} = \sum_{x \in \mathcal{X}} p(x) \ketbra{x}{x}^X \otimes \ketbra{\psi_x}{\psi_x}^{BR}.
\]
We will be interested in the information theoretic limit of
many copies of $\omega$, i.e.
\[
  \omega^{X^n B^n R^n}
    = \left(\omega^{XBR}\right)^{\otimes n}
    = \sum_{x^n \in \mathcal{X}^n} p(x^n) \ketbra{x^n}{x^n}^{X^n} 
                                   \otimes \ketbra{\psi_{x^n}}{\psi_{x^n}}^{B^nR^n},
\]
where we use the notation
\begin{align*}
  x^n              &= x_1 x_2 \ldots x_n, \\
  \ket{x^n}        &= \ket{x_1} \ket{x_2} \cdots \ket{x_n}, \\
  p(x^n)           &= p(x_1) p(x_2)  \cdots p(x_n), \text{ and} \\
  \ket{\psi_{x^n}} &= \ket{\psi_{x_1}} \ket{\psi_{x_2}} \cdots \ket{\psi_{x_n}}.
\end{align*}

Alice and Bob, receiving their respective 
parts of the source, separately encode these using the most general allowed 
quantum operations; the compressed quantum information, living on 
a certain number of qubits, is passed to the decoder who has to
output, again acting with a quantum operation, an element of $\mathcal{X}$
and a state on $B^n$, in such a way as to attain a low error probability
for $x^n$ and a high-fidelity approximation of the conditional quantum
source state, $\psi_{x^n}^{B^nR^n}$.
We consider two models: unassisted and entanglement-assisted, which we
describe formally in the following 
(see Figs.~\ref{fig:una} and \ref{fig:ea}).

\medskip
\textbf{Unassisted model.}
With probability $p(x^n)$, the source provides Alice and Bob respectively 
with states $\ket{x^n}^{X^n}$ and $\ket{\psi_{x^n}}^{B^nR^n}$.
Alice and Bob then perform their respective
encoding operations $\mathcal{E}_X:X^n \longrightarrow C_X$ and 
$\mathcal{E}_B:B^n \longrightarrow C_B$, 
\aw{respectively,} which are quantum operations, i.e.~completely positive and trace preserving (CPTP)
maps. \aw{Of course, as functions they act on the operators (density matrices) over 
the respective input and output Hilbert spaces. But as there is no risk of confusion,
and not to encumber the notation, we will simply write the Hilbert spaces when
denoting a CPTP map. Note that since $X$ is a classical random variable, $\mathcal{E}_X$
is entirely described by a cq-channel.}
We call $R_X=\frac1n \log|C_X|$ and $R_B=\frac1n \log|C_B|$ 
\aw{the} quantum rates of the compression protocol.
Since Alice and Bob are required to act independently, the joint encoding operation 
is $\mathcal{E}_X \otimes \mathcal{E}_B$. 
The systems $C_X$ and $C_B$ are then sent to \aw{Debbie} who performs
a decoding operation \aw{$\mathcal{D}:C_X C_B \longrightarrow \hat{X}^n\hat{B}^n$}.
We define the \aw{extended source state}
\[
  \omega^{X^n {X'}^n B^n R^n} 
      = \left( \omega^{X{X'}BR}\right)^{\otimes n}
      = \sum_{x^n \in \mathcal{X}^n} p(x^n) \ketbra{x^n}{x^n}^{X^n} \otimes \ketbra{x^n}{x^n}^{{X'}^n} 
                                                                    \otimes \ketbra{\psi_{x^n}}{\psi_{x^n}}^{B^nR^n},
\]
and say the encoding-decoding
scheme has average fidelity $1-\epsilon$ if 
\begin{align} 
  \label{F_QCSW_unassisted1}
  \overline{F} := F\left(\omega^{X^n {X'}^n B^n R^n },
      \left(\mathcal{D} \circ (\mathcal{E}_X \otimes \mathcal{E}_B) \otimes \id_{{X'}^n R^n}\right) \omega^{X^n {X'}^n B^n R^n} \right)   
  \geq 1-\epsilon,
\end{align} 
where $\id_{{X'}^n R^n}$ is the identity (ideal) channel acting on ${X'}^n R^n$.
By the above fidelity definition and the linearity of CPTP maps, 
the average fidelity defined in (\ref{F_QCSW_unassisted1}) \aw{can be expressed equivalently as}
\begin{align} 
  \label{F_QCSW_unassisted2}
  \overline{F} = \sum_{x^n \in \mathcal{X}^n } p(x^n) F \left( \ketbra{x^n}{x^n}^{X^n} \otimes \ketbra{\psi_{x^n}}{\psi_{x^n}}^{B^nR^n} ,(\mathcal{D} \circ (\mathcal{E}_X \otimes \mathcal{E}_B) \otimes \id_{R^n}) \ketbra{x^n}{x^n}^{X^n} \otimes \ketbra{\psi_{x^n}}{\psi_{x^n}}^{B^nR^n} \right). \nonumber
\end{align}

We say that $(R_X,R_B)$ is an (asymptotically) achievable rate pair if 
there exist codes $(\mathcal{E}_X,\mathcal{E}_B,\mathcal{D})$ as above 
for every $n$, with fidelity $\overline{F}$ converging to $1$,
and classical and quantum rates converging to $R_X$ and $R_B$, respectively.
\aw{The rate region is the set of all achievable rate pairs, as a subset of $\mathbb{R}_{\geq 0}^2$.}

\begin{figure}[ht]
  \includegraphics[scale=.4]{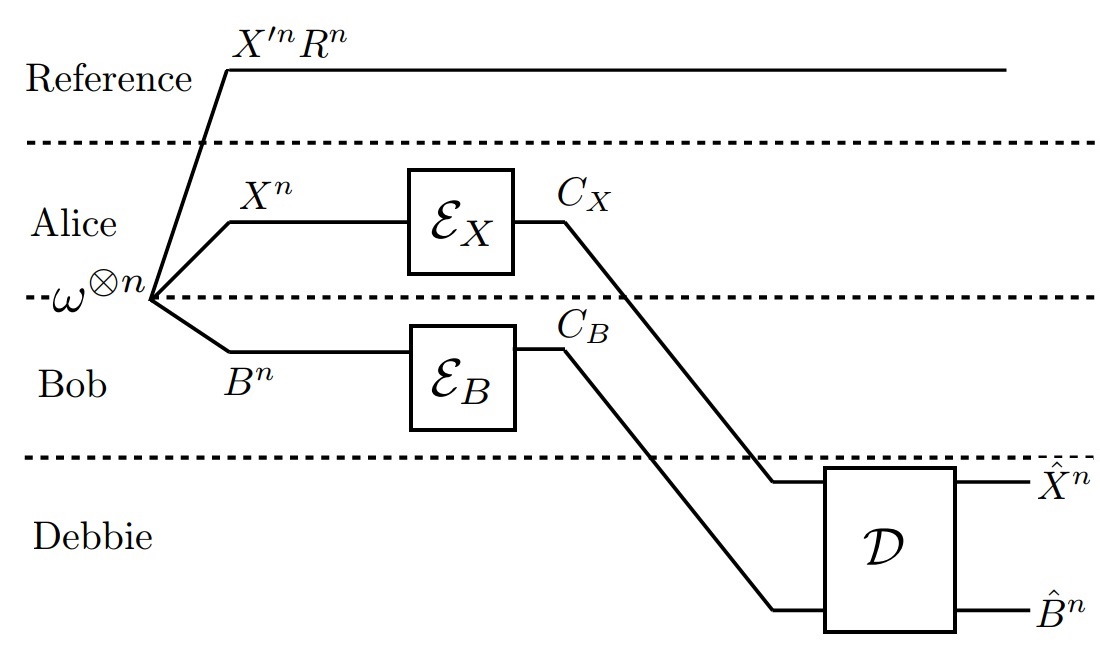}
  \caption{\aw{Circuits diagram of} the unassisted model. Dotted lines are used to 
           demarcate domains controlled by the different participants. 
           The solid lines represent quantum information \aw{registers}.}
  \label{fig:una}
\end{figure}
 
It is shown \aw{by Devetak and Winter} \cite{Devetak2003,Winter1999} that the \aw{rate pair
\begin{equation}
  \label{eq:DW}
  (R_X,R_B) = (S(X|B),S(B))
\end{equation}
is achievable and optimal. The optimality is two-fold; first, the rate sum
achieved, $R_X+R_B=S(XB)$ is minimal, and secondly, even with unlimited $R_B$,
$R_X \geq S(X|B)$. This shows that the Devetak-Winter point is an extreme point
of the rate region. Interestingly,} Alice can achieve the rate $S(X|B)$ using only classical 
communication. However, we \aw{will} prove the converse theorems considering 
a quantum channel for Alice, which are obviously stronger statements.    
In Theorem \ref{theorem: generic full rate region}, we show that our system model is equivalent 
to the model considered in \cite{Devetak2003,Winter1999}, which implies the achievability and 
optimality of this rate pair in our system model. 
\aw{We remark that in \cite{Devetak2003}, the rate $R_B=S(B)$ was not explicitly
discussed, but it is clear that it can always be achieved by Schumacher's quantum
data compression \cite{Schumacher1995}, introducing an arbitrarily small additional error.}

\medskip
\textbf{Entanglement-assisted model.} 
This model \aw{generalizes the unassisted model, and it is basically the same,} 
except that we let Bob and \aw{Debbie} share entanglement \aw{and use it in encoding and decoding, respectively.
In addition, we take care of any possible entanglement that is produced in the process.
Consequently, while Alice's encoding  $\mathcal{E}_X:X^n \longrightarrow C_X$ remains the same,
the Bob's encoding and the decoding map now act as
$\mathcal{E}_B:B^n B_0 \longrightarrow C_B B_0'$ and
$\mathcal{D}:C_X C_B D_0 \longrightarrow \hat{X}^n\hat{B}^n D_0'$, respectively,
where $B_0$ and $D_0$ are $K$-dimensional quantum registers of Bob and \aw{Debbie}, 
respectively, designated to hold the initially shared entangled state, and $B_0'$ and $D_0'$
are $L$-dimensional registers for the entanglement produced by the protocol.
Ideally, both initial and final entanglement are given by maximally
entangled states $\Phi_K$ and $\Phi_L$, respectively.}
Correspondingly, we say \aw{that} the encoding-decoding scheme has average fidelity $1-\epsilon$ if 
\begin{equation} 
  \label{F_QCSW_assisted}
  \overline{F} := F\left(\omega^{X^n {X'}^n B^n R^n }\otimes \Phi_L^{B_0'D_0'},
                    \left(\mathcal{D} \circ (\mathcal{E}_X \otimes \mathcal{E}_{BB_0} \otimes \id_{D_0}) \otimes \id_{{X'}^n R^n}\right) 
                                                                               \omega^{X^n {X'}^n B^n R^n} \otimes \Phi_K^{B_0D_0}\right)
   \geq 1-\epsilon.
\end{equation} 
We call $E=\frac{1}{n}(\log K - \log L)$ the entanglement rate of the scheme.
The CPTP map $\mathcal{E}_{B}$ takes the input systems $B^nB_0$ to the compressed system 
$C_B$ \aw{plus Bob's share of the output entanglement, $B_0'$.}
\aw{Debbie} applies the decoding operation $\mathcal{D}$ on the received systems 
$C_XC_B$ and \aw{her part of the initial} entanglement $D_0$, 
to produce an output state on systems $\hat{X}^n \hat{B}^n$ \aw{plus her share of the output
entanglement, $D_0'$}.
We say $(R_X, R_B, E)$ is an (asymptotically) achievable rate triple if for all $n$
there exist entanglement-assisted codes as before, such that the
fidelity $\overline{F}$ converges to $1$, and
the classical, quantum and entanglement rates converge to
$R_X$, $R_B$ and $E$, respectively.
\aw{The rate region is the set of all achievable rate pairs, as a subset of 
$\mathbb{R}_{\geq 0}^2\times\mathbb{R}$. In the following we will be mostly
interested in the projection of this region onto the first two coordinates,
$R_X$ and $R_B$, corresponding to unlimited entanglement assistance.}

\medskip
\aw{It is a simple consequence of the time sharing principle that the rate regions,
both for the unassisted and the entanglement-assisted model, are closed convex regions.
Furthermore, since one can always waste rate, the rate regions are open to the ``upper right''.
This means that the task of characterizing the rate regions boils down to describing
the lower boundary, which can be achieved by convex inequalities. In the Slepian-Wolf
problem, it is in fact linear inequalities, and we will find analogues of these
in the present investigation.}

\medskip
Stinespring's dilation theorem \aw{\cite{Stinespring1955}} states that any CPTP map can be built 
from the basic operations of isometry and reduction to a subsystem by 
tracing out the environment system \cite{Stinespring1955}. 
Thus, the encoders and the decoder are without loss of generality isometries 
\begin{align*}
    U_X : {X^n} &\longrightarrow {C_X W_X},                          \\
    U_B : {B^n B_0} &\longrightarrow {C_B B_0' W_B},                 \\
    V   : {C_X C_B D_0} &\longrightarrow {\hat{X}^n \hat{B}^n D_0' W_D},
\end{align*}
\aw{where the new systems $W_X$, $W_B$ and $W_D$ are the environment systems
of Alice, Bob and \aw{Debbie}, respectively. They simply remain locally in 
possession of the respective party.}

\begin{figure}[ht] 
  \includegraphics[scale=.4]{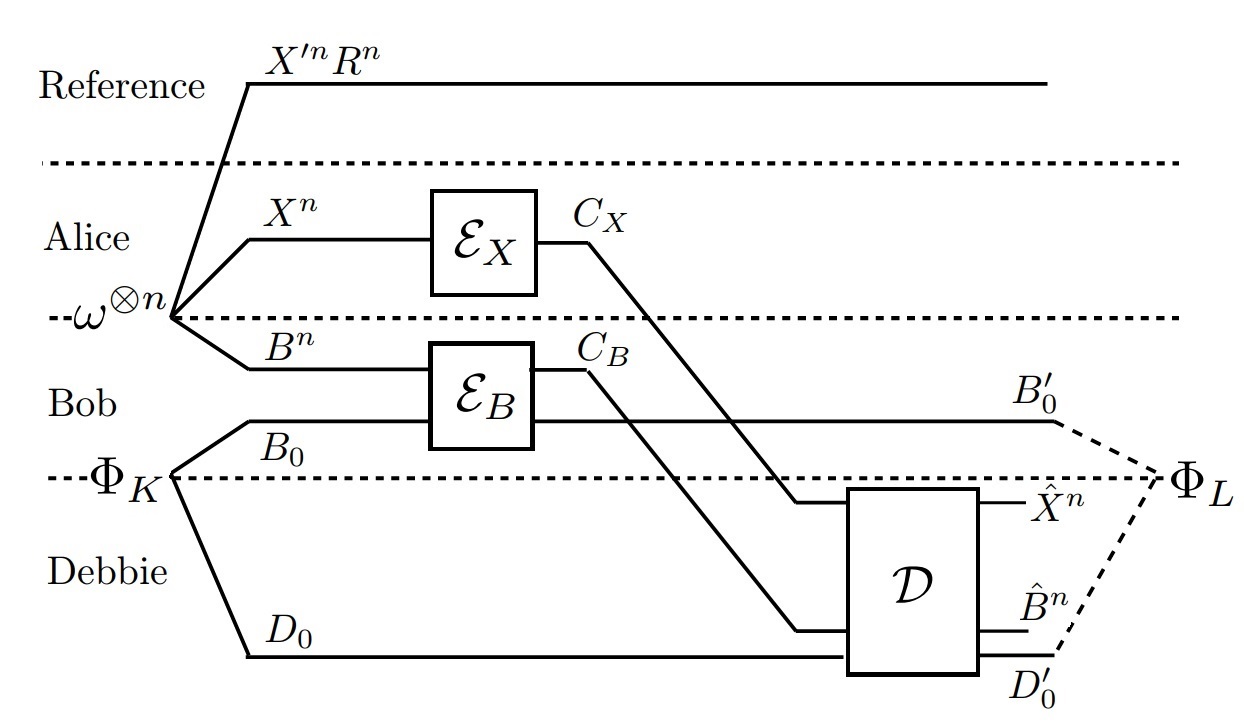} 
  \caption{\aw{Circuits diagram of} the entanglement-assisted model. Dotted lines are used to 
           demarcate domains controlled by the different participants. 
           The solid lines represent quantum information \aw{registers}.}
  \label{fig:ea}
\end{figure}

\aw{The following lemma states that for a code of block length $n$ and error $\epsilon$, 
the environment parts of the encoding and decoding isometries, i.e.~$W_X$, $W_B$
and $W_D$, as well as the entanglement output registers $B_0'$ and $D_0'$, are decoupled from 
the reference $R^n$, conditioned on $X^n$. 
This lemma plays a crucial role in the proofs of converse theorems;
it is proved in Appendix \ref{decoupling_condition_proof}.}

\begin{lemma}(Decoupling condition) 
\label{decoupling condition} 
\aw{For a code of block length $n$ and error $\epsilon$ in the entanglement-assisted model,  
let $W_X$, $W_B$ and $W_D$ be the environments of Alice's and Bob's encoding and of
Debbie's decoding isometries, respectively. Then,
\[
  I(W_XW_BW_D B_0'D_0':\hat{X}^n\hat{B}^nR^n|{X'}^n)_\xi \leq n \delta(n,\epsilon) ,
\] 
where $\delta(n,\epsilon) = 4\sqrt{6\epsilon}\log(|X| |B|) + \frac2n h(\sqrt{6\epsilon})$, 
\aw{with the binary entropy $h(\epsilon)=-\epsilon \log \epsilon -(1-\epsilon)\log (1-\epsilon)$;}
the mutual information is with respect to the state 
\[
  \xi^{{X'}^n \hat{X}^n \hat{B}^n B_0' D_0' W_XW_BW_D R^{n}}
      =\left(\mathcal{D} \circ (\mathcal{E}_X \otimes \mathcal{E}_{B} \otimes \id_{D_0}) \otimes \id_{{X'}^n R^n}\right) 
        \omega^{X^n {X'}^n B^n R^n } \otimes \Phi_K^{B_0D_0}.
\]}
\end{lemma}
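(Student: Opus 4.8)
\medskip
\noindent\textit{Proof idea.}\
My plan is to compare $\xi$ with an idealized state $\tilde\xi$ on the same registers for which the conditional mutual information in question vanishes \emph{exactly}, and then to transport the bound to $\xi$ by a uniform continuity estimate. First I would pass to pure states: purifying the classical correlation between $X^n$ and ${X'}^n$ by an auxiliary classical register, and dilating $\mathcal{E}_X,\mathcal{E}_B,\mathcal{D}$ to the isometries $U_X,U_B,V$ of the preceding discussion, one obtains a global pure state whose only source of mixedness is the classical register. Conditioning on ${X'}^n=x^n$ therefore leaves a \emph{pure} state $\xi_{x^n}$ on $R^n\hat X^n\hat B^n B_0'D_0'W_XW_BW_D$, with $\xi=\sum_{x^n}p(x^n)\ketbra{x^n}{x^n}^{{X'}^n}\otimes\xi_{x^n}$.

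Next I would write down the decoupled target. In the ideal protocol, conditioned on ${X'}^n=x^n$ the reconstructed systems together with the reference carry the pure state $\ketbra{x^n}{x^n}^{\hat X^n}\otimes\ketbra{\psi_{x^n}}{\psi_{x^n}}^{\hat B^nR^n}$ while $B_0'D_0'$ carry $\Phi_L$, so that this whole block is in tensor product with the environments. Writing $\tau^{(x^n)}$ for the reduced state of $\xi_{x^n}$ on $W_XW_BW_D$, I set
\[
  \tilde\xi = \sum_{x^n}p(x^n)\ketbra{x^n}{x^n}^{{X'}^n}\otimes\ketbra{x^n}{x^n}^{\hat X^n}\otimes\ketbra{\psi_{x^n}}{\psi_{x^n}}^{\hat B^nR^n}\otimes\Phi_L^{B_0'D_0'}\otimes\tau^{(x^n)}.
\]
Since ${X'}^n$ is classical, $I(W_XW_BW_DB_0'D_0':\hat X^n\hat B^nR^n|{X'}^n)_{\tilde\xi}$ is the $p(x^n)$-average of the mutual informations of the $\tilde\xi_{x^n}$ across the cut $W_XW_BW_DB_0'D_0'\,:\,\hat X^n\hat B^nR^n$, and each vanishes because $\tilde\xi_{x^n}$ is a product across that cut; hence the quantity is $0$ on $\tilde\xi$.

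The substantial step is to bound $\tfrac12\norm{\xi-\tilde\xi}_1$. From the fidelity criterion (\ref{F_QCSW_assisted}), together with the fact that the fidelity of two cq-states sharing the classical marginal $p(x^n)$ is the $p(x^n)$-average of the conditional fidelities, the reduced state of $\xi_{x^n}$ on $\hat X^n\hat B^nR^nB_0'D_0'$ is at fidelity $f_{x^n}$ from the corresponding pure ideal state, with $\sum_{x^n}p(x^n)f_{x^n}\ge 1-\epsilon$. Since $\xi_{x^n}$ purifies that reduced state, I would invoke the elementary fact that a pure bipartite state whose marginal on one side is at fidelity $f$ from a pure state is itself at fidelity at least $f^2$ from the tensor product of that pure state with the complementary marginal; this yields $\tfrac12\norm{\xi_{x^n}-\tilde\xi_{x^n}}_1\le 2\sqrt{1-f_{x^n}}$, and averaging over $x^n$ with concavity of the square root, together with the fidelity/trace-distance inequalities recorded in the Notation section, gives $\tfrac12\norm{\xi-\tilde\xi}_1\le\sqrt{6\epsilon}$.

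Finally I would apply the Alicki--Fannes--Winter uniform continuity bound for the conditional mutual information to $\xi$ and $\tilde\xi$: since the quantity is $0$ on $\tilde\xi$, it is at most $4\cdot\tfrac12\norm{\xi-\tilde\xi}_1\cdot n\log(|X||B|)+2h\bigl(\tfrac12\norm{\xi-\tilde\xi}_1\bigr)$, which with the previous bound is exactly $n\delta(n,\epsilon)$. The step I expect to require the most care is precisely this last one: one must ensure that the dimension factor in the continuity bound is only that of the reconstructed classical-quantum registers $\hat X^n\hat B^n$, i.e.\ $n\log(|X||B|)$, and in particular is independent of the \emph{a priori} unbounded environments $W_X,W_B,W_D$ and of the reference $R^n$ (whose marginal is, moreover, the same in $\xi$ and $\tilde\xi$). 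Exploiting the $A\leftrightarrow B$ symmetry of $I(A:B|C)$, the purity of $\xi_{x^n}$ and the product structure of $\tilde\xi$ should do it, but pinning down the constants and the dimension bookkeeping is the delicate part.
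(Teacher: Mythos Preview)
Your overall strategy---construct a nearby state on which the conditional mutual information vanishes, then invoke Alicki--Fannes continuity---is exactly right, and your ``elementary fact'' (that a pure bipartite state whose marginal has fidelity $f$ to a pure target has fidelity $\ge f^2$ to the product of that target with the complementary marginal) is correct and yields the trace-distance bound you claim. The paper follows the same skeleton but makes a different, and in one respect sharper, choice of comparison state.

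The paper does \emph{not} compare $\xi$ to your idealized $\tilde\xi$. Instead it compares $\xi$ to
\[
  \zeta = \sum_{x^n} p(x^n)\ketbra{x^n}^{{X'}^n}\otimes \xi_{x^n}^{\hat{X}^n\hat{B}^nR^n}\otimes \xi_{x^n}^{B_0'D_0'W_XW_BW_D},
\]
the product of the \emph{actual} marginals of $\xi_{x^n}$ across the cut. Closeness of $\xi$ and $\zeta$ is obtained from the Schmidt decomposition of the pure $\xi_{x^n}$: since the marginal $\xi_{x^n}^{\hat{X}^n\hat{B}^nR^n}$ has operator norm close to $1$ on average (this is how the fidelity hypothesis enters, via Eq.~(\ref{eq-B1})), one computes $F(\xi_{x^n},\xi_{x^n}^A\otimes\xi_{x^n}^B)=\sum_i\lambda_{x^n}(i)^{3/2}\ge\|\xi_{x^n}^A\|^{3/2}$, averages, and converts to trace distance. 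The payoff of choosing $\zeta$ is that $\zeta$ and $\xi$ agree on the marginal ${X'}^n\hat{X}^n\hat{B}^nR^n$, so $S(\hat{X}^n\hat{B}^nR^n|{X'}^n)_\xi = S(\hat{X}^n\hat{B}^nR^n|{X'}^n)_\zeta$ \emph{exactly}; the CMI then reduces to a single difference $S(A|BC)_\zeta - S(A|BC)_\xi$, requiring only one application of Alicki--Fannes and producing precisely the $2h(\sqrt{6\epsilon})$ in $\delta(n,\epsilon)$.

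With your $\tilde\xi$ neither the $A C$ nor the $BC$ marginal matches $\xi$, so a straightforward triangle-inequality argument costs two Alicki--Fannes applications and yields $4h(\cdot)$ rather than $2h(\cdot)$. You can certainly recover the stated $\delta(n,\epsilon)$ by tightening elsewhere (your trace-distance bound is actually $2\sqrt{\epsilon}<\sqrt{6\epsilon}$; and, exploiting that each $\xi_{x^n}$ is pure, $I(A:B)_{\xi_{x^n}}=2S(A)_{\xi_{x^n}}$ so a single Fannes bound on the unconditional entropy suffices), but the paper's marginal-matching trick is the cleanest route to the exact constant and sidesteps the dimension bookkeeping you flagged. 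For the dimension factor, note that the paper does include $|R^n|$ and simply bounds $|R|\le |X||B|$, getting $\log(|X|^{2n}|B|^{2n})=2n\log(|X||B|)$; your hoped-for $n\log(|X||B|)$ is not what appears in $\delta(n,\epsilon)$.
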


\medskip
The \aw{structure of the rest of the paper} is as follows. 
In the next section (Sec.~\ref{sec:seiteninformation}) we start looking
at the important subproblem of compressing the quantum part of the source
when the classical part is sent uncompressed, in other words we
want to find the minimum achievable rate $R_B$ when $R_X$ is
unbounded; this is the opposite edge of the rate region from the
one determined in \cite{Devetak2003,Winter1999}.
We give a general lower (converse) bound and an upper
(achievability) bound, which however do not match in general.
Then, in Sec.~\ref{sec:generic side info} we show that for a family of
\emph{generic} sources, the two bounds coincide, showing that
for almost all sources in any open set of sources, the optimal
quantum compression rate is $R_B = \frac12(S(B)+S(B|X))$.
These results hold in both models, entanglement-assisted and
unassisted. \aw{In Sec.~\ref{sec: full problem}, we move to analysing
the full rate region. We first extend the converse bound from Sec.~\ref{sec:seiteninformation}
to a general outer bound on the rate region (Subsec.~\ref{sec:Converse Bounds in General}),
which yields a tight, single-letter characterization of the rate region for generic sources, 
equally with or without entanglement-assistance (Subsec.~\ref{sec:generic full});
In general, however, can only give an outer bound on the rate region
(Subsec.~\ref{sec:Achievability Bounds in General})
Finally, in Sec.~\ref{sec:discuss}, we close with a discussion of what we
have achieved and of the principal open questions left by our work.}


\section{Quantum data compression with classical side information}
\label{sec:seiteninformation}
In this section, we assume that Alice sends her information to \aw{Debbie} 
at rate $R_X=\log \abs{\mathcal{X}}$ such that \aw{Debbie} can decode it 
perfectly, and we ask how much Bob can compress his system given that 
the decoder has access to classical side information $X^n$.  
\aw{This problem is a special case of the \emph{classical-quantum Slepian-Wolf problem}},
and we call it quantum data compression with classical side information at the decoder,
in analogy to the problem of classical data compression 
with quantum side information at the decoder which is addressed 
in \cite{Devetak2003,Winter1999}. Note we do not speak about the
compression and decompression of the classical part at all, and the 
decoder \aw{may} depend directly on $x^n$.
\aw{Of course, by Shannon's data compression theorem \cite{Shannon1948}, $X$ can always be 
compressed to a rate $R_X = H(X)$, introducing an arbitrarily small
error probability}.

We know from previous section that the Bob's encoder is without loss of generality 
an isometry \aw{$U \equiv U_B:{B^n B_0} \longrightarrow {C W B_0'}$, 
taking $B^n$ and Bob's part of the entanglement $B_0$ to systems 
$C\otimes W \otimes B_0'$, where $C \equiv C_B$ is the compressed 
information of rate $R_B=\frac{1}{n}\log |C|$; $W \equiv W_B$ is the environment 
of Bob's encoding CPTP map, and $B_0'$ is the register carrying Bob's share of 
the output entanglement
(in this section, we drop subscript $B$ from $C_B$ and $W_B$). 
Having access to side information $X^n$, \aw{Debbie} applies the decoding isometry  
$V:X^n C D_0 \to \hat{X}^n \hat{B}^n W_D D_0'$ to generate 
the output systems $\hat{X}^n \hat{B}^n$ and entanglement share
$D_0'$, and where $W_D$ is the environment of the isometry.} 
We call this encoding-decoding scheme a side information code of 
block length $n$ and error $\epsilon$ if the average fidelity 
(\ref{F_QCSW_assisted}) is at least $1-\epsilon$.

\subsection{Converse bound}
To state our lower bound on the necessary compression rate, we introduce the
following quantity, which emerges naturally from the converse proof.

\begin{definition}
\label{I_delta}
For the state $\omega^{XBR} = \sum_x p(x) \ketbra{x}{x}^X \otimes \ketbra{\psi_x}{\psi_x}^{BR}$
and $\delta \geq 0$, define
\[
  I_\delta(\omega) := \sup_{\cT} I(X:W)_\sigma 
                  \text{ s.t. } \cT:B\rightarrow W \text{ cptp with } I(R:W|X)_\sigma \leq \delta,
\]
where the mutual informations are understood with respect to the
state $\sigma^{XWR} = (\id_{XR}\otimes \cT)\omega$ \aw{and $W$ ranges over arbitrary
finite dimensional quantum systems}.

The function $I_\delta=I_\delta(\omega)$ is non-decreasing and concave 
in $\delta$. Hence, it is also continuous for $\delta > 0$.
Furthermore, let
\(
  \widetilde{I}_0 := \lim_{\delta \searrow 0} I_\delta = \inf_{\delta>0} I_\delta.
\)
\end{definition}

Note that the system $W$ is not restricted in any way, which is
the reason why in this definition we have a supremum and an infimum, rather 
than a maximum and a minimum. 
(It is a simple consequence of compactness of the domain of optimisation, 
together with the continuity of
the mutual information, that if we were to impose a bound on the dimension of $W$
in the above definition, the supremum in $I_\delta$ would be attained, and
for the infimum in $\widetilde{I}_0$, it would hold that $\widetilde{I}_0 = I_0$.)

\aw{\begin{proof-of}[of the properties of $I_\delta$]
The non-decrease with $\delta$ is evident from the definition, so we only
have to prove concavity. For this consider $\delta_1,\delta_2\geq 0$, $0<p<1$,
and let $\delta = p\delta_1+(1-p)\delta_2$. 
Let furthermore channels $\cT_i:B\rightarrow W_i$ be given ($i=1,2$) such that for the
states $\sigma_i^{XW_iR} = (\id_{XR}\otimes \cT_i)\omega$, $I(R:W_i|X)_{\sigma_i} \leq \delta_i$.
\par
Now define $W := W_1 \oplus W_2$, so that $W_1$ and $W_2$ can be considered
mutually orthogonal subspaces of $W$, and define the new channel
$cT := p\cT_1 + (1-p)\cT_2:B\rightarrow W$. By the chain rule for the mutual
information, one can check that w.r.t.~$\sigma^{XWR} = (\id_{XR}\otimes \cT)\omega$,
\[
  I(R:W|X)_\sigma = p I(R:W_1|X)_{\sigma_1} + (1-p) I(R:W_2|X)_{\sigma_2} \leq p\delta_1+(1-p)\delta_2 = \delta,
\]
and likewise
\[
  I(X:W)_\sigma = p I(X:W_1)_{\sigma_1} + (1-p) I(X:W_2)_{\sigma_2}.
\] 
Hence, $I_\delta \geq p I(X:W_1)_{\sigma_1} + (1-p) I(X:W_2)_{\sigma_2}$; by maximizing over
the channels, the concavity follows.
\end{proof-of}}

\begin{lemma}
  \label{lemma:I-delta}
  The function $I_{\delta}(\omega)$ introduced in Definition \ref{I_delta},
  has the following additivity property. For any two states
  $\omega_1^{X_1B_1R_1}$ and $\omega_2^{X_2B_2R_2}$ and for $\delta,\delta_1,\delta_2 \geq 0$ 
  \[
    I_\delta(\omega_1\otimes\omega_2)
        = \max_{\delta_1+\delta_2= \delta} I_{\delta_1}(\omega_1) + I_{\delta_2}(\omega_2).
  \]
  Consequently, $I_{n\delta}(\omega^{\otimes n}) = n I_\delta(\omega)$, and
  furthermore $I_0$ and $\widetilde{I}_0$ are additive:
  \[
    I_0(\omega_1\otimes\omega_2) = I_0(\omega_1) + I_0(\omega_2),
    \quad
    \widetilde{I}_0(\omega_1\otimes\omega_2) = \widetilde{I}_0(\omega_1) + \widetilde{I}_0(\omega_2).
  \]
\end{lemma}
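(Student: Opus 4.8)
The plan is to prove the single additivity identity $I_\delta(\omega_1\otimes\omega_2)=\max_{\delta_1+\delta_2=\delta} I_{\delta_1}(\omega_1)+I_{\delta_2}(\omega_2)$; the remaining claims follow by iteration ($I_{n\delta}(\omega^{\otimes n})=nI_\delta(\omega)$ by induction on $n$), and the additivity of $I_0$ and $\widetilde I_0$ by specializing to $\delta=0$ respectively by taking $\delta\searrow 0$ and using that each $I_{\delta_i}$ is non-decreasing and continuous for $\delta_i>0$ (so that the maximum of the sum converges to the infimum of the sum, which is $\widetilde I_0(\omega_1)+\widetilde I_0(\omega_2)$).

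For the ``$\geq$'' direction, given channels $\cT_i:B_i\to W_i$ with $I(R_i:W_i|X_i)_{\sigma_i}\leq\delta_i$ and $\delta_1+\delta_2=\delta$, I would take the product channel $\cT_1\otimes\cT_2:B_1B_2\to W_1W_2$ and evaluate the two mutual informations on $\sigma=\sigma_1\otimes\sigma_2$. Because the state factorizes across the two blocks, additivity of von Neumann entropy gives directly $I(R_1R_2:W_1W_2|X_1X_2)_\sigma=I(R_1:W_1|X_1)_{\sigma_1}+I(R_2:W_2|X_2)_{\sigma_2}\leq\delta$, so $\cT_1\otimes\cT_2$ is feasible for $I_\delta(\omega_1\otimes\omega_2)$, and likewise $I(X_1X_2:W_1W_2)_\sigma=I(X_1:W_1)_{\sigma_1}+I(X_2:W_2)_{\sigma_2}$. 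Taking the supremum over the two channels and over the split $\delta_1+\delta_2=\delta$ yields ``$\geq$''.

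The ``$\leq$'' direction is the main obstacle, since an optimal (or near-optimal) channel $\cT:B_1B_2\to W$ for the product source need not respect the tensor structure at all — $W$ is a single unrestricted system. The idea is to use $X=X_1X_2$ as a classical ``splitting'' register together with the chain rule. Fix any feasible $\cT$ with output state $\sigma^{X_1X_2 W R_1R_2}$ satisfying $I(R_1R_2:W|X_1X_2)_\sigma\leq\delta$. Apply the chain rule to the target quantity, $I(X_1X_2:W)_\sigma = I(X_1:W)_\sigma + I(X_2:W|X_1)_\sigma$, and similarly peel off the constraint as $\delta \geq I(R_1R_2:W|X_1X_2)_\sigma = I(R_1:W|X_1X_2)_\sigma + I(R_2:W|R_1X_1X_2)_\sigma$. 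One then wants to argue that $I(X_1:W)_\sigma$ is bounded by $I_{\delta_1}(\omega_1)$ for $\delta_1 := I(R_1:W|X_1)_\sigma$ (note $I(R_1:W|X_1)_\sigma \le I(R_1:WX_2R_2|X_1)_\sigma$, and one should check $I(R_1:WX_2R_2|X_1)_\sigma$ is controlled; here $R_2B_2X_2$ is independent of $R_1B_1X_1$, and $W$ is obtained from $B_1B_2$, so conditioning arguments and the fact that $X_2R_2$ carries no information about $R_1$ beyond what $W$ does should give $I(R_1:W|X_1)_\sigma$ as the effective constraint for block 1), and symmetrically for block 2 with the register $X_1$ (and possibly $W$ augmented) playing an auxiliary role — crucially, since we allow $W$ to be arbitrary, the system $WX_1$ is an admissible output system for a channel $B_2\to WX_1$ built by first preparing $\sigma^{X_1}$-correlated data. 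Making this last step rigorous — exhibiting honest channels $\cT_1:B_1\to W_1'$ and $\cT_2:B_2\to W_2'$ out of the single channel $\cT$ and the classical registers, with the right constraint budgets $\delta_1+\delta_2\leq\delta$ — is where the real work lies; the unboundedness of $W$ is what makes it possible (we can absorb classical side registers into the output), but one must be careful that the independence of the two sources is genuinely used so that conditioning on $X_1$ does not secretly leak $R_2$-information into the block-1 estimate. I expect the cleanest route is: define $\delta_1=I(R_1:W|X_1)_\sigma$, $\delta_2=I(R_2:WX_1|X_2)_\sigma$ or a similar asymmetric choice dictated by the chain rule, verify $\delta_1+\delta_2\leq\delta$ using strong subadditivity and the product structure, and then identify the channels $B_1\to W$ and $B_2\to WX_1$ directly from $\cT$ composed with the appropriate partial traces, concluding $I(X_1:W)_\sigma\leq I_{\delta_1}(\omega_1)$ and $I(X_2:WX_1|X_1)_\sigma = I(X_2:W X_1)_{\sigma} - \ldots \leq I_{\delta_2}(\omega_2)$ after checking the $X_1$-conditioning collapses correctly because $X_1$ is independent of $(X_2,B_2,R_2)$.
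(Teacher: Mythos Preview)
Your proposal is correct and follows essentially the same route as the paper: the ``$\geq$'' direction via product channels is identical, and for ``$\leq$'' both you and the paper use the chain rule together with the independence of $(X_1,B_1,R_1)$ from $(X_2,B_2,R_2)$ to split the constraint and the target, then absorb auxiliary registers into the unbounded output system. The paper's concrete choices are $W_1:=WX_2$ and $W_2:=WX_1R_1$ (rather than your $W_1=W$, $W_2=WX_1$), yielding directly $I(W:X_1X_2)\leq I(W_1:X_1)+I(W_2:X_2)$ together with $I(W_1:R_1|X_1)+I(W_2:R_2|X_2)\leq\delta$ from the chain-rule decomposition of the constraint---exactly the mechanism you outline.
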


\begin{proof}
First, we prove that 
$I_\delta(\omega_1\otimes\omega_2) \leq \max_{\delta_1+\delta_2= \delta} I_{\delta_1}(\omega_1) + I_{\delta_2}(\omega_2)$; 
the other direction \aw{of the inequality is trivial from the definition}. 
Let $\cT:B_1 B_2 \to W$ be a CPTP map such that 
\begin{align}
  \label{eq1}
  \delta \geq I(W:R_1R_2|X_1X_2) &=  I(W:R_1|X_1X_2)+I(W:R_2|X_1R_1X_2)  \\ \nonumber
                                 &=  I(WX_2:R_1|X_1)+I(WX_1R_1:R_2|X_2), 
\end{align}
where the second line is due to the independence of $\omega_1$ and $\omega_2$. 
We now define the new systems $W_1:=WX_2$ and $W_2:=WX_1R_1$. \aw{Then we have,}
\begin{align}
  \label{eq2}
  I(W:X_1 X_2) &=    I(W:X_2)+I(W:X_1|X_2)  \\ \nonumber
               &=    I(W:X_2)+I(WX_2:X_1)\\  \nonumber
               &\leq I(\underbrace{WX_1R_1}_{W_2}:X_2)+I(\underbrace{WX_2}_{W_1}:X_1),
\end{align}
where the second equality is due to the independence of $X_1$ and $X_2$. 
The inequality follows \aw{from data processing}.
From \aw{Eq.~(\ref{eq1})} we know that $I(W_1:R_1|X_1)\leq \delta_1$ and $I(W_2:R_2|X_2)\leq \delta_2$ 
for some $\delta_1+\delta_2= \delta$. Thereby, from \aw{Eq.~(\ref{eq2})} we obtain
\begin{align*}
  I_\delta(\omega_1 \otimes \omega_2) &\leq I_{\delta_1}(\omega_1 )+I_{\delta_2}(\omega_2 )\\
                                      &\leq \max_{\delta_1+\delta_2=\delta} I_{\delta_1}(\omega_1 )+I_{\delta_2}(\omega_2 ),
\end{align*}

\aw{Now, the multi-copy additivity follows easily:}
According to the first statement of the \aw{lemma}, we have
\[
  I_{n\delta}(\omega^{\otimes n}) = \max_{\delta_1+\ldots+\delta_n=n\delta} I_{\delta_1}(\omega)+\ldots+I_{\delta_n}(\omega). 
\] 
\aw{Here, the right hand side is clearly $\geq n I_\delta(\omega)$ since we can choose all 
$\delta_i = \delta$. By the concavity of $I_{\delta}(\omega)$ in $\delta$, on the other hand, 
we have for any $\delta_1+\ldots+\delta_n=n\delta$ that
\[
 \frac{1}{n}(I_{\delta_1}(\omega)+\ldots+I_{\delta_n}(\omega)) \leq I_{\delta}(\omega),  
\]
so the maximum is attained at $\delta_i=\delta$ for all $i=1,\ldots,n$}.

The first statement of the \aw{lemma also} implies that $I_0$ and $\widetilde{I}_0$ are additive. 
\end{proof}

\aw{We stop here briefly to remark on the curious resemblance of our function
$I_\delta$ with the so-called \emph{information bottleneck function} introduced 
by Tishby \emph{et al.}~\cite{info-bottleneck}, whose generalization to
quantum information theory is recently being discussed \cite{Salek-QIB,Hirche-QIB}.
Indeed, the concavity and additivity properties of the two functions are proved 
by the same principles, although it is not evident to us, what --if any--, the 
information theoretic link between $I_\delta$ and the information bottleneck is.}

\begin{theorem}
\label{converse_QCSW}
Consider any side information code of block length $n$ and error
$\epsilon$, \aw{in the entanglement-assisted model}. 
Then, the BOb's quantum communication rate is lower bounded
\[
  R_B \geq \frac12 \left( S(B)+S(B|X) - I_{\delta(n,\epsilon)} - \delta(n,\epsilon) \right),
\]
where $\delta(n,\epsilon) = 4\sqrt{6\epsilon}\log(|X| |B|) +\frac2n h(\sqrt{6\epsilon})$.
Any asymptotically achievable rate $R_B$ is consequently lower bounded 
\[
  R_B \geq\frac{1}{2}\left( S(B)+S(B|X)-\widetilde{I}_0 \right).
\]
\end{theorem}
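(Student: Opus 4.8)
The plan is to run a standard converse argument using the Stinespring-dilated isometries from the model, combined with the decoupling condition of Lemma \ref{decoupling condition}, and then recognize the quantity $I_\delta$ appearing in the definition as precisely what the entropic bookkeeping produces. First, I would fix a side information code of block length $n$ and error $\epsilon$, and write down the isometric dilations: Bob's encoder $U_B: B^nB_0 \to C W B_0'$ and Debbie's decoder $V: X^n C D_0 \to \hat X^n \hat B^n W_D D_0'$. The key register to track is the compressed system $C$. The rate is $R_B = \frac{1}{n}\log|C|$, so $n R_B \geq S(C)_\xi \geq \frac12 I(C : \text{something})$; more precisely I expect to use $2nR_B \geq 2S(C) \geq I(C:R^n|X^n) + I(C:\hat X^n\hat B^n D_0'|X^n)$ or a similar splitting, because $S(C)$ upper bounds half the sum of a mutual information with the reference and a mutual information with the reconstructed systems. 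The point of the factor $\frac12$ in the theorem is exactly this: $C$ must simultaneously ``carry'' enough correlation with $R^n$ (so that Debbie can reconstruct $\psi_{x^n}^{B^nR^n}$) and be large enough in entropy, and a single register doing double duty forces the average.

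Next I would lower bound each of the two mutual informations. For the term involving $\hat X^n\hat B^n$: since the code has fidelity $1-\epsilon$, the output state is close to $\omega^{X^n B^n R^n}\otimes\Phi_L$, so by continuity of entropy (Fannes–Audenaert, as already quantified in $\delta(n,\epsilon)$) $I(C:\hat X^n\hat B^n D_0'|X^n)_\xi$ is within $n\delta(n,\epsilon)$ of $I(B^n: \cdot |X^n)$ evaluated on the ideal source, which gives something like $S(B^n|X^n) = nS(B|X)$ up to the error term (using that conditioned on $X^n=x^n$ the state on $B^nR^n$ is pure, so entropies match up). For the term involving $R^n$: here the decoupling condition is the crucial input — it tells us the environments and output-entanglement registers are nearly decoupled from $R^n$ given $X'^n$, which, fed through the isometry $U_B$ and data processing, should let me convert $I(C:R^n|X^n)$ into (roughly) $S(R^n|X^n) = nS(B|X)$ minus a correction controlled by $\delta(n,\epsilon)$. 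The remaining $S(B)$ vs.\ $S(B|X)$ asymmetry in the statement comes from the fact that these two contributions are not symmetric: one reconstruction task sees the classical side information and the other (the purity/reference side) effectively does not, so the two halves contribute $S(B)$ and $S(B|X)$ respectively rather than $2S(B|X)$.

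The real content is tying this to $I_\delta(\omega)$. I expect that after the reductions above, the slack between $\frac12(S(B)+S(B|X))$ and $R_B$ is exactly governed by a mutual information of the form $I(X^n : E)$ for a channel output $E$ of $B^n$ satisfying a constraint $I(R^n:E|X^n)\leq n\delta(n,\epsilon)$ — which is single-letterized by Lemma \ref{lemma:I-delta} (additivity of $I_\delta$) to $n I_{\delta(n,\epsilon)}(\omega)$. Concretely: the environment $W$ of Bob's encoder is a CPTP image of $B^n$; the decoupling condition bounds $I(R^n:W|X^n)$; and the quantity one wants to subtract off turns out to be $I(X^n:W)$, bounded above by $I_{\delta(n,\epsilon)}(\omega^{\otimes n}) = n I_{\delta(n,\epsilon)}(\omega)$. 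Dividing by $n$ gives the per-copy bound $R_B \geq \frac12(S(B)+S(B|X) - I_{\delta(n,\epsilon)} - \delta(n,\epsilon))$. The asymptotic statement then follows by taking $n\to\infty$, $\epsilon\to 0$, so $\delta(n,\epsilon)\to 0$, and using continuity of $I_\delta$ from the right together with the definition $\widetilde I_0 = \lim_{\delta\searrow 0} I_\delta$.

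The main obstacle I anticipate is the bookkeeping in the second step: correctly identifying which register plays the role of $W$ in the definition of $I_\delta$, and showing that the constraint $I(R:W|X)\leq\delta$ is the one delivered by Lemma \ref{decoupling condition} (possibly after a data-processing step through $U_B$, and after absorbing $B_0,B_0'$ into the environment or arguing they contribute negligibly since initial entanglement is a fixed maximally entangled state independent of the source). Getting the exact form of the chain-rule decomposition of $2S(C)$ so that precisely $S(B)$ appears in one half and $S(B|X)$ in the other — rather than, say, $2S(B|X)$ or $2S(B)$ — will require care: it hinges on the reconstructed $\hat B^n$ purifying $R^n$ in the ideal state, so that $S(R^n|X^n)=S(B^n|X^n)$, while the ``total size'' constraint on $C$ must also account for the unconditioned entropy $S(B)$ coming from $\hat B^n$ alone. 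Once the registers and constraints are pinned down, the rest is continuity estimates and an application of the additivity lemma, both of which are routine.
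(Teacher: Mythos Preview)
Your high-level strategy is correct: dilate to isometries, invoke Lemma~\ref{decoupling condition}, recognize the slack as $I(X^n:\widetilde W)$ for a CPTP image $\widetilde W$ of $B^n$ whose conditional mutual information with $R^n$ is bounded by decoupling, and then single-letterize via Lemma~\ref{lemma:I-delta}. You also correctly anticipate that $B_0'$ must be absorbed into the environment register; the paper does exactly this by setting $\widetilde W:=WB_0'$ and defining the map $\mathcal T(\rho)=\Tr_{CD_0}\bigl[(U\otimes\1)(\rho\otimes\Phi_K)(U\otimes\1)^\dagger\bigr]$ from $B^n$ to $\widetilde W$. The asymptotic passage to $\widetilde I_0$ is exactly as you describe.

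The gap is the entropic decomposition in the middle. Your proposed inequality $2S(C)\geq I(C:R^n|X^n)+I(C:\hat X^n\hat B^n D_0'|X^n)$ is not obviously valid, and more importantly both of your terms are conditioned on $X^n$, which would naturally yield $2S(B|X)$ rather than $S(B)+S(B|X)$; your plan to recover the unconditioned $S(B)$ from this is acknowledged as tricky but left unresolved. The continuity argument you sketch for the $\hat X^n\hat B^n$-term is also conceptually off: you cannot compare $I(C:\hat B^n\cdots)$ to an ideal-state quantity by Fannes, since the register $C$ exists only in the actual protocol and has no counterpart in the target state.

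The paper's route avoids all of this by running two \emph{separate, asymmetric} chains from $\log|C|$, rather than a mutual-information splitting of $2S(C)$. The first uses only Bob's \emph{encoding} isometry and subadditivity,
\[
  nR_B \;\geq\; S(C) \;\geq\; S(CWB_0')-S(WB_0') \;=\; S(B^n)+S(B_0)-S(WB_0'),
\]
producing the unconditioned $S(B)$. The second uses Debbie's \emph{decoding} isometry, purity of the global state for fixed $x^n$, and Lemma~\ref{decoupling condition}:
\[
  nR_B+S(D_0)\;\geq\; S(CD_0)\;\geq\; S(X^nCD_0|X'^n)\;=\;S(WB_0'R^n|X'^n)\;\geq\; S(B^n|X^n)+S(WB_0'|X'^n)-n\delta(n,\epsilon),
\]
producing the conditioned $S(B|X)$. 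Adding the two and cancelling $S(B_0)=S(D_0)$ leaves precisely $S(B^n)+S(B^n|X^n)-I(X'^n:WB_0')-n\delta(n,\epsilon)$. No direct application of Fannes--Audenaert is needed in this proof; continuity enters only inside the proof of Lemma~\ref{decoupling condition}. The asymmetry you were searching for thus arises from routing one chain through the encoder (which is oblivious to $X$) and the other through the decoder (which sees $X^n$).
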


\begin{proof}
\aw{As already discussed in the introduction to this section,}
the encoder of Bob is without loss of generality 
an isometry $\aw{U:{B^n B_0} \longrightarrow {C W B_0'}}$.
The existence of a high-fidelity 
decoder using $X^n$ as side information 
is equivalent to decoupling of $W B_0'$ from 
$R^n$ conditional on \aw{$X^n$; indeed, by Lemma \ref{decoupling condition},} 
$I(R^n:WB_0'|X'^n) \leq n \delta(n,\epsilon)$. 
%
The first part of the converse reasoning is as follows:
\begin{align*}
  nR_B  =    \log |C| 
       &\geq S(C)  \\
       &\geq S(CW B_0')-S(W B_0')  \\
       &=    S(B^n)+S(B_0)-S(W B_0'), 
\end{align*} 
\aw{where the second inequality is a version of subadditivity, and
the equality in the last line holds because the encoding isometry \aw{$U$} 
does not change the entropy; furthermore, $B^n$ and $B_0$ are initially
independent.}
Moreover, the decoder can be dilated to an isometry $\aw{V}: X^n C D_0 \longrightarrow \hat{X}^n \hat{B}^n D_0' W_D$, 
where $W_D$ and $D_0'$ are the environment of \aw{Debbie}'s decoding operation and 
\aw{the output of \aw{Debbie}'s entanglement, respectively.}
Using the decoupling condition of Lemma \ref{decoupling condition} once more, we have
\begin{align*}
 nR_B+S(D_0)&= \log |C| + S(D_0) \\
     &\geq S(C)+S(D_0)  \\
     &\geq S(C D_0)  \\
     &\geq S(X^nCD_0|{X'}^n)  \\
     &= S(\hat{X}^n\hat{B}^n D_0' W_D|{X'}^n) \\
     &= S(W B_0' R^n|{X'}^n) \\
     &\geq S(R^n|{X'}^n)+S(WB_0'|{X'}^n) - n \delta(n,\epsilon) \\
     &= S(B^n|X^n)+S(WB_0'|{X'}^n)       - n \delta(n,\epsilon), 
\end{align*}
\aw{where the third and fourth line are by subadditivity of the entropy;
the fifth line follows because the decoding isometry $V$ does not change the entropy. 
The sixth line holds because for any given $x^n$  
the overall state of the systems $\hat{X}^n\hat{B}^n B_0'D_0' W W_DR^n$ is pure. 
The penultimate line is due to the decoupling condition (Lemma \ref{decoupling condition}), 
and the last line follows because for a given $x^n$ the overall state 
of the systems $B^nR^n$ is pure.}
Adding these two relations and dividing by $2n$, we obtain 
\begin{align*}
  R_B \geq  \frac{1}{2} (S(B)+S(B|X)) - \frac{1}{2n} I(X'^n:WB_0') - \delta(n,\epsilon).
\end{align*}
In the above \aw{inequality, the mutual information on the right hand side} 
is bounded as
\begin{align*}
 I(X'^n:WB_0') \leq I_{n\delta(n,\epsilon)}({\omega^{\otimes n}}) = nI_{\delta(n,\epsilon)}({\omega}), 
\end{align*}
\aw{To see this, define the CPTP map $\mathcal{T}:B^n \longrightarrow \widetilde{W}:= WB_0'$ as
$\mathcal{T}(\rho):= \Tr_{CD_0} (U\otimes\1)(\rho\otimes\Phi_K^{B_0D_0})(U\otimes\1)^\dagger$.
Then we have $I(R^n:\widetilde{W}|{X'}^n) \leq n\delta(n,\epsilon)$, and hence
the above inequality follows directly from Definition \ref{I_delta}.}

The second statement of the theorem follows because
$\delta(n,\epsilon)$ tends to zero as \aw{$n \rightarrow \infty$ and $\epsilon \rightarrow 0$.}
\end{proof}

\begin{remark}
\label{rem:example}
Notice that the term $\frac{1}{n} I({X'}^n:WB_0')$ is not necessarily small. 
For example, suppose \aw{that the source is of the form
$\ket{\psi_x}^{BR} = \ket{\psi_x}^{B'R} \otimes \ket{\psi_x}^{B''}$ for all $x$}; 
clearly it is possible to perform the coding task by
coding only $B'$ and trashing $B''$ (i.e.~putting it into $W$), because 
by having access to $x$ the decoder can reproduce $\psi_x^{B''}$ locally. In 
this setting, characteristically $\frac{1}{n} I({X'}^n:WB_0')$ does not go 
to zero because ${B''}^n$ ends up in $W$.  
\end{remark}

\subsection{Achievable rates}
In this subsection, we provide achievable rates \aw{both for the unassisted and entanglement-assisted} model.

\begin{theorem}
\label{State_merging_rate}
\aw{In the unassisted model, there exists a sequence of side information codes that
compress Bob's system $B^n$} at the asymptotic qubit rate
\begin{align*}
  R_B = \frac{1}{2}\left( S(B)+S(B|X) \right).
\end{align*}
\end{theorem}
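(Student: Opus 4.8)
The plan is to realize the target rate $R_B = \frac12(S(B)+S(B|X))$ by composing two well-understood protocols: a first stage that splits Bob's system into a ``merge part'' and a ``redistribute part'', and a second stage that uses quantum state merging (with the classical side information $X^n$ at Debbie) to send the merge part at its conditional-entropy rate, together with Schumacher compression to handle the purifying part. The key structural observation is that the total cost $\frac12(S(B)+S(B|X))$ is exactly the average of the unconditional and conditional entropies of $B$, which is precisely the cost one expects when Bob must ``blindly'' symmetrize between the case where the decoder knows $X$ and the case where it does not.

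First I would set up the purified picture. On $n$ i.i.d.\ copies, the joint state is $\omega^{X^n B^n R^n} = (\omega^{XBR})^{\otimes n}$, and since Alice's register $X^n$ is transmitted losslessly at rate $\log|\mathcal{X}|$, Debbie may condition on $x^n$. I would then invoke a decoupling-based protocol (Schumacher compression on $B^n$ combined with a random unitary / Uhlmann argument, exactly the kind of tool the paper attributes to ``the decoupling principle and quantum state merging''): Bob applies a typical projection onto the $\delta$-typical subspace of $B^n$, which has dimension $\approx 2^{nS(B)}$; he then applies a Haar-random unitary and splits the result into a system $C$ of $\frac{n}{2}(S(B)+S(B|X))$ qubits that he sends to Debbie, and an environment $W$ that he discards. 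The point of the random unitary is that it decouples $W$ from $R^n$ conditioned on $X^n$: by the decoupling theorem, this succeeds with high fidelity provided $\frac1n\log|W| \lesssim \frac12\bigl(S(B) - (-S(B|X))\bigr) = \frac12(S(B)+S(B|X))$, i.e.\ provided $W$ is ``small enough'' relative to the conditional entropy $S(R^n|X^n)=-S(B^n|X^n)$ being negative enough. Since $\log|C| + \log|W| = n S(B)$ (up to typicality corrections), choosing $\log|C| = \frac{n}{2}(S(B)+S(B|X))$ forces $\log|W| = \frac{n}{2}(S(B)-S(B|X)) = \frac{n}{2}I(X:B)$, which is exactly the decoupling threshold. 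Hence Debbie, holding $C$ and $X^n$, can by Uhlmann's theorem apply a decoding isometry recovering a high-fidelity copy of $\psi_{x^n}^{B^n R^n}$ on average.

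The main obstacle will be making the decoupling bookkeeping match the target rate exactly, rather than merely up to a constant. Concretely: one must verify that the dimension of the discarded environment $W$ is dictated precisely by the conditional smooth entropy $S(R^n|X^n)_\omega = -S(B^n|X^n)_\omega = -nS(B|X)$ — so that the decoupling condition ``$\log|W| \leq \frac12(\log(\text{typical }B^n) - S(R^n|X^n))$'' collapses to $\log|W| \leq \frac{n}{2}(S(B)+S(B|X))$, matching what is left over after sending $\log|C| = \frac{n}{2}(S(B)+S(B|X))$ qubits — and that the error vanishes in the i.i.d.\ limit. I would handle this by working with conditionally typical subspaces for each $x^n$, applying one-shot decoupling to each conditional branch, and then averaging the fidelity over $x^n$ weighted by $p(x^n)$, exactly as in the definition of $\overline{F}$ in Eq.~(\ref{F_QCSW_unassisted1}); the i.i.d.\ AEP converts the smooth conditional entropies into $nS(B|X)$ with corrections $o(n)$. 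An alternative, slightly cleaner route is to derive this as a corollary of the general achievability region in Sec.~\ref{sec:Achievability Bounds in General} (state merging plus state redistribution), specializing to the ``unbounded $R_X$'' edge; but since that section follows, I would give the self-contained decoupling argument here and note that it also falls out of the general construction.

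Finally, I would remark that this matches the converse of Theorem~\ref{converse_QCSW} whenever $\widetilde{I}_0(\omega) = 0$ — in particular, the generic case treated in Sec.~\ref{sec:generic side info} — so that for such sources $R_B = \frac12(S(B)+S(B|X))$ is optimal, while for general sources it is only an upper bound, the gap being governed by $\widetilde{I}_0$.
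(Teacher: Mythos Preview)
Your overall strategy---coherent state merging of $B$ into Debbie with $X$ as side information---is correct and is precisely the mechanism the paper uses. The paper's proof, however, is a two-line reduction rather than a re-derivation of decoupling: it introduces the coherified source $\ket{\Omega}^{XX'BR} = \sum_x \sqrt{p(x)}\ket{x}^X\ket{x}^{X'}\ket{\psi_x}^{BR}$, applies the FQSW protocol \cite{Abeyesinghe2009} as a black box with Debbie holding $X$ and reference $X'R$, reads off the qubit rate $\frac12 I(B:X'R)_\Omega = \frac12(S(B)+S(B|X))$, and finally observes that the entanglement fidelity of FQSW is dominated by the average fidelity $\overline{F}$ via monotonicity under dephasing of $X'$. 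No explicit typical projector, random unitary, or branch-by-branch Uhlmann argument appears.

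Your direct decoupling argument would also go through, but the bookkeeping as written contains sign errors that you should repair. Since each $\ket{\psi_x}^{BR}$ is pure, one has $S(R|X) = +S(B|X)$, not $-S(B|X)$; hence your displayed condition ``$\log|W| \leq \frac12(\log|\text{typ }B^n| - S(R^n|X^n))$'' actually gives $\frac1n\log|W| \leq \frac12(S(B)-S(B|X)) = \frac12 I(X:B)$, not $\frac12(S(B)+S(B|X))$. The quantity $\frac12(S(B)+S(B|X))$ is the threshold for the \emph{transmitted} register $C$, not for the discarded $W$; you have swapped the two roles in the inequality ``$\frac1n\log|W| \lesssim \frac12(S(B)-(-S(B|X)))$''. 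Your final dimension split $\log|C| = \frac{n}{2}(S(B)+S(B|X))$ and $\log|W| = \frac{n}{2}I(X:B)$ is correct, but only because the two errors happen to compensate. If you keep the decoupling route, the clean statement is: FQSW decoupling of $W$ from the joint reference $X'R$ succeeds once $\frac1n\log|C| > \frac12 I(B:X'R)_\Omega$, which is exactly the claimed rate.
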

\begin{proof}
We can use the fully quantum Slepian-Wolf protocol (FQSW), also called coherent state merging
\cite{Abeyesinghe2009}, as a subprotocol since it considers the entanglement 
fidelity as the decodability criterion, which is more stringent than
the average fidelity defined in (\ref{F_QCSW_unassisted1}). 
Namely, let 
\[
  \ket{\Omega}^{X X^{\prime} B R}
   =\sum_{x \in \mathcal{X}} \sqrt{p(x)} \ket {x}^{X} \ket {x}^{X'} \ket{\psi_{x}}^{B R}
\]
be the source in \aw{the} FQSW problem, \aw{with} the entanglement fidelity $F_e$ is the decodability criterion:
\begin{align*} 
  \label{F_QCSW}
   F_e &= F \left(\Omega^{X^n {X'}^n B^n R^n } ,\left(\mathcal{D} \circ (\id_{X^n} \otimes \mathcal{E}_B) \otimes \id_{{X'}^n  R^n}\right) \Omega^{X^n {X'}^n  B^n R^n } \right)  \nonumber  \\
       &\leq F \left(\omega^{X^n {X'}^n B^n R^n } ,\left(\mathcal{D} \circ (\id_{X^n} \otimes \mathcal{E}_B) \otimes \id_{{X'}^n R^n}\right) \omega^{X^n {X'}^n B^n R^n } \right) 
        = \overline{F},
\end{align*}
where the inequality is due to the monotonicity of fidelity 
under \aw{CPTP maps, namely the projective measurement on system $X'$ in the computational
basis $\{ \ketbra{x}{x}\}$)}. 
Therefore, if an encoding-decoding scheme attains an entanglement fidelity for 
the FQSW problem going to $1$, then it will have average fidelity for 
the QCSW problem going to $1$ as well. Hence, the FQSW rate 
\begin{align*}
  R_B= \frac{1}{2}I(B:X^{\prime} R)_{\Omega}=\frac{1}{2}(S(B)_{\omega}+S(B|X)_{\omega})
\end{align*}
\aw{is achievable.}
\end{proof}

\begin{remark}
Notice that for the source considered at the end of the previous subsection
\aw{in Remark \ref{rem:example}}, where 
$\ket{\psi_x}^{BR} = \ket{\psi_x}^{B'R} \otimes \ket {\psi_x}^{B''}$ 
for all $x$, we can achieve a rate strictly smaller than the rate 
stated in the above theorem. The reason is that $R$ is only 
entangled with $B'$, so clearly it is possible to perform the coding task by
coding only $B'$ and trashing $B''$ because by having access to $x$ 
the decoder can reproduce the state $\psi_x^{B''}$ locally. 
Thereby, the rate $\frac{1}{2} (S(B')+S(B'|X))$ is achievable by 
applying coherent state merging as above.
\end{remark}

\medskip
The previous observation shows that in general, the rate $\frac12(S(B)+S(B|X))$
from Theorem \ref{State_merging_rate} is not optimal. Looking for a systematic
way of obtaining better rates, we have the following result in the entanglement-assisted
model.

\begin{theorem}
\label{QSR_achievability}
\aw{In the entanglement-assisted model, there exists a sequence of side information codes} 
with the following asymptotic entanglement and qubit rates:
\begin{equation}
  E=\frac{1}{2}\left(I(C:W)_{\sigma}-I(C:X)_{\sigma}\right) 
    \quad \text{and}\quad
  R_B= \frac{1}{2}\left(S(B)_{\omega}+S(B|X)_{\omega}-I(X:W)_{\sigma} \right),   \nonumber
\end{equation}
where $C$ and $W$ are, respectively, the system and environment of an
isometry $V:{B\rightarrow CW}$ on $\omega^{XBR}$ 
producing state $\sigma^{XCWR} = (\id_{XR}\otimes V)\omega$, such that $I(W:R|X)_{\sigma}=0$.
\end{theorem}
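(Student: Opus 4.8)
The plan is to reduce the task, after a preliminary local isometry, to a single application of quantum state redistribution \cite{Devetak2008_2}, exploiting the fact that the classical register $X^n$ is available to Debbie as side information.

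First I would have Bob apply the isometry $V^{\otimes n}:B^n\to C^nW^n$ to his share, passing to the global state $\sigma^{\otimes n}$ on $X^nX'^nC^nW^nR^n$ (using the extended source $\ket{\Omega}$ of the proof of Theorem~\ref{State_merging_rate}, with the purifying copy $X'^n$ as a classical reference). The hypothesis $I(W:R|X)_\sigma=0$ says that for each classical value $x^n$ the reduced state $\sigma_{x^n}^{W^nR^n}$ is a product; equivalently, the pure state $\ket{\sigma_{x^n}}^{C^nW^nR^n}$ factorizes, after a suitable $x^n$-dependent splitting of $C^n$, into a part purely entangled with $W^n$ and a part purely entangled with $R^n$. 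Hence $W^n$ holds no correlation with the reference that Debbie cannot herself recreate from $X^n$, and it may be retained by Bob, becoming the environment $W_B$ of his encoding.

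Second, Bob transfers $C^n$ to Debbie by quantum state redistribution built on the decoupling principle: $C^n$ is the system to be sent, $X^n$ is the receiver's prior (side-information) register, $X'^nR^n$ is the reference, and the pre-shared and residual entanglement are carried by the $B_0,D_0,B_0',D_0'$ registers of the model. After the redistribution Debbie holds $C^n$ together with $X^n$; using her knowledge of $x^n$ she applies the $x^n$-dependent isometry that isolates the $R^n$-entangled part of $C^n$ and maps it onto $\hat B^n$ so that $(\hat B^n,R^n)$ reproduces $\psi_{x^n}^{B^nR^n}$, and she copies out $\hat X^n=x^n$. Single-letterizing the redistribution rate formulas with respect to $\sigma$ — using that $X$ is classical, that $V$ is an isometry, and the identity $I(W:R|X)_\sigma=0$ — should collapse the quantum-communication rate to $\tfrac12\bigl(S(B)_\omega+S(B|X)_\omega-I(X:W)_\sigma\bigr)$ and the net entanglement rate to $\tfrac12\bigl(I(C:W)_\sigma-I(C:X)_\sigma\bigr)$. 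Finally, since the subprotocol has vanishing (entanglement-)fidelity error, a gentle-measurement/continuity argument as in Theorem~\ref{State_merging_rate} upgrades this to the average-fidelity criterion \eqref{F_QCSW_assisted}.

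The delicate part is the resource bookkeeping in the second step: one must fix precisely which systems play the roles of ``sender's retained register'', ``receiver's prior register'' and ``reference'' in the redistribution primitive, and translate its coherent-communication/entanglement trade-off into the qubit count of our model (where classical communication from Bob is not free), so that both claimed rates come out on the nose rather than merely up to such a trade-off; the $x^n$-dependence of Debbie's final isometry, harmless because she holds $X^n$, must likewise be tracked through the error estimates. Everything else I expect to be routine.
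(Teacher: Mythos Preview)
Your proposal is correct and follows essentially the same route as the paper: apply $V^{\otimes n}$ locally, run quantum state redistribution on the purified state with $C$ the system to be sent, $W$ as Bob's retained side information, $X$ as Debbie's side information, and $X'R$ as reference, then use $I(W\!:\!R|X)_\sigma=0$ together with Uhlmann's theorem to construct the $x^n$-dependent decoding isometry, and finish by monotonicity of the fidelity. The paper's computation of the rates is direct---$R_B=\tfrac12 I(C\!:\!X'R|X)_\Sigma$ and $E=\tfrac12\bigl(I(C\!:\!W)_\Sigma-I(C\!:\!X)_\Sigma\bigr)$---so your ``resource bookkeeping'' worry is resolved simply by reading off the QSR costs with these role assignments; no coherent-communication trade-off enters.
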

\begin{proof}
First, Bob applies \aw{the} isometry $V$ to each copy \aw{of the $n$ systems $B_1,\ldots,B_n$}:  
\begin{align*}
   \sigma^{X X^{\prime} CWR} 
       &= (V^{B \to C W} \otimes \1_{X X^{\prime}R}) 
             \omega^{X X^{\prime}  BR}
          (V^{B \to C W} \otimes \1_{X X^{\prime}R})^\dagger   \\
       &= \sum_x p(x) \ketbra{x}^{X}\otimes \ketbra{x}^{X^{\prime}}\otimes \ketbra{\phi_{x}}^{CWR}. 
\end{align*}
Now suppose the following source state, where Bob and \aw{Debbie} respectively hold 
the $CW$ and $X$ systems, and Bob wishes to send system $C$ to \aw{Debbie} while keeping 
$W$ for himself:
\[
  \ket{\Sigma}^{X X^{\prime} CW R}
   =\sum_{x \in \mathcal{X}} \sqrt{p(x)} \ket {x}^{X} \ket {x}^{X'} \ket{\phi_{x}}^{CW R}.
\]
For many copies of the above state, \aw{the} parties can apply \aw{the} quantum state redistribution 
(QSR) protocol \cite{Yard2009,Oppenheim2008} for transmitting $C$, having access to system $W$ as 
side information at the encoder and to $X$ as side information at the decoder. 
According to this protocol, Bob needs a rate of 
$R_B = \frac{1}{2}I(C:X^{\prime} R |X)_{\Sigma}
     = \frac{1}{2}(S(B)_{\omega}+S(B|X)_{\omega}-I(X:W)_{\sigma})$ 
qubits of communication.
\aw{The protocol requires a rate of $\frac{1}{2} I(C:W)_{\Sigma}=\frac{1}{2} I(C:W)_{\sigma}$ 
ebits of entanglement shared between the encoder and decoder, and at the end of the protocol
a rate of $\frac{1}{2} I(C:X)_{\Sigma}=\frac{1}{2} I(C:X)_{\sigma}$ ebits of entanglement is 
distilled between the encoder and the decoder.}
This protocol \aw{attains high} fidelity for \aw{the} state $\Sigma^{X^n {X'}^n C^n W^n R^n }$, 
and consequently for \aw{the} state $\sigma^{X^n {X'}^n  C^n W^n R^n }$ due to \aw{the} monotonicity 
of fidelity under \aw{CPTP maps}:
\begin{align} \label{F_QSR}
  1-\epsilon &\leq F\left({\Sigma}^{X^n {X'}^n  C^n W^n R^n }\!\otimes\! \Phi_L^{B_0'D_0'},
                          \left(\mathcal{D} \circ (\id_{X^n D_0} \otimes \mathcal{E}_{CW B_0}) \otimes \id_{{X'}^n R^n}\right) 
                                                         \Sigma^{X^n {X'}^n C^n W^n R^n } \otimes \Phi_K^{B_0D_0}\right) \nonumber \\
             &\leq F\left({\sigma}^{X^n {X'}^n C^n W^n R^n }\!\otimes\! \Phi_L^{B_0'D_0'},
                          \left(\mathcal{D} \circ (\id_{X^n D_0} \otimes \mathcal{E}_{CW B_0}) \otimes \id_{{X'}^n R^n}\right) 
                                                         \sigma^{X^n {X'}^n C^n W^n R^n } \otimes \Phi_K^{B_0D_0}\right),
\end{align} 
where $\mathcal{E}_{CW B_0}$ and $\mathcal{D}$ are respectively the 
encoding and decoding operations of the QSR protocol. 
The condition $I(W:R|X)_{\sigma}=0$ implies that for every $x$ the systems $W$ and $R$ are decoupled:  
\begin{align*}\label{Decoupling_0} 
\phi_{x}^{WR}=\phi_{x}^W  \otimes \phi_{x}^R.   
\end{align*} 
By Uhlmann's theorem \cite{UHLMANN1976,Jozsa1994_2}, 
there exist isometries $V_{x}:{C\rightarrow VB}$ for all $x \in \mathcal{X}$, such that 
\[ 
  (\1\otimes V_{x}^{C\rightarrow VB}) \ket{\phi_{x}}^{CWR}
             =\ket{\nu_{x}}^{VW}  \otimes \ket{\psi_{x}}^{BR}.   
\]
After applying the decoding operation $\mathcal{D}$ of QSR, 
\aw{Debbie} applies the isometry $V_{x}:{C\rightarrow VB}$ for each $x$, 
which does not change the fidelity (\ref{F_QSR}).
By tracing out the unwanted systems $V^n W^n$, 
due to the monotonicity of \aw{the} fidelity under partial trace, 
the fidelity defined in (\ref{F_QCSW_assisted}) \aw{will go to $1$} in this encoding-decoding scheme. 
\end{proof}



\begin{remark}
In Theorem \ref{QSR_achievability}, the smallest achievable rate, 
when unlimited entanglement is available,
is equal to $\frac{1}{2}(S(B)+S(B|X)-I_0)$.
This rate resembles the converse bound 
$R_B \geq \frac{1}{2}(S(B)+S(B|X)-\widetilde{I}_0)$,
except \aw{that $\widetilde{I}_0 \geq I_0$}. 
In the definition of $\widetilde{I}_0$, it seems unlikely that we can take the 
limit  of $\delta$ going to 0 directly because there is no dimension bound on 
the systems $C$ and $W$, so compactness cannot be used directly to prove that 
$\widetilde{I}_0$ and $I_0$ are equal. 
\end{remark}

\aw{\begin{remark}
Looking again at the entanglement rate in Theorem \ref{QSR_achievability},
$E=\frac{1}{2}\left(I(C:W)_{\sigma}-I(C:X)_{\sigma}\right)$, we reflect that
there may easily be situations where $E\leq 0$, meaning that no entanglement is
consumed, and in fact no initial entanglement is necessary. In this case,
the theorem improves the rate of Theorem \ref{State_merging_rate} by the
amount $\frac12 I(X:W)$. 
This motivates the definition of the following variant of $I_0$,
\[
  I_{0-}(\omega) := \sup I(X:W) \text{ s.t. } I(R:W|X)=0,\ I(C:W)-I(C:X) \leq 0,
\]
where the supremum is over all isometries $V:B\rightarrow CW$. 
\par
As a corollary to these considerations, in the unassisted model
the rate $\frac{1}{2}\left(S(B)+S(B|X)-I_{0-} \right)$ is achievable.
\end{remark}}

\subsection{Optimal compression rate for generic sources}
\label{sec:generic side info}

In this subsection, we find the optimal compression rate for  
\emph{generic} sources, by which we mean any source except for a
submanifold of lower dimension within the set of all sources.
Concretely, we will consider sources where there is at least one $x$ 
for which the reduced state $\psi_x^B= \Tr_R \ketbra{\psi_x}{\psi_x}^{BR}$ 
has full support on $B$. 
In this setting, coherent state merging as a subprotocol gives the optimal 
compression rate, so not only does the protocol not use any initial 
entanglement, but some entanglement is distilled at the end of the protocol.

\begin{theorem}
\label{theorem:generic optimal rate}
For any side information code of a generic source, with or without entanglement-assistance,
the asymptotic compression rate $R_B$ of Bob is lower bounded
\begin{align*}
  R_B \geq \frac{1}{2}\left(S(B)+S(B|X)\right),
\end{align*}
so the protocol of Theorem \ref{State_merging_rate} has optimal rate for a generic source.
Moreover, in that protocol no prior entanglement is needed and 
a rate $\frac{1}{2}I(X:B)$ ebits of entanglement 
is distilled between the encoder and decoder.
\end{theorem}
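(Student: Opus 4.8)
The plan is to establish the lower bound $R_B \geq \frac12(S(B)+S(B|X))$ by showing that for a generic source, the correction term $\widetilde{I}_0(\omega)$ in the converse bound of Theorem~\ref{converse_QCSW} vanishes. By that theorem, any asymptotically achievable side information rate satisfies $R_B \geq \frac12\bigl(S(B)+S(B|X)-\widetilde{I}_0\bigr)$, so it suffices to prove $\widetilde{I}_0(\omega)=0$ whenever there exists at least one letter $x_0$ with $\psi_{x_0}^B$ of full rank on $B$. Recall $\widetilde{I}_0 = \inf_{\delta>0} I_\delta(\omega) = \lim_{\delta\searrow 0} I_\delta(\omega)$, where $I_\delta(\omega) = \sup_{\cT} I(X:W)_\sigma$ over channels $\cT:B\to W$ with $I(R:W|X)_\sigma \leq \delta$.

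First I would analyze the constraint $I(R:W|X)_\sigma \leq \delta$ and see what it forces as $\delta\to 0$. In the limit $\delta=0$, the constraint $I(R:W|X)_\sigma=0$ means that for every $x$, the output state $\sigma_x^{WR} = (\cT\otimes\id_R)\psi_x^{BR}$ factorizes as $\sigma_x^W\otimes\psi_x^R$. The key structural fact I would invoke is that a single channel $\cT$ must achieve this decoupling \emph{simultaneously} for all $x$, including $x_0$. Applying the decoupling condition to the full-rank state $\psi_{x_0}^{BR}$: since $\psi_{x_0}^B$ has full support, its purification $\psi_{x_0}^{BR}$ has $R$ maximally correlated (in the Schmidt sense) with all of $B$, so decoupling $W$ from $R$ through $\cT$ acting on $B$ forces $\cT$ to destroy essentially all information about the input — more precisely, by the structure theorem for channels that decouple a full-rank input from its reference (this is the algebraic characterization underlying quantum state merging / the no-hiding type arguments), $\cT$ restricted to the support of $\psi_{x_0}^B$ (which is all of $B$) must factor through a fixed state, i.e.\ $\cT(\rho) = \Tr(\rho)\,\tau$ for a constant state $\tau$ possibly tensored with a part determined by a classical measurement whose statistics are independent of the input. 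Hence $\sigma^{XW}$ is a product state and $I(X:W)_\sigma=0$, giving $I_0(\omega)=0$.

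The subtlety — and the main obstacle — is that $\widetilde{I}_0$ is defined via a limit $\delta\searrow 0$ rather than at $\delta=0$, and because $W$ is unbounded in dimension one cannot directly invoke compactness to conclude $\widetilde{I}_0 = I_0$ (this is exactly the gap flagged in the remark after Theorem~\ref{QSR_achievability}). So I would instead prove a quantitative, continuity-type statement: for any channel $\cT$ with $I(R:W|X)_\sigma \leq \delta$, one has $I(X:W)_\sigma \leq f(\delta)$ with $f(\delta)\to 0$ as $\delta\to0$, where $f$ depends on the source (in particular on the minimal nonzero eigenvalue $\lambda_{\min}$ of $\psi_{x_0}^B$) but not on the dimension of $W$. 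The mechanism: the condition $I(R:W|X=x_0)_\sigma \leq \delta/p(x_0)$ (roughly) together with full rank of $\psi_{x_0}^B$ means, via an Alicki--Fannes / continuity-of-decoupling argument, that the Stinespring isometry $U_\cT:B\to WE$ of $\cT$ is within $O(\sqrt{\delta}/\lambda_{\min})$ of an isometry of the product form $B \cong B_R \otimes B_E$ with $W$ supported on $B_E$ only; since the same $U_\cT$ acts on all $\psi_x^B$, the output $W$-marginal is within $O(\sqrt{\delta}/\lambda_{\min})$ of a state independent of $x$, and then continuity of mutual information (Alicki--Fannes--Winter, applied to the bounded system $X$) bounds $I(X:W)_\sigma$ by a vanishing function of $\delta$. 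This makes $\widetilde{I}_0 = \lim_{\delta\searrow0} I_\delta = 0$.

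For the finite-$n$ converse bound one simply applies Theorem~\ref{converse_QCSW} with $\delta(n,\epsilon)$ and the bound $I_{\delta(n,\epsilon)}(\omega) \leq f(\delta(n,\epsilon))$, then lets $n\to\infty$, $\epsilon\to0$. For the ``moreover'' part: the converse just shown matches the achievable rate of Theorem~\ref{State_merging_rate}, which is realized by coherent state merging (FQSW) applied to $\ket{\Omega}^{XX'BR}$; the FQSW protocol for a state with Bob holding $B$ and the decoder holding the side information $X$ consumes no prior entanglement and, by the standard accounting of coherent state merging \cite{Abeyesinghe2009}, distills a rate of $\frac12 I(B:X)_\Omega = \frac12 I(X:B)_\omega$ ebits between encoder and decoder. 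I expect this last part to be essentially bookkeeping once the matching rates are in hand; the real work is the dimension-free continuity estimate showing $\widetilde{I}_0=0$ for generic sources.
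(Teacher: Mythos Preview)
Your overall strategy coincides with the paper's: invoke Theorem~\ref{converse_QCSW}, then prove $\widetilde{I}_0=0$ for generic sources via a dimension-free bound $I(X:W)_\sigma \leq f(\delta)$ with $f(\delta)\to 0$, applying Alicki--Fannes on the bounded system $X$ at the end. The ``moreover'' part is also handled the same way.

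The concrete mechanism for the continuity estimate, however, differs. You propose to argue that approximate decoupling of $W$ from $R$ for the full-rank letter $x_0$ forces the Stinespring isometry of $\cT$ to be close to a ``product form'', hence $\cT$ is approximately a replacer channel and all $\sigma_x^W$ are close to a fixed $\tau$. This is morally right but your description is imprecise (there is no nontrivial tensor factorisation $B\cong B_R\otimes B_E$ here; full rank means the only such structure is the trivial one with $\cT$ constant), and making it rigorous would require an approximate information--disturbance or channel-structure theorem with explicit, $|W|$-independent constants. The paper sidesteps this entirely with a purification-transport trick: because $\psi_{x_0}^B$ has full support, every other purification satisfies $\ket{\psi_x}^{BR}=(\1_B\otimes T_x)\ket{\psi_{x_0}}^{BR}$ with $\|T_x\|_\infty\leq\lambda_{\min}^{-1/2}$ (Lemma~\ref{full_support_lemma}). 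Since $T_x$ acts only on $R$, it commutes with $\cT$ and with the Uhlmann isometry $V_0$ witnessing approximate decoupling at $x_0$; conjugating the $x_0$-estimate by $T_x$ and using $\|T_x X T_x^\dagger\|_1\leq\|T_x\|_\infty^2\|X\|_1$ immediately yields $\|\phi_x^W-\phi_0^W\|_1\leq \lambda_{\min}^{-1}\cdot O(\delta^{1/4})$ for every $x$, with no reference to the structure of $\cT$ or the dimension of $W$. This is shorter and more elementary than the route you sketch, though both lead to the same conclusion.
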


\begin{proof}
The converse bound of Theorem \ref{converse_QCSW} states that 
\aw{the} asymptotic quantum communication rate of Bob is lower bounded as
\begin{align}
  R_B \geq \frac{1}{2}\left(S(B)+S(B|X)-\widetilde{I}_0 \right),   \nonumber
\end{align}
where $\widetilde{I}_0$ \aw{comes from} Definition \ref{I_delta}. W
e will show that for generic sources, $\widetilde{I}_0 = I_{0}= 0$.
Moreover, Theorem \ref{State_merging_rate} states that using coherent state 
merging, the asymptotic qubit rate of $\frac{1}{2}(S(B)+S(B|X))$ is achievable, 
that no prior entanglement is required and a rate of 
$\frac{1}{2}I(X:B)$ ebits of entanglement is distilled between the encoder and the decoder.

We show that for any CPTP map $\cT:B\to W$, which acts on a generic $\omega^{XBR}$ and produces  
state $\sigma^{XWR} = (\id_{XR}\otimes \cT)\omega^{XBR}$ such that $I(R:W|X)_{\sigma}\leq \delta$ 
for  $\delta \geq 0$, the quantum mutual information 
$I(X:W)_{\sigma} \leq \delta' \log |X| +2h (\frac12\delta')$ where $\delta'$ is defined in 
\aw{Eq.~(\ref{phix_phi0_distance}) below}. 
Thus, we obtain
\[ 
  \widetilde{I}_0 = \lim_{\delta \searrow 0} I_\delta = 0.
\]
\aw{To show this claim, we proceed as follows.} From $I(R:W|X)_{\sigma}\leq \delta$ we have
\begin{align*}
  I(R:W|X=x)_\sigma \leq \frac{\delta}{p(x)},
\end{align*}
\aw{so} by Pinsker's inequality \cite{Schumacher2002} we obtain
\begin{align*}
  \left\| \phi_x^{W R}- \phi_x^{W} \otimes \phi_x^{R} \right\|_1  \leq \sqrt{\frac{2 \delta \ln 2}{p(x)}}.
\end{align*}
By Uhlmann's theorem, there exists an isometry $V_x:{C \to BV}$ such that
\begin{align}\label{eq3}
 \left\| (V_x \otimes \1_{WR}) \phi_x^{CW R} (V_x \otimes \1_{WR})^{\dagger}
                                        - \theta_x^{WV} \otimes \psi_x^{BR} \right\|_1  
 \leq \sqrt{\sqrt{ \frac{\delta\ln 2}{2p(x)}} \left(2-\sqrt{ \frac{\delta\ln 2}{2p(x)}}\right)}, 
\end{align}
where $\theta_x^{WV}$ is a purification of $\phi_x^{W}$.
Since the source is generic by definition there is an $x$, say $x=0$, 
for which $\psi_0^B$ has full support on
$\mathcal{L}(\mathcal{H}_B)$, i.e.~$\lambda_0:=\lambda_{\min}(\psi_0^B)>0$. 
By Lemma \ref{full_support_lemma} in Appendix \ref{Miscellaneous_Facts}, 
for any $\ket{\psi_x}^{BR}$ there is an operator $T_x$ acting on the reference system such that
\begin{align*}
  \ket{\psi_x}^{BR} = (\1_B \otimes T_x) \ket{\psi_0}^{BR}. \nonumber
\end{align*}
Using this fact, we show that the decoding isometry $V_0$ in \aw{Eq.~(\ref{eq3})} works for all states:
\begin{align*}
  \bigl\| (&V_0 \otimes \1_{WR}) \phi_x^{CWR} (V_0^{\dagger} \otimes \1_{W R})
                            - \theta_0^{WV}  \otimes  \psi_x^{BR} \bigr\|_1   \\
   &= \norm{(V_0 \otimes \1_{WR}) (\1_{CW} \otimes T_x) \phi_0^{CWR} (\1_{CW} \otimes T_x)^{\dagger} (V_0^{\dagger}\otimes \1_{W R})- \theta_0^{WV} \otimes (\1_B \otimes T_x)\psi_0^{BR}(\1_B \otimes T_x)^{\dagger}}_1   \\
   &= \left\| (\1_{BVW}\otimes T_x)(V_0\otimes\1_{WR}) \phi_0^{CWR} (V_0^{\dagger}\otimes\1_{W R}) (\1_{BVW}\otimes T_x^{\dagger}) 
            - (\1_{BVW} \otimes T_x)\theta_0^{WV} \otimes\psi_0^{BR}(\1_{BVW} \otimes T_x^{\dagger}) \right\|_1  \\
   &\leq  \norm{\1_{BVW} \otimes T_x}_{\infty}^2  
          \norm{ (V_0 \otimes \1_{WR}) \phi_0^{CWR}  (V_0^{\dagger}  \otimes \1_{WR}) -  \theta_0^{WV} \otimes\psi_0^{BR}}_1   \\
   &\leq   \frac{1}{\lambda_0}  \sqrt{\sqrt{ \frac{\delta\ln 2}{2p(0)}} \left(2-\sqrt{ \frac{\delta\ln 2}{2p(0)}}\right)}, 
\end{align*}
where the \aw{last two} inequalities follow from Lemma \ref{T_norm1_inequality} 
and Lemma \ref{full_support_lemma}, respectively. 
By tracing out the systems $VBR$ in the above chain of inequalities, we get 
\begin{align}
  \label{phix_phi0_distance}
  \norm{\phi_x^{W}- \phi_0^{W}}_1 
      \leq \frac{1}{\lambda_0}\sqrt{\sqrt{ \frac{\delta\ln 2}{2p(0)}} \left(2-\sqrt{ \frac{\delta\ln 2}{2p(0)}}\right)} \aw{=: \delta'}.   
\end{align}
Thus, by triangle inequality we obtain
\begin{align}
  \label{phi_phi0_distance} 
  \norm{\underbrace{\sum_x p(x) \ketbra{x}{x}^X \otimes \phi_x^{W}}_{\sigma^{XW}}
         - \underbrace{\sum_x p(x) \ketbra{x}{x}^X \otimes \phi_0^{W}}_{\aw{=:}\sigma_0^{XW}}}_1 
    &\leq \sum_x p(x) \norm{  \phi_x^{W}- \phi_0^{W}}_1 \nonumber \\ 
    &\leq \frac{1}{\lambda_0}\sqrt{\sqrt{ \frac{\delta\ln 2}{2p(0)}} 
          \left(2-\sqrt{ \frac{\delta\ln 2}{2p(0)}}\right)} = \delta'.    
\end{align}
By applying \aw{the Alicki-Fannes inequality in the form of Lemma \ref{AFW lemma},} to 
\aw{Eq.}~(\ref{phi_phi0_distance}), we have
\begin{align*}
  I(X:W)_\sigma &=S(X)_{\sigma}-S(X|W)_{\sigma} +S(X|W)_{\sigma_0}-S(X|W)_{\sigma_0} \\
                &= S(X|W)_{\sigma_0}-S(X|W)_{\sigma}                                 \\
                &\leq \delta' \log |X| + 2h\left(\frac12\delta'\right),  
\end{align*}
and the right \aw{hand side} of the above inequality vanishes for \aw{$\delta\rightarrow 0$}.
\end{proof}

\aw{\section{Towards the full rate region}}
\label{sec: full problem}
In this section, we consider the full rate region of the distributed compression of 
a \aw{classical-quantum} source. \aw{The Devetak-Winter code, Eq.~(\ref{eq:DW}), and 
the code based on state merging, Theorem \ref{State_merging_rate}, we get two rate
points in the unassisted (and hence also in the unlimited entanglement-assisted)
rate region:
\[
  (R_X,R_B) = (S(X|B),S(B)), \quad
  (R_X,R_B) = \left(S(X),\frac12(S(B)+S(B|X))\right).
\]
Their upper-right convex closure is hence an inner bound to the rate region,
depicted schematically in Fig.~\ref{fig:inner} and described by the inequalities
in the following theorem.

\begin{theorem}
\label{unknown.theorem}
For distributed compression of a \aw{classical-quantum} source in unassisted model, 
the rate pairs satisfying the following inequalities are achievable:
\begin{equation}\begin{split}
  \label{eq:inner}
  R_X      &\geq S(X|B),                             \\
  R_B      &\geq\frac{1}{2}\left(S(B)+S(B|X)\right), \\
  R_X+2R_B &\geq S(B)+S(XB). 
\end{split}\end{equation}
\end{theorem}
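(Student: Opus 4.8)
The plan is to establish the achievable region in Theorem \ref{unknown.theorem} by exhibiting two extreme rate points and then invoking the closure properties of the rate region already noted in the excerpt, namely that the rate region is a closed convex set, open to the upper-right (one can always waste rate). Concretely, I would first recall that the Devetak--Winter point $(R_X,R_B)=(S(X|B),S(B))$ is achievable by Eq.~(\ref{eq:DW}), and that the state-merging point $(R_X,R_B)=\left(S(X),\frac12(S(B)+S(B|X))\right)$ is achievable by Theorem \ref{State_merging_rate} (for the $R_X$ coordinate of the latter, note that Alice can always send $X^n$ at rate $S(X)$ by Shannon compression, as remarked in Sec.~\ref{sec:seiteninformation}, while Bob uses coherent state merging). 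Having these two points, convexity via time-sharing gives the whole segment between them, and openness to the upper-right gives everything dominating a point of that segment.

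The second step is the purely geometric verification that the region described by the three inequalities in (\ref{eq:inner}) is exactly the upper-right convex hull of these two points. I would check that both points satisfy all three inequalities with the ``diagonal'' constraint $R_X+2R_B \geq S(B)+S(XB)$ tight: at the Devetak--Winter point, $R_X+2R_B = S(X|B)+2S(B) = S(XB)-S(B)+2S(B) = S(XB)+S(B)$; at the state-merging point, $R_X+2R_B = S(X)+S(B)+S(B|X) = S(X)+S(B)+S(XB)-S(X) = S(B)+S(XB)$. So both points lie on the line $R_X+2R_B=S(B)+S(XB)$, the segment between them is exactly the portion of that line cut out by $R_X\geq S(X|B)$ and $R_B\geq \frac12(S(B)+S(B|X))$ (equivalently $R_X\leq S(X)$, since $S(X)-S(X|B)=I(X:B)=S(B)-S(B|X)$ matches the two endpoint bounds), and the three inequalities together describe precisely the up-right convex closure of this segment. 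This is a short computation using only $S(XB)=S(X)+S(B|X)=S(B)+S(X|B)$.

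The only genuine subtlety — and the step I would flag as the main point requiring care rather than a real obstacle — is the legitimacy of time-sharing in this hybrid model: one must argue that a convex combination of a Devetak--Winter-type code on a $\lambda n$-block and a state-merging code on a $(1-\lambda)n$-block yields a valid code of block length $n$ for the product source with rates the corresponding convex combinations and fidelity still tending to $1$. This follows from the multiplicativity of fidelity over tensor factors (the joint fidelity is the product of the two block fidelities, hence still $\to 1$) together with the fact that both subprotocols operate on i.i.d.\ copies of $\omega^{XBR}$ and the encoders/decoder act independently on the two blocks; this is exactly the ``time sharing principle'' the excerpt already invokes to assert convexity and closedness of the rate region. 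With that in hand, the theorem is immediate. I would therefore present the proof as: (i) the two endpoint codes; (ii) time-sharing + rate-wasting to fill out the up-right convex hull; (iii) the one-line arithmetic identifying that hull with the inequalities (\ref{eq:inner}).
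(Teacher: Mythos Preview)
Your proposal is correct and follows exactly the paper's own argument: the paper establishes the two corner points (Devetak--Winter and state merging), invokes time sharing and rate wasting to obtain their upper-right convex closure, and identifies that closure with the inequalities (\ref{eq:inner}). Your added geometric verification and the explicit remark on the legitimacy of time sharing are more detailed than what the paper spells out, but the approach is the same.
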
}

\begin{figure}[ht]
  \includegraphics[width=0.9\textwidth]{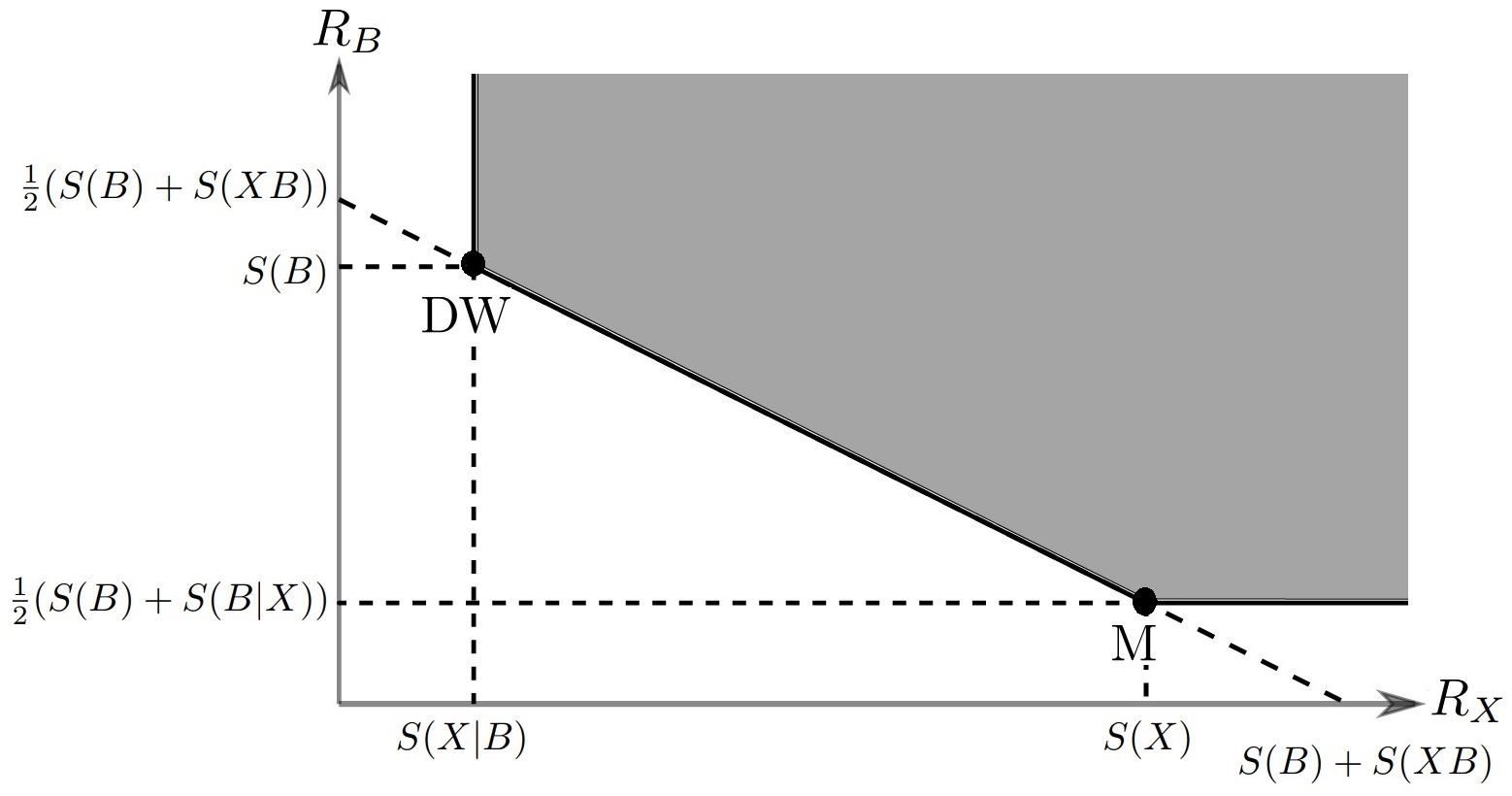}
  \caption{\aw{The region of all pairs $(R_X,R_B)$ satisfying the three conditions of 
           Eq.~(\ref{eq:inner}); it is the upper-right convex closure of the
           Devetak-Winter (DW) and the merging (M) point.
           All of these points are achievable in the unassisted model}.}
  \label{fig:inner}
\end{figure}

\aw{For generic sources we find \aw{that this is in fact the} rate region.
However, in general, we only present some outer bounds and inner bounds (achievable rates), 
which show the rate region to be much more complicated than the rate region of \aw{the}
classical Slepian-Wolf problem.}

\aw{\subsection{General converse bounds}
\label{sec:Converse Bounds in General}
For distributed compression of a \aw{classical-quantum} source 
in general, we start with a general converse bound.

\begin{theorem}
\label{theorem:full converse}
The asymptotic rate pairs for distributed compression of a 
\aw{classical-quantum} source in the entanglement-assisted model are lower bounded as 
\begin{equation}\begin{split}
  \label{eq:general-converse}
  R_X        &\geq S(X|B),                                                \\
  R_B        &\geq \frac{1}{2}\left( S(B)+S(B|X)-\widetilde{I}_0 \right), \\ 
  R_X + 2R_B &\geq  S(B)+S(BX)-\widetilde{I}_0.
\end{split}\end{equation}
In the unassisted model, in addition to the above lower bounds, 
the asymptotic rate pairs are bounded as
\[ 
  R_X + R_B \geq S(XB). 
\]
\end{theorem}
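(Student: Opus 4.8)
The plan is to prove the three bounds of \eqref{eq:general-converse} by essentially repeating the entropy-counting arguments of Theorem~\ref{converse_QCSW}, but now with both encoders present, and then to handle the extra unassisted bound $R_X+R_B\geq S(XB)$ separately by a cut-set type argument. The main new ingredient is the decoupling condition for the full model (Lemma~\ref{decoupling condition}), which tells us that the joint environment $W_XW_BW_DB_0'D_0'$ is decoupled from $\hat X^n\hat B^nR^n$ conditioned on $X'^n$, up to error $n\delta(n,\epsilon)$. First I would dispose of $R_X\geq S(X|B)$: this is already the Devetak--Winter converse, and since it only concerns Alice's rate it survives unchanged in the presence of Bob's (arbitrary) encoder---formally, giving Bob's output $C_B$ together with $B^n$ itself to the decoder as side information only helps, and then $R_X\geq S(X^n|B^n)/n=S(X|B)$ by the standard argument (or one can invoke the equivalence with the model of \cite{Devetak2003} established later in Theorem~\ref{theorem: generic full rate region}).

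For the second bound, $R_B\geq\frac12(S(B)+S(B|X)-\widetilde I_0)$, the point is that it is formally weaker than---indeed, follows from---the side-information converse Theorem~\ref{converse_QCSW}: any code for the full problem is in particular a side-information code if we simply hand $X^n$ to Debbie for free (which can only decrease $R_X$, not affect $R_B$), so the lower bound on $R_B$ proved there applies verbatim. The only care needed is that the environment system whose mutual information with $R^n$ must be controlled is now $W_B B_0'$ together with whatever else ends up outside the compressed register; but Lemma~\ref{decoupling condition} is already stated for the full model with all environments lumped together, so one defines the CPTP map $\cT:B^n\to W_XW_BW_DB_0'D_0'$ as in the proof of Theorem~\ref{converse_QCSW} and bounds $I(X'^n:\text{env})\leq nI_{\delta(n,\epsilon)}(\omega)$ via Definition~\ref{I_delta} and Lemma~\ref{lemma:I-delta}.

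The genuinely new bound is $R_X+2R_B\geq S(B)+S(BX)-\widetilde I_0$. Here I would run the two-sided version of the entropy chain in Theorem~\ref{converse_QCSW}'s proof. Writing $nR_X\geq S(C_X)$ and $nR_B\geq S(C_B)$, one has on one hand $n(R_X+R_B)\geq S(C_XC_B)\geq S(C_XC_BD_0)-S(D_0)\geq S(X^nC_XC_BD_0|X'^n)-S(D_0)$, and after applying the decoding isometry $V$ and purity this equals $S(\hat X^n\hat B^n D_0' W_D|X'^n)-S(D_0)=S(W_XW_BB_0'R^n|X'^n)-S(D_0)$; then decoupling splits off $S(R^n|X'^n)=S(B^n|X^n)=nS(B|X)$ at a cost of $n\delta$ plus the term $S(W_XW_BB_0'|X'^n)$. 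On the other hand $nR_B\geq S(C_B)\geq S(C_BW_XW_BB_0')-S(W_XW_BB_0')=S(B^n)+S(B_0)-S(W_XW_BB_0')$ as in the first chain of the earlier proof (Bob's isometry preserves entropy, $B^n\perp B_0$). Adding the first relation to this second one, the $\pm S(W_XW_BB_0'|X'^n)$ and $\pm S(W_XW_BB_0')$ terms are reconciled using $S(W_XW_BB_0'|X'^n)\geq S(W_XW_BB_0')-S(X'^n)\geq S(W_XW_BB_0')-I(X'^n:W_XW_BB_0')-\ldots$, i.e. one pays $I(X'^n:\text{env})\leq nI_{\delta(n,\epsilon)}(\omega)$, and the $S(B_0),S(D_0)$ terms cancel since $B_0,D_0$ hold the same maximally entangled state. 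Collecting terms gives $n(R_X+2R_B)\geq nS(B)+nS(BX)-nI_{\delta(n,\epsilon)}(\omega)-O(n\delta(n,\epsilon))$, and letting $n\to\infty$, $\epsilon\to0$ yields the claim with $\widetilde I_0$. The bookkeeping of which environment pieces appear in which inequality---and making sure the mutual-information penalty collected across the two chains is exactly one copy of $I(X'^n:\text{env})$ rather than two---is where I expect the main effort and the main risk of a spurious factor to lie; the decoupling lemma is strong enough to absorb the rest into $\delta(n,\epsilon)$.

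Finally, the unassisted bound $R_X+R_B\geq S(XB)$ is the easiest: with no entanglement, $D_0,B_0,B_0',D_0'$ are trivial, and $n(R_X+R_B)\geq S(C_X)+S(C_B)\geq S(C_XC_B)\geq S(C_XC_B|\text{nothing})$; but Debbie decodes $C_XC_B$ into $\hat X^n\hat B^n$ at high fidelity, and a purifying $W_D$ makes the output of $V$ pure, so $S(C_XC_B)=S(\hat X^n\hat B^n W_D)\geq S(\hat X^n\hat B^n)$, which by Fannes/continuity of entropy is within $n\cdot o(1)$ of $S(X^nB^n)=nS(XB)$. Dividing by $n$ and taking limits gives $R_X+R_B\geq S(XB)$. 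One should double-check that the continuity estimate here uses the average fidelity bound \eqref{F_QCSW_unassisted1} together with the Fuchs--van de Graaf inequalities quoted in the Notation section, so that the correction is $O(\sqrt\epsilon\log(|X||B|))$, exactly of the form already absorbed into $\delta(n,\epsilon)$ elsewhere.
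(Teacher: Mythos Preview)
Your outline for the first two bounds and for the unassisted sum-rate is essentially the paper's approach (reduce to Devetak--Winter, reduce to the side-information converse, and invoke the Schumacher compression converse, respectively). One small correction: in the unassisted argument, the step $S(\hat X^n\hat B^n W_D)\geq S(\hat X^n\hat B^n)$ is not valid in general (conditional entropy can be negative); the paper sidesteps this by appealing directly to the Schumacher lower bound on $\log|C_XC_B|=n(R_X+R_B)$ rather than via entropies.

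The substantive gap is in your derivation of $R_X+2R_B\geq S(B)+S(BX)-\widetilde I_0$. Two things go wrong. First, in chain~2 you write $S(C_BW_XW_BB_0')=S(B^n)+S(B_0)$, but $W_X$ is produced by Alice's isometry from $X^n$, not by Bob's, and it \emph{is} correlated with $C_BW_BB_0'$ through $x^n$; only $S(C_BW_BB_0')=S(B^n)+S(B_0)$ holds. Second, and more seriously, by conditioning the first chain on $X'^n$ at the outset you throw away $H(X)$: your chain~1 yields $n(R_X+R_B)\geq nS(B|X)+S(W_XW_BB_0'|X'^n)-S(D_0)-n\delta$, and after adding the corrected chain~2 and using $S(W_XW_BB_0'|X'^n)=S(W_X|X'^n)+S(W_BB_0'|X'^n)$ you arrive at
\[
  R_X+2R_B \geq S(B)+S(B|X)-I_{\delta}(\omega)+\tfrac1n S(W_X|X'^n)-\delta ,
\]
which (since $S(W_X|X'^n)\geq 0$ is all you can say) is no stronger than the $R_B$ bound you already have. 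The paper avoids this loss by \emph{not} conditioning on $X'^n$ immediately: it expands $S(C_XC_BD_0)=S(\hat X^n\hat B^n W_DD_0')=S(\hat X^n\hat B^n)+S(W_DD_0'\,|\,\hat X^n\hat B^n)$, uses Fannes on the first term to extract the full $nS(XB)$, and only then conditions the second term on $X'^n$ (via strong subadditivity) before invoking purity and the decoupling lemma. That ordering is what buys the extra $H(X)$ and turns your ``spurious factor'' worry into the actual content of the argument.
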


\begin{proof}
The individual lower bounds have been established already: 
$R_X\geq S(X|B)$ is from \cite{Devetak2003,Winter1999}, in a slightly different source model. 
However, it also holds in our system model if Bob sends his information using 
unlimited communication such that \aw{Debbie} can decode it perfectly. Namely, notice 
that the fidelity (\ref{F_QCSW_unassisted1}) is more stringent than the decoding 
criterion of \cite{Devetak2003,Winter1999}, so any converse bound considering the 
decoding criterion of \cite{Devetak2003,Winter1999} is also a converse bound in our system model. 
The bound $R_B\geq \frac{1}{2}(S(B)+S(B|X)-\widetilde{I}_0)$ is from
Theorem \ref{theorem:generic optimal rate}. These two bounds hold in the unassisted,
as well as the entanglement-assisted model.

In the unassisted model, the rate sum lower bound $R_X + R_B \geq S(XB)$ has been
argued in \cite{Devetak2003,Winter1999}, too. As a matter of fact, for any distributed compression
scheme for the source, $\cE_X\otimes\cE_B$ jointly describes a Schumacher compression scheme 
with asymptotically high fidelity- Thus, its rate must be asymptotically lower bounded
by the joint entropy of the source, $S(XB)$ \cite{Schumacher1995,Jozsa1994_1,Barnum1996,Winter1999}.

This leaves the bound $R_X + 2R_B \geq S(B)+S(BX)-\widetilde{I}_0$ to be proved in
the entanglement-assisted model, which we tackle now.
The encoders of Alice and Bob are isometries $U_X:{X^n \to C_X W_X}$ and 
$U_B:{B^nB_0 \to C_B W_B B_0'}$, respectively. They send their respective compressed systems 
$C_X$ and $C_W$ to \aw{Debbie} and keep the environment parts $W_X$ and $W_B$ for themselves. 
Then, \aw{Debbie} applies the decoding isometry $V:{C_X C_B D_0 \to \hat{X}^n\hat{B}^n W_D D_0'}$,
where systems $\hat{X}^n\hat{B}^n D_0'$ are the output states, and $W_D$ and $D_0'$ are the 
environment of \aw{Debbie}'s decoding isometry and her output entanglement, respectively.
We first bound the following sum rate:
\begin{align}
  \label{eq sum rate}
  nR_X+nR_B+S(D_0) &\geq S(C_X)+S(C_B)+S(D_0) \nonumber\\
    &\geq S(C_XC_BD_0)                        \nonumber\\
    &=    S(\hat{X}^n\hat{B}^n W_D D_0')      \nonumber\\
    &=    S(\hat{X}^n\hat{B}^n) + S(W_D D_0'|\hat{X}^n\hat{B}^n)     \nonumber\\
    &\geq S(\hat{X}^n\hat{B}^n) + S(W_D D_0'|\hat{X}^n\hat{B}^nX'^n) \nonumber\\
    &\geq S(X^n B^n) + S(W_D D_0'|\hat{X}^n\hat{B}^nX'^n) - n\sqrt{2\epsilon} \log(|X| |B|) - h(\sqrt{2\epsilon}) \nonumber\\
    &\geq S(X^n B^n) + S(W_D D_0'|X'^n) - 2n\delta(n,\epsilon)                                    \nonumber\\
    &\geq S(X^n B^n) + S(W_XW_B B_0'|X'^n) - S(R^n\hat{B}^n\hat{X}^n|X'^n) - 2n\delta(n,\epsilon) \nonumber\\
    &\geq S(X^n B^n) + S(W_XW_B B_0'|X'^n) - 3n\delta(n,\epsilon)                                 \nonumber\\
    &=    S(X^n B^n) + S(W_X|X'^n) + S(W_B B_0'|X'^n) - 3n\delta(n,\epsilon)                      \nonumber\\
    &\geq S(X^n B^n) + S(W_B B_0'|X'^n) - 3n\delta(n,\epsilon),
\end{align}
where the second line is by subadditivity, the equality in the third line follows because 
the decoding isometry $V$ does not change the entropy. Then, in the fourth and fifth line
we use the chain rule and strong subadditivity of entropy.
The inequality in the sixth line follows from the decodability of the systems $X^nB^n$:
the fidelity criterion (\ref{F_QCSW_assisted}) implies that the output state on systems 
$\hat{X}^n\hat{B}^n$ is $2\sqrt{2\epsilon}$-close to the original state $X^nB^n$ in trace norm;
then apply \aw{the} Fannes inequality (\aw{Lemma} \ref{Fannes-Audenaert inequality}). 
The seventh line follows from the decoupling condition (Lemma \ref{decoupling condition}),
which implies that 
$I(W_D D_0':\hat{X}^n\hat{B}^n|{X'}^n) \leq n\delta(n,\epsilon) 
                                       =    4n\sqrt{6\epsilon} \log(|X| |B|) + 2 h(\sqrt{6\epsilon})$.
In the eighth line, we use that for any given $x^n$, the overall state of
$W_XW_B W_D B_0' D_0'R^n\hat{B}^n\hat{X}^n$ is pure, and invoking subadditivity;
then, in line nine we use the decoding fidelity (\ref{F_QCSW_assisted}) once
more, saying that the output state on systems 
$\hat{X}^n\hat{B}^nR^n{X'}^n$ is $2\sqrt{2\epsilon}$-close to the original 
state $X^nB^nR^n{X'}^n$ in trace norm;
then apply \aw{the} Fannes inequality (\aw{Lemma} \ref{Fannes-Audenaert inequality}).
The equality in the eleventh line follows because for 
a given $x^n$ the encoded states of Alice and Bob are independent.  

Moreover, we bound $R_B$ as follows:
\begin{align}
  \label{eq RB lower bound}
  nR_B &\geq S(C_B)                       \nonumber\\
       &\geq S(C_B|W_BB_0')               \nonumber\\
       &=    S(C_BW_BB_0') - S(W_BB_0')   \nonumber\\
       &=    S(B^nB_0) - S(W_BB_0')       \nonumber\\
       &=    S(B^n) + S(B_0) - S(W_BB_0'). 
\end{align}
Adding \aw{Eqs.}~(\ref{eq sum rate}) and (\ref{eq RB lower bound}), and after
cancellation of $S(B_0)=S(D_0)$ we get
\begin{align}
  \label{eq: lower bound R_X+2R_B}
  R_X+2R_B &\geq S(B) + S(X B) - \frac{1}{n}I(X'^n:W_B B_0') - 3n\delta(n,\epsilon)                              \nonumber\\
           &\geq S(B) + S(X B) - \frac{1}{n}I_{n\delta(n,\epsilon)}({\omega^{\otimes n}}) - 3n\delta(n,\epsilon) \nonumber\\
           &=    S(B) + S(X B) - I_{\delta(n,\epsilon)}({\omega}) - 3n\delta(n,\epsilon),
\end{align}
where given that $I(R^n:B_0'W_B|X'^n) \leq \delta(n,\epsilon)$, which we have from the 
decoupling condition (Lemma \ref{decoupling condition}), the second equality follows directly 
from Definition \ref{I_delta}, just as in the proof of Theorem \ref{converse_QCSW}. 
The equality in the last line follows from Lemma \ref{lemma:I-delta}. 
In the limit of $n\rightarrow\infty$ and $\epsilon\rightarrow 0$, we have
$\delta(n,\epsilon) \rightarrow 0$, and so $I_{\delta(n,\epsilon)}$ \aw{converges} 
to $\widetilde{I}_0$.
\end{proof}}

\subsection{Rate region for generic sources}
\label{sec:generic full}
In this subsection, we find the complete rate region for generic sources, generalizing
the insight of Theorem \ref{theorem:generic optimal rate} for the subproblem
of quantum compression with classical side information at the decoder.

\begin{theorem}
\label{theorem: generic full rate region}
\aw{For a generic classical-quantum source, in particular one where there is
an $x$ such that $\psi_x^B$ has full support}, the optimal asymptotic rate region for distributed 
compression is \aw{the} set of rate pairs satisfying 
\begin{align*}
  R_X      &\geq S(X|B), \\
  R_B      &\geq\frac{1}{2}\left(S(B)+S(B|X)\right),\\
  R_X+2R_B &\geq S(B)+S(XB). 
\end{align*}
Moreover, \aw{there are protocols achieving these bounds requiring no prior} entanglement. 
\end{theorem}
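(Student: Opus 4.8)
The plan is to sandwich the rate region between an inner bound and a converse that, for \emph{generic} sources, coincide. For the inner bound I would simply invoke Theorem~\ref{unknown.theorem}: the set of pairs satisfying the three stated inequalities is precisely the region described there, i.e.\ the upper-right convex closure of the Devetak--Winter point $(S(X|B),S(B))$ of Eq.~(\ref{eq:DW}) and the state-merging point $(S(X),\frac12(S(B)+S(B|X)))$ of Theorem~\ref{State_merging_rate}. Both corner points are achievable in the unassisted model---the first by combining Devetak--Winter classical compression of $X^n$ (with $B^n$ as side information at the decoder) with Schumacher compression of $B^n$, the second by coherent state merging---so by time sharing, together with the fact that one can always waste rate, their entire upper-right convex closure is achievable. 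Since neither protocol uses prior entanglement (the merging one in fact distils $\frac12 I(X:B)$ ebits), and time sharing and rate wasting preserve this, the ``moreover'' clause comes for free.

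For the converse I would start from Theorem~\ref{theorem:full converse}, which in the entanglement-assisted model gives $R_X\geq S(X|B)$, $R_B\geq\frac12(S(B)+S(B|X)-\widetilde I_0)$ and $R_X+2R_B\geq S(B)+S(XB)-\widetilde I_0$. Everything then hinges on the single identity $\widetilde I_0=0$ for generic sources, which is exactly what was established inside the proof of Theorem~\ref{theorem:generic optimal rate} (using that some $\psi_0^B$ has full support, together with the operator $T_x$ of Lemma~\ref{full_support_lemma} and an Alicki--Fannes estimate, to bound $I(X:W)_\sigma$ by a quantity vanishing as $\delta\searrow 0$ uniformly over the otherwise unbounded environment $W$). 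Substituting $\widetilde I_0=0$ turns the converse into precisely the three stated inequalities, so they match the inner bound; and since an outer bound for the entanglement-assisted region is also an outer bound for the smaller unassisted one, while the inner bound is already unassisted, both models share this rate region. The last thing to verify is that the extra unassisted-model bound $R_X+R_B\geq S(XB)$ of Theorem~\ref{theorem:full converse} is redundant here: the polyhedron $\{R_X\geq S(X|B),\ R_B\geq\frac12(S(B)+S(B|X)),\ R_X+2R_B\geq S(B)+S(XB)\}$ has only the Devetak--Winter vertex (where $R_X+R_B=S(XB)$, using $S(B)+S(XB)-S(X|B)=2S(B)$) and the state-merging vertex (where $R_X+R_B=S(XB)+\frac12 I(X:B)\geq S(XB)$, using $S(XB)-S(B|X)=S(X)$ and subadditivity), and $R_X+R_B$ only grows as one moves along the edge joining them or along the two rays issuing from them---so the bound holds automatically.

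I do not expect a genuine obstacle inside this theorem itself: once Theorems~\ref{unknown.theorem} and \ref{theorem:full converse} are in hand, the argument is a substitution followed by the elementary polyhedral check above. The real work has been front-loaded into those earlier results---in particular into the generic-source identity $\widetilde I_0=0$, whose delicate point is keeping control of the mutual information $I(X:W)_\sigma$ despite the dimension of $W$ being unbounded (this is where the full-support / Uhlmann argument is essential). If anything needs a careful word here, it is the assertion, implicit in the achievability of the Devetak--Winter corner, that our purified average-fidelity decoding criterion is actually met by Devetak--Winter classical compression together with Schumacher compression---i.e.\ that $x^n$ can be recovered at the decoder without appreciably disturbing the reconstructed $\hat B^n$ and its entanglement with $R^n$; this is precisely the sense in which our model is equivalent to that of \cite{Devetak2003,Winter1999}.
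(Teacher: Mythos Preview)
Your proposal is correct and follows exactly the paper's own route: achievability from Theorem~\ref{unknown.theorem}, converse from Theorem~\ref{theorem:full converse} with $\widetilde I_0=0$ supplied by Theorem~\ref{theorem:generic optimal rate}. You add a careful polyhedral check that the unassisted bound $R_X+R_B\geq S(XB)$ is redundant, which the paper leaves implicit; this extra verification is correct and welcome but not a departure from the paper's argument.
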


\aw{\begin{proof}
We have argued the achievability already at the start of this section
(Theorem \ref{unknown.theorem}).
As for the converse, we have shown in Theorem \ref{theorem:generic optimal rate}
that for a generic source, $\widetilde{I}_0=0$, hence the claim follows from
the outer bounds of Theorem \ref{theorem:full converse}.
\end{proof}}

This means that for generic sources, which we recall are the complement of a set of measure
zero, the rate region has the shape of Fig.~\ref{fig:inner}.

\subsection{General achievability bounds}
\label{sec:Achievability Bounds in General}
For general, non-generic sources, the achievability bounds of Theorem \ref{unknown.theorem}
and the outer bounds of Theorem \ref{theorem:full converse} do not match. Here we present several more
general achievability results that go somewhat towards filling in the unknown
area in between, without, however, resolving the question completely.

\begin{theorem}
\label{thm:achieve}
For distributed compression of a \aw{classical-quantum} source in the entanglement-assisted model, 
\aw{any rate pairs satisfying} the following inequalities are achievable: \aw{with $\alpha=\frac{2I(X:B)}{I(X:B)+I_0}$},
\begin{equation}\begin{split}
  \label{eq:funny-region}
  R_X             &\geq S(X|B),                                    \\
  R_B             &\geq \frac{1}{2}\left( S(B)+S(B|X)-I_0 \right), \\ 
  R_X +\alpha R_B &\geq S(X|B)+ \alpha S(B).
\end{split}\end{equation}

\aw{More generally,} for any auxiliary random variable $Y$ such that \aw{$Y$--$X$--$B$}
is a Markov chain, \aw{all the following rate pairs (and hence also their upper-right convex closure)} 
are achievable:
\begin{align*}
  R_X &= I(X:Y) + S(X|BY)                    = S(X|B) + I(Y:B), \\
  R_B &= \frac{1}{2}(S(B) + S(B|Y) - I(Y:W)) = S(B)-\frac{1}{2}\left(I(Y:B)+I(Y:W)\right), \\
\end{align*}
where $C$ and $W$ are the system and environment of an \aw{isometry $V:{B\rightarrow CW}$ 
with $I(W:R|Y)=0$}. 
\end{theorem}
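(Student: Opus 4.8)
\textbf{Proof plan for Theorem \ref{thm:achieve}.}
The plan is to derive the more general statement (the one parametrized by the auxiliary random variable $Y$) first, and then obtain the explicit inequalities \eqref{eq:funny-region} as the upper-right convex closure of two extreme rate points -- the Devetak-Winter point, corresponding to $Y=X$ (so that the Markov chain $Y$--$X$--$B$ is trivially satisfied and $I(Y:W)$ can be taken to vanish), and the state-redistribution-type point obtained from the optimal isometry realizing $I_0$ (here the relevant $Y$ is again essentially $X$, but the nontrivial isometry $V:B\to CW$ with $I(W:R|X)=0$ and $I(X:W)=I_0$ is exploited). The line $R_X+\alpha R_B \geq S(X|B)+\alpha S(B)$ is precisely the supporting line through these two points, and computing $\alpha$ is the elementary bookkeeping that equates the two intercepts; I would relegate that to a short calculation rather than spell it out.

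For the general $Y$-parametrized family, the protocol is a two-stage affair built out of the subprotocols already invoked in the paper. First, Alice performs a ``binning'' of her classical source relative to $Y$: she applies a conditional-typicality / covering argument so that, using rate $R_X = I(X:Y) + S(X|BY)$, she communicates enough for Debbie -- who holds $B^n$ as side information -- to first recover $Y^n$ (this costs $I(X:Y)$ above what $B$ already provides about $Y$, i.e.\ $I(Y:B)$, hence the rewriting $R_X = S(X|B)+I(Y:B)$ after using the Markov chain $I(X:Y)+S(X|BY)=S(X|B)+I(Y:B)$) and then, with $Y^n$ in hand, to recover $X^n$ at the residual cost $S(X|BY)$. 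This is a standard Wyner--Ziv / Slepian--Wolf-with-side-information layering and should be citeable from \cite{Devetak2003,Winter1999,csiszar_korner_2011}. Second, on Bob's side: he applies the isometry $V:B\to CW$ with $I(W:R|Y)=0$ to each copy, discards (traces into his environment) the part $W$, and runs coherent state merging / FQSW (Theorem \ref{State_merging_rate}) -- or quantum state redistribution as in Theorem \ref{QSR_achievability} when one wants to track the entanglement -- to send $C^n$ to Debbie, who by then has $Y^n$ available as decoder side information. The merging rate with side information $Y$ is $\frac12 I(C:X'R\mid Y)$, which unpacks to $\frac12(S(B)+S(B|Y)-I(Y:W))$ exactly as in the proof of Theorem \ref{QSR_achievability}, using $I(W:R|Y)=0$ to peel off $W$ via Uhlmann's theorem and the identity $S(C|Y)=S(CW|Y)-S(W|Y)=S(B|Y)-S(W|Y)$ together with $S(C)=S(CW)-S(W)+I(C:W)$-type manipulations; the rewriting $S(B)-\frac12(I(Y:B)+I(Y:W))$ is then algebra.

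The main obstacle -- and the point I would be most careful about -- is the \emph{interface} between the two stages, i.e.\ making Debbie's possession of $Y^n$ simultaneously (a) the output of Alice's decoding and (b) the classical side information that the quantum subprotocol on Bob's side is conditioned on, all while the true classical register $X^n$ must still be reconstructed and while the reference $R^n$ must end up correctly entangled with $\hat B^n$. Concretely, one must check that decoding $Y^n$ with asymptotically vanishing error does not disturb the quantum state enough to spoil the fidelity of the merging step (a ``gentle measurement'' / union-bound argument on the joint typical set), and conversely that tracing out $W$ on Bob's side is harmless precisely because $I(W:R|Y)=0$ means $W$ is decoupled from the reference conditioned on $Y$ -- which is why the Markov chain hypothesis $Y$--$X$--$B$ is exactly the right condition (it guarantees $Y$ is a function of / recoverable from the classical data and that conditioning on $Y$ is consistent with the $\rho_x$ structure). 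Once the interface is handled, the rate expressions and their two equivalent forms follow by the entropy identities noted above, and the region \eqref{eq:funny-region} is read off as the convex hull of the two extreme points with the slope $\alpha = \frac{2I(X:B)}{I(X:B)+I_0}$ fixed by equating endpoints.
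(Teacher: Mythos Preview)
Your two-stage protocol has a genuine circularity that the ``interface'' discussion does not address. You write that Debbie ``holds $B^n$ as side information'' when she decodes $Y^n$ from Alice's message, and that later she uses $Y^n$ as side information to decode Bob's compressed $C^n$ into $\hat B^n$. But Debbie never holds $B^n$: she only receives Bob's message, and that message is a state-merging (or QSR) encoding whose very decodability presupposes $Y^n$ at the receiver. Before decoding, Bob's transmitted register is essentially decoupled from the source (that is the whole point of the merging construction), so it cannot serve as the ``$B^n$'' side information your first stage requires. The gentle-measurement and union-bound arguments you propose would handle small fidelity losses at an interface, but they do not resolve a bootstrap problem in which each stage needs the output of the other.

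The paper breaks this deadlock with a tool you do not invoke: the Reverse Shannon Theorem. Alice simulates the channel $X\to Y$ in i.i.d.\ fashion at communication cost $I(X:Y)$, which places $Y^n$ at Debbie \emph{without any side information}. Then Bob runs the QSR subprotocol of Theorem~\ref{QSR_achievability} with $Y^n$ at the decoder, and finally Alice sends the residual classical information at rate $S(X|BY)$ via Devetak--Winter, now that Debbie holds both $B^n$ and $Y^n$. Note also that your Slepian--Wolf reasoning ``$I(X:Y)$ above what $B$ already provides'' would, even absent the circularity, yield rate $H(Y|B)$ rather than $I(X:Y)$, and these differ by $H(Y|X)-I(Y:B)$, which is nonzero for nondeterministic $Y$. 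A smaller point: your identification of the extreme points is inverted. The Devetak--Winter corner $(S(X|B),S(B))$ corresponds to trivial $Y=\emptyset$, while $Y=X$ (with the optimal isometry in $I_0$) gives the QSR corner $\bigl(S(X),\tfrac12(S(B)+S(B|X)-I_0)\bigr)$; the region \eqref{eq:funny-region} is then the upper-right convex closure of these two, and the slope $\alpha$ is as you say.
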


\begin{proof}
\aw{The region described by Eq.~(\ref{eq:funny-region}) is precisely the upper-right convex
closure of the two corner points $(S(X|B),S(B))$ and $(S(X),\frac{1}{2}(S(B)+S(B|X)-I_0))$. Their
achievability} follows from \aw{Theorems} \ref{theorem: generic full rate region} and 
\ref{QSR_achievability}. 

We use the two achievable points   
$\left(S(X|B),S(B)\right)$ and $\left(S(X),\frac{1}{2}(S(B)+S(B|X)-I_0)\right)$ to show the 
\aw{second} statement. 
Namely, Alice and \aw{Debbie} (the receiver) use the Reverse Shannon Theorem to simulate the channel
taking $X$ to $Y$ in i.i.d.~fashion, which costs $I(X:Y)$ bits of classical communication \cite{Bennett2002}. 
Now we are in a situation that we know, Bob has to encode $B^n$ with side information $Y^n$ at the decoder, 
which can be done \aw{at rate $\frac{1}{2}(S(B)+S(B|Y)-I(Y:W))$, by the quantum state redistribution
protocol of Theorem \ref{QSR_achievability}}. 
Then Alice has to send some more information to allow the receiver to decode $X^n$ which is an instance of 
classical compression of $X$ with quantum side information $BY$ that is already at the decoder, 
hence costing another $S(X|BY)$ in communication, \aw{by the Devetak-Winter protocol \cite{Devetak2003,Winter1999}}. 
For $Y=X$, \aw{we recover the rate point $\left(S(X),\frac{1}{2}(S(B)+S(B|X)-I_0)\right)$, 
and for $Y=\emptyset$ we recover $\left(S(X|B),S(B)\right)$.}
\end{proof}

\medskip
In Fig.~\ref{fig:full}, we show the situation for a general source, depicting
the most important inner and outer bounds on the rate region in the entanglement-assisted
model.

\begin{figure}[ht]
  \includegraphics[width=0.9\textwidth]{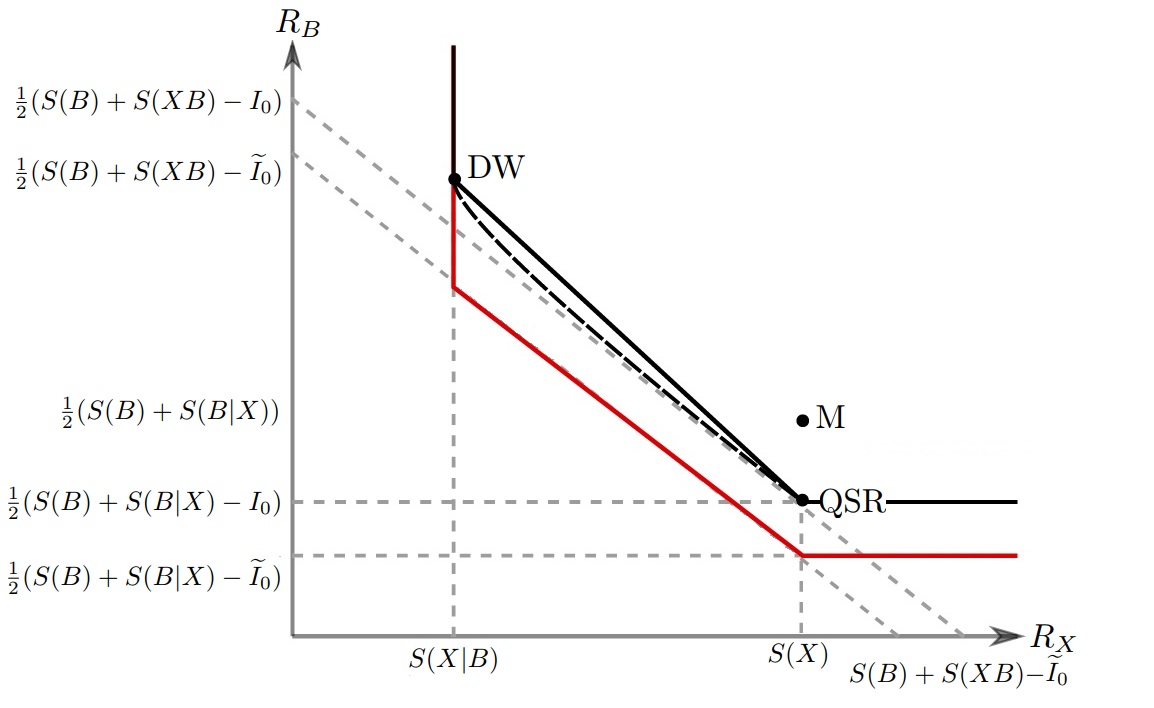}
  \caption{\aw{General outer (converse) bound, in red, and inner (achievable) bounds, in black, 
           on the entanglement-assisted rate region, assuming unlimited entanglement. 
           In general, our achievable points, the one from Devetak-Winter (DW), and the ones
           using merging (M) and quantum state redistribution (QSR) are no longer on the
           boundary of the outer bound. The achievable region is potentially slightly larger than
           the upper-right convex closure of the points DW and QSR, connected by a solid black
           straight line; indeed, the second part of Theorem \ref{thm:achieve} allows us to
           interpolate between DW and QSR along the black dashed curve}.}
  \label{fig:full}
\end{figure}


\section{Discussion and Open Problems} 
\label{sec:discuss}
\aw{After seeing no progress for over 15 years in the problem of distributed
compression of quantum sources, we have decided to take a fresh look at the
classical-quantum sources considered in \cite{Devetak2003,Winter1999}. There,
the problem of compressing the classical source using the quantum part as
side information at the decoder was solved; here we were analyzing the full
rate region, in particular we were interested in the other extreme of compressing 
the quantum source using the classical part as side information at the decoder. 
Like in the classical Slepian-Wolf coding, the former problem exhibits no rate loss, 
in that the quantum part of the source is compressed to the Schumacher rate, 
the local entropy, and the sum rate equals the joint entropy of the source.
Interestingly, this is not the case for the latter problem: clearly, if the
classical side information were available both at the encoder and the decoder,
the optimal compression rate would be the conditional entropy $S(B|X)$, which
would again imply no sum rate loss. However, since the classical side information
is supposed to be present only at the decoder, we have shown that in general 
the rate sum is strictly larger, in fact generically by $\frac12 I(X:B)$, and
with this additional rate there is always a coding scheme achieving asymptotically
high fidelity. We term this additional rate ``the price of ignorance'', as it
corresponds to the absence of the side information at the encoder.

For the general case, we introduced information quantities $I_0$ and $\widetilde{I}_0$
(Definition \ref{I_delta}), to upper and lower bound the optimal quantum compression rate as
\[
  \frac12\left(S(B)+S(B|X)-\widetilde{I}_0\right) \leq R_B^* \leq \frac12\left(S(B)+S(B|X)-{I}_0\right),
\]
when unlimited entanglement is available.
For generic sources, $I_0 = \widetilde{I}_0 = 0$, but in general we do not
understand these quantities very well, and the first complex of open problems
is about them: is $I_0 = \widetilde{I}_0$ in general, or are there examples of
gaps? How to calculate either one of these quantities, given that
a priori the auxiliary register $W$ is unbounded? In fact, can one
without loss of generality put a finite bound on the dimension of $W$,
for either optimization problem?

The second open problem is about the need for prior shared entanglement to achieve
the optimal quantum compression rate $R_B^*$. As a matter of fact, it would already
be interesting to know whether the rate $\frac12\left(S(B)+S(B|X)-{I}_0\right)$ 
requires in general pre-shared entanglement.

Finally, the full rate region inherits these features: While it is simple, and
in fact generated by the optimal codes for the two compression-with-side-information 
problems (quantum compression with classical side information, and classical
compression with quantum side information), in the generic case, in general the
picture is very complicated, and we have only been able to give several outer and
inner bounds on the rate region, whose determination remains an open problem.}


\acknowledgments
We thank Morteza Noshad, Janis N\"otzel, Dong Yang and Odette de Crecy 
for invaluable discussions during the early stages of this project, and for
comments on the draft.

The authors acknowledge financial support from the Spanish MINECO (FIS2016-80681-P, 
FISICATEAMO no. FIS2016-79508-P, SEVERO OCHOA no.~SEV-2015-0522, FPI), 
the European Social Fund, the Fundaci\'o Cellex, the Generalitat de Catalunya 
(AGAUR grants no.~2017-SGR-1127, 2017-SGR-1341 and CERCA/Program), 
ERC AdG OSYRIS, ERC AdG IRQUAT, EU FETPRO QUIC, 
and the National Science Centre, Poland-Symfonia grant no.~2016/20/W/ST4/00314. 
Every Euro from each one of these generous institutions, 
without which we would be nothing, 
has been essential for the success of this project.


\appendix

\section{Miscellaneous definitions and facts}
\label{Miscellaneous_Facts}

For \aw{an} operator $X$, the \emph{trace norm}, the 
\emph{Hilbert-Schmidt norm} and 
the \emph{operator norm} are defined respectively 
in terms of $|X| = \sqrt{X^\dagger X}$:
\begin{align*}
 \|X\|_1        &= \Tr |X|, \\
 \|X\|_2        &= \sqrt{\Tr |X|^2}, \\
 \|X\|_{\infty} &= \lambda_{\max}(|X|),
\end{align*}
where $\lambda_{\max}(X)$ is the largest eigenvalue of $X$.

\begin{lemma}[Cf.~\cite{bhatia97}]
For any operator $X$,
\begin{align}
\|X\|_1 \leq \sqrt{d} \|X\|_2 \leq  d \|X\|_{\infty},  
\end{align}
where $d$ equals the rank of $X$.
\qed
\end{lemma}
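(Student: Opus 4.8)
\medskip\noindent
The plan is to reduce both inequalities to elementary estimates on the singular values of $X$. Let $s_1 \geq s_2 \geq \cdots \geq s_d > 0$ be the nonzero singular values of $X$, so that $d = \operatorname{rank}(X)$, and fix a singular value decomposition $X = \sum_{i=1}^d s_i \ketbra{u_i}{v_i}$ with orthonormal families $\{\ket{u_i}\}$ and $\{\ket{v_i}\}$. Then $|X| = \sqrt{X^\dagger X}$ is the positive operator with eigenvalues $s_1,\ldots,s_d$ (and otherwise $0$), so that, directly from the definitions in the preceding lemma, $\|X\|_1 = \sum_{i=1}^d s_i$, $\|X\|_2 = \bigl(\sum_{i=1}^d s_i^2\bigr)^{1/2}$, and $\|X\|_\infty = s_1 = \max_i s_i$.

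For the first inequality I would apply the Cauchy--Schwarz inequality in $\mathbb{R}^d$ to the all-ones vector and the vector of singular values:
\[
  \|X\|_1 = \sum_{i=1}^d 1\cdot s_i
          \leq \Bigl(\sum_{i=1}^d 1^2\Bigr)^{1/2}\Bigl(\sum_{i=1}^d s_i^2\Bigr)^{1/2}
          = \sqrt{d}\,\|X\|_2 .
\]
For the second inequality I would use the crude bound $s_i \leq s_1 = \|X\|_\infty$ valid for every $i$, whence $\sum_{i=1}^d s_i^2 \leq d\,\|X\|_\infty^2$, and therefore
\[
  \sqrt{d}\,\|X\|_2 = \sqrt{d}\,\Bigl(\sum_{i=1}^d s_i^2\Bigr)^{1/2}
                    \leq \sqrt{d}\cdot\sqrt{d}\,\|X\|_\infty = d\,\|X\|_\infty .
\]
Chaining the two displays yields $\|X\|_1 \leq \sqrt{d}\,\|X\|_2 \leq d\,\|X\|_\infty$.

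I do not anticipate any genuine obstacle here; this is a textbook fact (see~\cite{bhatia97}). The only point worth a moment's attention is that the dimension entering the estimates is the rank $d$, not the dimension of the ambient Hilbert space, which is precisely why one works with the list of nonzero singular values of $X$ (equivalently, the nonzero eigenvalues of $|X|$) rather than with all eigenvalues of $|X|$; beyond that the argument is a direct computation.
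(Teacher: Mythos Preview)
Your argument is correct and is the standard one: reduce to the nonzero singular values, apply Cauchy--Schwarz for $\|X\|_1 \leq \sqrt{d}\,\|X\|_2$, and bound each $s_i$ by $s_1$ for $\sqrt{d}\,\|X\|_2 \leq d\,\|X\|_\infty$. The paper does not actually prove this lemma; it states it with a citation to Bhatia~\cite{bhatia97} and a \qedsymbol, so your write-up in fact supplies more detail than the paper itself.
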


\begin{lemma}[Cf.~\cite{bhatia97}]
\label{norm1_trace} 
For any self-adjoint operator $X$,
\begin{align*}
  \phantom{======================:}
  \norm{X}_1=\max_{-\1 \leq Q \leq \1}\Tr(QX).
  \phantom{======================}\blacksquare
\end{align*}
\end{lemma}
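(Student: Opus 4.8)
The plan is to reduce the statement to the spectral decomposition of $X$. I would write $X=\sum_i\lambda_i\,\ketbra{e_i}{e_i}$ with real eigenvalues $\lambda_i$ and an orthonormal eigenbasis $\{\ket{e_i}\}$, so that $\norm{X}_1=\sum_i|\lambda_i|$ directly from the definition of the trace norm. I would also note at the outset that the constraint $-\1\leq Q\leq\1$ is an assertion about self-adjoint operators, so the optimization in the statement is to be understood as running over self-adjoint contractions, i.e.\ those $Q$ with $\norm{Q}_\infty\leq 1$.

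For the ``$\geq$'' direction --- and, simultaneously, to see that the supremum is attained, justifying the $\max$ --- I would exhibit an explicit optimizer: take $Q:=\sum_i\operatorname{sgn}(\lambda_i)\,\ketbra{e_i}{e_i}$, with $\operatorname{sgn}(0)$ fixed arbitrarily in $[-1,1]$ (say $0$). This $Q$ is self-adjoint with spectrum contained in $\{-1,0,1\}$, hence $-\1\leq Q\leq\1$, and
\begin{align*}
  \Tr(QX)=\sum_i\operatorname{sgn}(\lambda_i)\,\lambda_i=\sum_i|\lambda_i|=\norm{X}_1 .
\end{align*}

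For the ``$\leq$'' direction, let $Q$ be any self-adjoint operator with $-\1\leq Q\leq\1$. Then $\abs{\bra{e_i}Q\ket{e_i}}\leq 1$ for every $i$, so that
\begin{align*}
  \Tr(QX)=\sum_i\lambda_i\,\bra{e_i}Q\ket{e_i}\leq\sum_i|\lambda_i|\,\abs{\bra{e_i}Q\ket{e_i}}\leq\sum_i|\lambda_i|=\norm{X}_1 ;
\end{align*}
equivalently, this is H\"older's inequality $\abs{\Tr(QX)}\leq\norm{Q}_\infty\,\norm{X}_1$. Combining the two inequalities proves the lemma. I do not expect any genuine obstacle here: the argument is a one-line application of the spectral theorem, and the only steps worth stating carefully are that the operator ordering $-\1\leq Q\leq\1$ already encodes self-adjointness of $Q$, and that the constructed $Q$ attains the bound, so that the $\max$ in the statement is literal.
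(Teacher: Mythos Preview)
Your proof is correct. Note, however, that the paper does not actually prove this lemma: it is stated with a citation to Bhatia's \emph{Matrix Analysis} and closed immediately with a $\blacksquare$, so there is no ``paper's own proof'' to compare against. Your argument via the spectral decomposition of $X$ is the standard one and would be perfectly at home as a proof of this fact.
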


\begin{lemma}[Cf.~\cite{bhatia97}]
\label{T_norm1_inequality}
For \aw{any self-adjoint operator $X$ and any operator $T$,}
\begin{align*}
  \phantom{=======================}
  \norm{TXT^{\dagger}}_1 \leq \norm{T}_{\infty}^2 \norm{X}_1.
  \phantom{======================:}\blacksquare
\end{align*}
\end{lemma}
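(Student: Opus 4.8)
The final statement to prove is Lemma~\ref{T_norm1_inequality}: for any self-adjoint operator $X$ and any operator $T$, $\norm{TXT^\dagger}_1 \leq \norm{T}_\infty^2 \norm{X}_1$.

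The plan is to reduce the trace-norm estimate to a variational characterization of the trace norm. First I would invoke Lemma~\ref{norm1_trace}, which says that for a self-adjoint operator $Y$ one has $\norm{Y}_1 = \max_{-\1 \leq Q \leq \1} \Tr(QY)$. The operator $TXT^\dagger$ is self-adjoint (since $X$ is self-adjoint), so this characterization applies to it. Thus I would fix an optimal $Q$ with $-\1 \leq Q \leq \1$ achieving $\norm{TXT^\dagger}_1 = \Tr(Q\,TXT^\dagger)$, and rewrite this using cyclicity of the trace as $\Tr\big((T^\dagger Q T)\,X\big)$.

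Next I would observe that $T^\dagger Q T$ is self-adjoint and satisfies $-\norm{T}_\infty^2\,\1 \leq T^\dagger Q T \leq \norm{T}_\infty^2\,\1$: indeed, for any vector $\ket{v}$, $\bra{v} T^\dagger Q T \ket{v} = \bra{Tv} Q \ket{Tv}$, whose absolute value is at most $\norm{Tv}^2 \leq \norm{T}_\infty^2 \norm{v}^2$. Hence $Q' := \norm{T}_\infty^{-2}\,T^\dagger Q T$ satisfies $-\1 \leq Q' \leq \1$ (assuming $T \neq 0$; the case $T = 0$ is trivial). Applying Lemma~\ref{norm1_trace} to $X$ in the other direction, $\Tr(Q' X) \leq \norm{X}_1$, and therefore
\[
  \norm{TXT^\dagger}_1 = \Tr\big((T^\dagger Q T)\,X\big) = \norm{T}_\infty^2\,\Tr(Q' X) \leq \norm{T}_\infty^2\,\norm{X}_1,
\]
which is the claim.

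There is no real obstacle here; the only point requiring a moment of care is verifying the operator inequality $-\norm{T}_\infty^2\,\1 \leq T^\dagger Q T \leq \norm{T}_\infty^2\,\1$ from $-\1 \leq Q \leq \1$, which is the computation with $\ket{Tv}$ above and uses $\norm{T^\dagger Q T}_\infty \leq \norm{T}_\infty \norm{Q}_\infty \norm{T}_\infty \leq \norm{T}_\infty^2$. An alternative, essentially equivalent route would bypass Lemma~\ref{norm1_trace} and instead use the operator-monotonicity-style fact that $0 \leq A \leq B$ implies $\norm{A}_1 \leq \norm{B}_1$ together with the polar/Jordan decomposition $X = X_+ - X_-$ into positive and negative parts, estimating $\norm{TX_\pm T^\dagger}_1 = \Tr(TX_\pm T^\dagger) \leq \norm{T}_\infty^2 \Tr(X_\pm)$ and summing; but the variational argument is cleaner and is the one I would write up.
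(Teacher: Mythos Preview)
Your argument is correct. The paper itself does not supply a proof of Lemma~\ref{T_norm1_inequality}; it is stated with a reference to \cite{bhatia97} and closed with a $\blacksquare$, so there is nothing to compare against. Your route via the variational formula of Lemma~\ref{norm1_trace} is entirely natural here, since that lemma is already recorded in the appendix precisely for this purpose, and every step (self-adjointness of $TXT^\dagger$, cyclicity of the trace, the bound $-\norm{T}_\infty^2\1 \leq T^\dagger Q T \leq \norm{T}_\infty^2\1$) is sound.
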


The fidelity of two states is defined as
\begin{align*}
F(\rho,\sigma)= \Tr \sqrt{\sigma^{\frac{1}{2}} \rho \sigma^{\frac{1}{2}}}.
\end{align*}
When one of the arguments is pure, then 
\begin{align*}
  F(\rho,\ketbra{\psi}{\psi})
     =\sqrt{\Tr (\rho \ketbra{\psi}{\psi})}
     =\sqrt{\bra{\psi}\rho\ket{\psi}}.
\end{align*}

\begin{lemma}
\label{lemma:FvdG}
The fidelity is related to the trace norm as follows \cite{Fuchs1999}:
\begin{align*}
  1- F(\rho,\sigma) \leq \frac{1}{2}\|\rho-\sigma\|_1 \leq \sqrt{1-F(\rho,\sigma)^2} =: P(\rho,\sigma),
\end{align*}
where $P(\rho,\sigma)$ is the so-called purified distance,
or Battacharya distance, between quantum states.
\qed
\end{lemma}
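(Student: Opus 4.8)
\textbf{Proof strategy for Lemma \ref{lemma:FvdG} (Fuchs--van de Graaf inequalities).}

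The plan is to prove the two-sided bound $1-F(\rho,\sigma) \leq \tfrac12\|\rho-\sigma\|_1 \leq \sqrt{1-F(\rho,\sigma)^2}$ by reducing both inequalities to the special case of pure states, where they become elementary inner-product estimates, and then lifting back to mixed states via Uhlmann's theorem on one side and monotonicity of fidelity under partial trace on the other. First I would recall Uhlmann's theorem: there exist purifications $\ket{\varphi_\rho}$, $\ket{\varphi_\sigma}$ on an enlarged space such that $F(\rho,\sigma) = |\bracket{\varphi_\rho}{\varphi_\sigma}| = F(\varphi_\rho,\varphi_\sigma)$. For pure states $\ket{a},\ket{b}$ with $c := |\bracket{a}{b}|$, a direct computation of the eigenvalues of the rank-$\leq 2$ operator $\ketbra{a}{a} - \ketbra{b}{b}$ gives eigenvalues $\pm\sqrt{1-c^2}$, hence $\tfrac12\|\ketbra{a}{a}-\ketbra{b}{b}\|_1 = \sqrt{1-c^2}$, which is exactly $\sqrt{1-F^2}$ with equality in the pure-state case.

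For the \emph{upper} bound, I would use that the trace distance is monotone nonincreasing under the partial trace (a CPTP map): tracing the purifying system out of $\ketbra{\varphi_\rho}{\varphi_\rho} - \ketbra{\varphi_\sigma}{\varphi_\sigma}$ yields $\rho-\sigma$, so $\tfrac12\|\rho-\sigma\|_1 \leq \tfrac12\|\varphi_\rho-\varphi_\sigma\|_1 = \sqrt{1-F(\varphi_\rho,\varphi_\sigma)^2} = \sqrt{1-F(\rho,\sigma)^2}$, using the pure-state identity and $F(\varphi_\rho,\varphi_\sigma)=F(\rho,\sigma)$ from Uhlmann. For the \emph{lower} bound, I would instead exploit the variational characterization of the trace norm from Lemma \ref{norm1_trace}, $\|\rho-\sigma\|_1 = \max_{-\1\leq Q\leq\1}\Tr Q(\rho-\sigma)$. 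Taking $Q = \1 - 2P$ where $P$ is the projector onto the eigenspaces of $\rho-\sigma$ with negative eigenvalues — or more simply choosing $Q=\1-2\Pi$ for a suitable projector $\Pi$ so that $\Tr Q(\rho-\sigma) = \Tr(\1-2\Pi)(\rho-\sigma)$ — and then bounding $\Tr\Pi\rho + \Tr(\1-\Pi)\sigma \geq $ something controlled by fidelity; concretely, one shows $\tfrac12\|\rho-\sigma\|_1 \geq 1 - \Tr\sqrt{\rho}\sqrt{\sigma}$-type quantity. The cleanest route here is: pick purifications achieving Uhlmann's optimum, use $\tfrac12\|\varphi_\rho - \varphi_\sigma\|_1 \geq 1 - |\bracket{\varphi_\rho}{\varphi_\sigma}|$ (which for pure states follows from $\tfrac12\|\ketbra{a}{a}-\ketbra{b}{b}\|_1 = \sqrt{1-c^2} \geq 1-c$ since $\sqrt{1-c^2}=\sqrt{(1-c)(1+c)}\geq 1-c$ for $c\in[0,1]$), but this goes the wrong direction under partial trace. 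So instead I would bound directly: for any measurement $\{\Pi,\1-\Pi\}$, $\tfrac12\|\rho-\sigma\|_1 \geq \Tr\Pi\rho - \Tr\Pi\sigma$, and choosing $\Pi$ adapted to the spectral decomposition together with the operator-monotonicity of $\sqrt{\cdot}$ gives $\Tr\Pi\rho + \Tr(\1-\Pi)\sigma \geq F(\rho,\sigma)$ for an optimal $\Pi$ (this is the standard ``fidelity $\leq$ guessing probability'' bound), whence $1 - \tfrac12\|\rho-\sigma\|_1 \leq \Tr\Pi\sigma + \Tr(\1-\Pi)\rho$...

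Let me restructure: the lower bound is most transparently obtained from the upper bound applied to a different pair, or directly from $1 - F(\rho,\sigma) \leq 1 - F(\rho,\sigma)^2$ combined with... no. The honest main obstacle is precisely the lower bound $1-F \leq \tfrac12\|\rho-\sigma\|_1$, because partial-trace monotonicity works against us (it decreases both sides but we need control in the inconvenient direction). The standard resolution, which I would adopt: write $\tfrac12\|\rho-\sigma\|_1 = \max_\Pi \Tr\Pi(\rho-\sigma)$ over projectors $\Pi$ (from Lemma \ref{norm1_trace}), so it suffices to exhibit one $\Pi$ with $\Tr\Pi\rho - \Tr\Pi\sigma \geq 1 - F(\rho,\sigma)$, equivalently $\Tr\Pi\rho + \Tr(\1-\Pi)\sigma \geq 1 + F$... which is false. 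Rather, $1 - \Tr\Pi(\rho-\sigma) = \Tr(\1-\Pi)\rho + \Tr\Pi\sigma = 1 - \Tr\Pi\rho + \Tr\Pi\sigma$; we want this $\leq F(\rho,\sigma)$ for some $\Pi$. Now $\Tr(\1-\Pi)\rho + \Tr\Pi\sigma \geq \left(\sqrt{\Tr(\1-\Pi)\rho\,\Tr\Pi\sigma}\right)\cdot 2$... The key inequality I will invoke is $F(\rho,\sigma) = \min_{\{M_i\}} \sum_i \sqrt{\Tr M_i\rho}\sqrt{\Tr M_i\sigma}$ over POVMs (the measured-fidelity formula), applied to the two-outcome POVM $\{\Pi,\1-\Pi\}$: $F(\rho,\sigma) \leq \sqrt{\Tr\Pi\rho\,\Tr\Pi\sigma} + \sqrt{\Tr(\1-\Pi)\rho\,\Tr(\1-\Pi)\sigma}$. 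Choosing $\Pi$ the positive part of $\rho-\sigma$, and writing $p=\Tr\Pi\rho$, $q=\Tr\Pi\sigma$ (so $p-q = \tfrac12\|\rho-\sigma\|_1 =: t$ and $\Tr(\1-\Pi)\rho = 1-p$, $\Tr(\1-\Pi)\sigma = 1-q = 1-p+t$), I get $F \leq \sqrt{pq} + \sqrt{(1-p)(1-p+t)} \leq \sqrt{p(p-t)} + \sqrt{(1-p)(1-p+t)}$, and a short calculus argument (Cauchy--Schwarz on the two terms, or concavity) bounds the right side by $1-t$, giving $F \leq 1 - \tfrac12\|\rho-\sigma\|_1$ as desired. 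So the honest plan is: (i) state Uhlmann and the measured-fidelity formula, or prove the pure-state facts directly; (ii) upper bound via purification + partial-trace monotonicity; (iii) lower bound via the two-outcome POVM inequality with the Holevo--Helstrom-optimal projector plus an elementary Cauchy--Schwarz estimate. The main obstacle, as noted, is getting the lower bound's direction right, and I would spend the bulk of the write-up there; everything else is a few lines.
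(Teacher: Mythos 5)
The paper itself gives no proof of this lemma; it is quoted from Fuchs and van de Graaf \cite{Fuchs1999} with a \qed, so your argument has to stand on its own. The upper bound does: computing the eigenvalues $\pm\sqrt{1-c^2}$ of the difference of two pure states with overlap $c$, invoking Uhlmann-optimal purifications, and pushing through the monotonicity of the trace norm under partial trace is exactly the standard proof, and it is correct.

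The lower bound, however, is broken. The goal $1-F(\rho,\sigma)\le\frac12\|\rho-\sigma\|_1$ is a \emph{lower} bound $F\ge 1-t$ with $t:=\frac12\|\rho-\sigma\|_1$, whereas your concluding line ``giving $F\le 1-\frac12\|\rho-\sigma\|_1$'' is the reverse inequality, which is false in general (for two pure states with overlap $c$ it would read $\sqrt{1-c^2}\le 1-c$). The intermediate step is also false as stated: with $\Pi$ the Helstrom projector, $p=\Tr\Pi\rho$, $q=\Tr\Pi\sigma$, $p-q=t$, the classical Fuchs--van de Graaf inequality forces $\sqrt{pq}+\sqrt{(1-p)(1-q)}\ge 1-t$, the opposite of what you assert (take $p=3/4$, $q=1/4$: the left side is $\sqrt{3}/2\approx 0.87$ while $1-t=1/2$). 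Structurally, applying the measured-fidelity formula $F(\rho,\sigma)=\min_{\text{POVM}}\sum_m\sqrt{\Tr(M_m\rho)\Tr(M_m\sigma)}$ to one \emph{particular} POVM only ever yields an upper bound on $F$, so no choice of $\Pi$ can produce the needed lower bound. The repair is to run the argument through the \emph{minimizing} (Fuchs--Caves) POVM: for its outcome distributions $P^*,Q^*$ one has $F(\rho,\sigma)=F(P^*,Q^*)$, the elementary estimate $2-2F(P^*,Q^*)=\sum_i\bigl(\sqrt{p_i}-\sqrt{q_i}\bigr)^2\le\sum_i\bigl|\sqrt{p_i}-\sqrt{q_i}\bigr|\bigl(\sqrt{p_i}+\sqrt{q_i}\bigr)=\|P^*-Q^*\|_1$ gives $1-F(P^*,Q^*)\le\frac12\|P^*-Q^*\|_1$, and monotonicity of the trace distance under measurement gives $\frac12\|P^*-Q^*\|_1\le\frac12\|\rho-\sigma\|_1$. (Alternatively, the Powers--St{\o}rmer inequality $\|\rho-\sigma\|_1\ge\|\sqrt{\rho}-\sqrt{\sigma}\|_2^2=2-2\Tr(\sqrt{\rho}\sqrt{\sigma})\ge 2-2F(\rho,\sigma)$ gives the lower bound in one line.)
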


\begin{lemma}[{Pinsker's inequality, cf.~\cite{Schumacher2002}}]
\label{lemma:Pinsker}
The trace norm and relative entropy are related by 
\begin{align*}
  \phantom{======================:}
  \|\rho-\sigma\|_1 \leq \sqrt{2 \ln 2 S(\rho\|\sigma)}. 
  \phantom{======================}\blacksquare
\end{align*}
\end{lemma}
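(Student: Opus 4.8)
The plan is to reduce the quantum inequality to the classical (binary) Pinsker inequality by passing through an optimal two-outcome measurement, and then to prove the classical binary case by an elementary convexity argument. The two non-trivial inputs I would rely on are the Helstrom variational characterization of the trace distance and the data-processing (monotonicity) inequality for quantum relative entropy.

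First I would recall Helstrom's characterization of the trace distance: combining Lemma \ref{norm1_trace} with the fact that the maximizing self-adjoint $Q$ with $-\1 \leq Q \leq \1$ may be taken as $Q = P_+ - P_-$, where $P_\pm$ are the spectral projectors onto the positive and negative parts of the self-adjoint operator $\rho - \sigma$, one obtains
\[
  \tfrac12\|\rho-\sigma\|_1 = \Tr\bigl(P_+(\rho-\sigma)\bigr) = \max_{0\leq M\leq\1}\Tr\bigl(M(\rho-\sigma)\bigr).
\]
Fix $M := P_+$ and consider the measurement channel $\mathcal{M}(\tau) = \Tr(M\tau)\,\kettbra{0}{0} + \Tr((\1-M)\tau)\,\kettbra{1}{1}$, which sends $\rho$ and $\sigma$ to the Bernoulli distributions $p = (\Tr(M\rho),\,1-\Tr(M\rho))$ and $q = (\Tr(M\sigma),\,1-\Tr(M\sigma))$, respectively. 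By the choice of $M$ one has $\|p-q\|_1 = 2\bigl|\Tr(M(\rho-\sigma))\bigr| = \|\rho-\sigma\|_1$.

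Next I would invoke the data-processing inequality for the quantum relative entropy under the CPTP map $\mathcal{M}$, namely $S(\rho\|\sigma) \geq S\bigl(\mathcal{M}(\rho)\,\|\,\mathcal{M}(\sigma)\bigr) = D(p\|q)$, the right-hand side being the classical Kullback--Leibler divergence (in bits) of $p$ and $q$. It then remains to establish the classical binary Pinsker bound $D(p\|q) \geq \frac{1}{2\ln 2}\|p-q\|_1^2$. Writing $p,q\in[0,1]$ for the Bernoulli parameters and passing to natural logarithms, this is equivalent to $g(p,q) := p\ln\frac pq + (1-p)\ln\frac{1-p}{1-q} - 2(p-q)^2 \geq 0$. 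For fixed $q$ one checks $g(q,q)=0$, $\partial_p g(q,q)=0$, and $\partial_p^2 g(p,q) = \frac{1}{p(1-p)} - 4 \geq 0$ since $p(1-p)\leq\frac14$; hence $g(\cdot,q)$ is convex with minimum value $0$ attained at $p=q$, so $g\geq 0$ throughout. Chaining the three estimates yields $S(\rho\|\sigma) \geq D(p\|q) \geq \frac{1}{2\ln 2}\|\rho-\sigma\|_1^2$, which rearranges to the asserted bound.

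The main obstacle is conceptual rather than computational: the reduction rests on monotonicity of the quantum relative entropy under CPTP maps (Lindblad--Uhlmann), which is a substantive ingredient; a fully self-contained treatment would need to either cite it or at least prove monotonicity under the measurement channel $\mathcal{M}$ (which in turn follows from joint convexity of relative entropy, or from operator concavity of the logarithm). The remaining pieces — the Helstrom formula and the one-variable calculus behind binary Pinsker — are routine. The only bookkeeping point requiring care is the base-$2$ versus base-$e$ logarithm, which is precisely where the factor $2\ln 2$ (rather than $2$) enters the final constant.
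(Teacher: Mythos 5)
Your proof is correct. The paper itself offers no proof of this lemma --- it is stated as a known fact with a citation to Schumacher and Westmoreland --- so there is nothing to compare against; your argument is the standard one: reduce to the binary classical case via the Helstrom measurement $M=P_+$ (so that $\|p-q\|_1=\|\rho-\sigma\|_1$), apply monotonicity of the relative entropy under the measurement channel, and verify binary Pinsker by the second-derivative bound $\partial_p^2 g = \frac{1}{p(1-p)}-4\geq 0$. The base-$2$ bookkeeping giving the constant $2\ln 2$ is handled correctly, and you rightly identify the Lindblad--Uhlmann monotonicity as the one substantive external ingredient.
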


\begin{lemma}[{Uhlmann~\cite{UHLMANN1976}}]
\label{lemma:was-dem-einen-sin-Uhlmann}
Let $\rho^A$ and $\sigma^A$ be two quantum states with fidelity $F(\rho^A,\sigma^A)$. Let $\rho^{AB}$ and $\sigma^{AC}$ be purifications of these two states, then there exists an isometry $V:{B\to C} $ such that  
\begin{align*}
  \phantom{==============}
  F\left( (\1_A \otimes V^{B\to C})\rho^{AB}(\1_A \otimes V^{B\to C})^{\dagger},\sigma^{AC} \right) = F(\rho^A,\sigma^A).
  \phantom{=============:}\blacksquare
\end{align*}
\end{lemma}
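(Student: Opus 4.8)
The plan is to reduce the statement to a pure-state overlap problem and then solve it with the polar decomposition. Since $\rho^{AB}=\ketbra{\phi}{\phi}$ and $\sigma^{AC}=\ketbra{\psi}{\psi}$ are pure, and applying an isometry $V:B\to C$ to the purifying factor leaves the state pure with unchanged $A$-marginal, the fidelity in question is just $\abs{\bra{\psi}(\1_A\otimes V)\ket{\phi}}$. So the theorem amounts to showing
\[
  \sup_{V:\,B\to C\text{ isometry}} \abs{\bra{\psi}(\1_A\otimes V)\ket{\phi}} = F(\rho^A,\sigma^A),
\]
with the supremum attained. The inequality ``$\leq$'' is immediate from monotonicity of fidelity under the partial trace $\Tr_C$ (a CPTP map), since $(\1_A\otimes V)\ketbra{\phi}{\phi}(\1_A\otimes V)^\dagger$ has $A$-marginal exactly $\rho^A$ and $\sigma^{AC}$ has $A$-marginal $\sigma^A$. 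The real content is to exhibit a maximizing $V$.

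For the achievability I would first normalize dimensions: restricting $B$ and $C$ to the supports of the respective reduced states and padding if necessary, I may assume $\dim B=\dim C=\dim A=:d$, so that an isometry $V$ is a genuine unitary; the general case follows by composing with an isometric inclusion of the ``used'' subspace into $C$. Fix the computational bases of $A$, $B$, $C$ and the unnormalized vector $\ket{\Phi}=\sum_i\ket{i}_A\ket{i}$, and set $\ket{\rho}:=(\sqrt{\rho^A}\otimes\1)\ket{\Phi}$, $\ket{\sigma}:=(\sqrt{\sigma^A}\otimes\1)\ket{\Phi}$; these are purifications of $\rho^A$ and $\sigma^A$, and since any two purifications on the same bipartite space differ by a unitary on the purifying factor, $\ket{\phi}=(\1\otimes U_1)\ket{\rho}$ and $\ket{\psi}=(\1\otimes U_2)\ket{\sigma}$ for unitaries $U_1$ on $B$, $U_2$ on $C$. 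Using the ricochet/transpose identity $\bra{\Phi}(X\otimes Y)\ket{\Phi}=\Tr(XY^{T})$ one gets, for any unitary $V$,
\[
  \bra{\psi}(\1_A\otimes V)\ket{\phi}
    = \bra{\Phi}\bigl(\sqrt{\sigma^A}\sqrt{\rho^A}\otimes U_2^\dagger V U_1\bigr)\ket{\Phi}
    = \Tr\!\bigl(\sqrt{\sigma^A}\sqrt{\rho^A}\,(U_2^\dagger V U_1)^{T}\bigr).
\]
As $V$ ranges over all unitaries on $C$, the operator $W:=(U_2^\dagger V U_1)^{T}$ ranges over all unitaries, so the supremum of the modulus equals $\max_{W\text{ unitary}}\abs{\Tr(\sqrt{\sigma^A}\sqrt{\rho^A}\,W)}=\norm{\sqrt{\sigma^A}\sqrt{\rho^A}}_1=\norm{\sqrt{\rho^A}\sqrt{\sigma^A}}_1=F(\rho^A,\sigma^A)$, by the variational formula for the trace norm (cf.\ Lemma~\ref{norm1_trace}): writing $M=\sqrt{\sigma^A}\sqrt{\rho^A}$ in polar form $M=U_p\abs{M}$ and taking $W=U_p^\dagger$, with $U_p$ extended to a unitary when $M$ is singular. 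Translating this optimal $W$ back gives an explicit maximizing isometry $V$, so the supremum is attained.

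The main obstacle is not conceptual but careful bookkeeping: keeping the transpose in the ricochet identity consistent with the chosen bases, and handling the dimension/rank mismatch (minimal purifications versus arbitrary purifying spaces, and the extension of a partial isometry to a unitary when $\sqrt{\rho^A}\sqrt{\sigma^A}$ fails to be invertible). Once these points are pinned down, the equality is precisely the content of the polar decomposition together with $\norm{M}_1=\max\{\abs{\Tr(MW)}:W\text{ unitary}\}$.
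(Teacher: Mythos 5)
Your proof is correct. Note first that the paper itself offers no proof of this lemma: it is stated as a cited result (Uhlmann~\cite{UHLMANN1976}, Jozsa~\cite{Jozsa1994_2}) with the qed box attached to the statement, so there is no in-paper argument to compare against. What you give is the standard proof of Uhlmann's theorem: the reduction to the pure-state overlap $\abs{\bra{\psi}(\1_A\otimes V)\ket{\phi}}$, the upper bound via monotonicity under $\Tr_C$, the parametrization of all purifications as $(\sqrt{\rho^A}\otimes U)\ket{\Phi}$, the ricochet identity, and finally $\max_W\abs{\Tr(MW)}=\norm{M}_1$ via polar decomposition. All steps check out, including $\norm{\sqrt{\sigma^A}\sqrt{\rho^A}}_1=\norm{\sqrt{\rho^A}\sqrt{\sigma^A}}_1=F(\rho^A,\sigma^A)$ with the paper's definition of fidelity. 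Two small remarks. First, your citation of Lemma~\ref{norm1_trace} is only a ``cf.'': that lemma is the variational formula for self-adjoint $X$ over Hermitian contractions, whereas you need $\norm{M}_1=\max_{W\ \mathrm{unitary}}\abs{\Tr(MW)}$ for general $M$; since you supply the polar-decomposition argument directly, nothing is lost. Second, the dimension bookkeeping you flag is genuinely the only delicate point: the construction naturally produces an isometry on $\mathrm{supp}(\rho^B)$, and extending it to an isometry on all of $B$ into $C$ requires $|B|\leq|C|$ --- an implicit hypothesis of the lemma as stated (otherwise no isometry $B\to C$ exists at all); your handling of this is adequate.
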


A consequence of this, due to \cite[Lemma~2.2]{Devetak2008_1}, is as follows.
\begin{lemma}
Let $\rho^A$ and $\sigma^A$ be two quantum states with trace distance 
$\frac12 \|\rho^A-\sigma^A\|_1 \leq \epsilon$, and
let $\rho^{AB}$ and $\sigma^{AC}$ be purifications of these two states.
Then there exists an isometry $V:{B\to C}$ such that  
\begin{align*}
  \phantom{=============:}
  \left\| (\1_A \otimes V^{B\to C})\rho^{AB} (\1_A \otimes V^{B\to C})^{\dagger}
                                    - \sigma^{AC} \right\|_1  \leq \sqrt{\epsilon(2-\epsilon)} \,.
  \phantom{=============}\blacksquare
\end{align*}
\end{lemma}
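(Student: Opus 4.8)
The plan is to reduce the statement to Uhlmann's theorem (Lemma~\ref{lemma:was-dem-einen-sin-Uhlmann}) by passing between trace distance and fidelity using the Fuchs--van de Graaf inequalities (Lemma~\ref{lemma:FvdG}). First I would apply Lemma~\ref{lemma:FvdG} to the hypothesis $\frac12\|\rho^A-\sigma^A\|_1\leq\epsilon$ to get a lower bound on the fidelity of the reduced states, namely
\[
  F(\rho^A,\sigma^A) \geq 1-\tfrac12\|\rho^A-\sigma^A\|_1 \geq 1-\epsilon .
\]
Then I would invoke Uhlmann's theorem: since $\rho^{AB}$ and $\sigma^{AC}$ are purifications of $\rho^A$ and $\sigma^A$, there is an isometry $V\colon B\to C$ such that the purified states $\tilde\rho^{AC}:=(\1_A\otimes V)\rho^{AB}(\1_A\otimes V)^\dagger$ and $\sigma^{AC}$ satisfy $F(\tilde\rho^{AC},\sigma^{AC})=F(\rho^A,\sigma^A)\geq 1-\epsilon$.

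Next I would translate this fidelity bound back into a trace-distance bound, again via Lemma~\ref{lemma:FvdG}, this time using the upper estimate $\tfrac12\|\tilde\rho^{AC}-\sigma^{AC}\|_1\leq\sqrt{1-F(\tilde\rho^{AC},\sigma^{AC})^2}$. Writing $F:=F(\tilde\rho^{AC},\sigma^{AC})\geq 1-\epsilon$, and noting that $t\mapsto 1-t^2$ is decreasing on $[0,1]$, we get $1-F^2\leq 1-(1-\epsilon)^2=\epsilon(2-\epsilon)$, hence
\[
  \tfrac12\left\|(\1_A\otimes V)\rho^{AB}(\1_A\otimes V)^\dagger-\sigma^{AC}\right\|_1 \leq \sqrt{\epsilon(2-\epsilon)},
\]
and multiplying by $2$—or rather, noting that the claimed bound is exactly $\sqrt{\epsilon(2-\epsilon)}$ on the full trace norm, so one keeps the factor $\tfrac12$ absorbed—wait, I would double-check the constant: Lemma~\ref{lemma:FvdG} gives $\tfrac12\|\cdot\|_1\leq\sqrt{1-F^2}$, so $\|\cdot\|_1\leq 2\sqrt{\epsilon(2-\epsilon)}$, which is weaker than stated; the sharper route is to observe that when one argument is (a purification, hence) handled via the purified distance $P$, one has the tighter chain, or alternatively to apply the fidelity bound directly in the form used in \cite[Lemma~2.2]{Devetak2008_1}. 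In any case the essential content is the two-step fidelity/trace-distance conversion sandwiching Uhlmann's theorem.

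The only mild subtlety—and the step I would be most careful about—is bookkeeping of the constants and of the direction of each Fuchs--van de Graaf inequality, together with making sure the isometry produced by Uhlmann's theorem acts only on the purifying systems $B,C$ and leaves $A$ untouched, which is exactly the form in which Lemma~\ref{lemma:was-dem-einen-sin-Uhlmann} is stated; there is no real obstacle beyond this. Since the statement is quoted verbatim from \cite[Lemma~2.2]{Devetak2008_1}, the cleanest presentation is simply to cite that reference and sketch the two-line argument above.
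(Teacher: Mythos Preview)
Your approach is exactly the standard one, and it is also what the paper does: the paper gives no proof of its own, it merely states the lemma as ``a consequence of [Uhlmann's theorem], due to \cite[Lemma~2.2]{Devetak2008_1}''. So there is nothing to compare against beyond the two-step Fuchs--van~de~Graaf/Uhlmann/Fuchs--van~de~Graaf sandwich you outline.

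Your hesitation about the constant is well-founded and worth stating clearly rather than hedging. Since both $(\1_A\otimes V)\rho^{AB}(\1_A\otimes V)^\dagger$ and $\sigma^{AC}$ are \emph{pure}, the upper Fuchs--van~de~Graaf inequality is an equality: $\tfrac12\|\psi-\phi\|_1=\sqrt{1-F(\psi,\phi)^2}$. Combined with $F\geq 1-\epsilon$ from the lower Fuchs--van~de~Graaf bound and Uhlmann, this yields
\[
  \bigl\|(\1_A\otimes V)\rho^{AB}(\1_A\otimes V)^\dagger-\sigma^{AC}\bigr\|_1 \;\leq\; 2\sqrt{\epsilon(2-\epsilon)},
\]
i.e.\ a factor $2$ larger than what is printed. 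There is no way to remove this factor by being cleverer: for pure states the trace-distance/fidelity relation is tight. The statement as written in the paper is simply off by a factor of~$2$. This is harmless for every application in the paper---in particular in the proof of Theorem~\ref{theorem:generic optimal rate} the bound only needs to vanish as $\delta\to 0$---so your plan is correct and complete once you record the constant as $2\sqrt{\epsilon(2-\epsilon)}$.
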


\begin{lemma}[{Fannes~\cite{Fannes1973}; Audenaert~\cite{Audenaert2007}}]
\label{Fannes-Audenaert inequality}
Let $\rho$ and $\sigma$ be two states on $d$-dimensional space with trace distance 
$\frac12\|\rho-\sigma\|_1 \leq \epsilon$, then
\begin{align*}
  |S(\rho)-S(\sigma)| \leq \epsilon\log d + h(\epsilon),
\end{align*}
where $h(\epsilon)=-\epsilon \log \epsilon -(1-\epsilon)\log (1-\epsilon)$ \aw{is the binary entropy}.
\end{lemma}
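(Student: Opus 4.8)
The plan is to peel this statement down to an elementary inequality about Shannon entropies and then prove that by a one-page coupling argument; the only genuinely non-trivial ingredient will be a standard spectral-perturbation bound, so that is where I expect the real work to sit.

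First I would reduce to the commutative case. Since the von Neumann entropy depends only on the spectrum, let $p=(p_1\geq\cdots\geq p_d)$ and $q=(q_1\geq\cdots\geq q_d)$ be the eigenvalues of $\rho$ and $\sigma$ listed in non-increasing order (padded by zeros to length $d$), viewed as probability vectors on $\{1,\dots,d\}$, so that $S(\rho)=H(p)$ and $S(\sigma)=H(q)$, with $H$ the Shannon entropy. The input I would invoke here is the Lidskii--Wielandt/Mirsky perturbation inequality, which states that the trace norm dominates the $\ell_1$-distance between ordered spectra:
\[
  \|p-q\|_1=\sum_{i=1}^{d}|p_i-q_i|\;\leq\;\|\rho-\sigma\|_1\;\leq\;2\epsilon .
\]
After this it suffices to show that for any two probability vectors $p,q$ on $d$ points with $T:=\tfrac12\|p-q\|_1\leq\epsilon$ one has $|H(p)-H(q)|\leq \epsilon\log d+h(\epsilon)$.

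For that classical inequality I would use a maximal coupling: choose jointly distributed random variables $(U,V)$ with $U\sim p$, $V\sim q$ and $\Pr[U\neq V]=T$, and set $Z:=\mathbf{1}[U\neq V]$, so that $H(Z)=h(T)$. By the chain rule and subadditivity, $H(U)\leq H(U,Z)=H(Z)+H(U\mid Z)=h(T)+(1-T)H(U\mid Z=0)+T\,H(U\mid Z=1)$, while $H(V)\geq H(V\mid Z)\geq(1-T)H(V\mid Z=0)$. On $\{Z=0\}$ we have $U=V$, hence $H(U\mid Z=0)=H(V\mid Z=0)$, and $H(U\mid Z=1)\leq\log d$ trivially; subtracting the two estimates gives $H(U)-H(V)\leq h(T)+T\log d$. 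Since $t\mapsto h(t)+t\log d$ is non-decreasing on $[0,d/(d+1)]$ (its derivative there is $\log\frac{d(1-t)}{t}\geq 0$), the bound at $t=\epsilon$ follows for $\epsilon\leq d/(d+1)$; in the complementary range $h(\epsilon)+\epsilon\log d\geq\log d\geq|H(p)-H(q)|$, so the inequality holds for all $\epsilon$. Swapping the roles of $p$ and $q$ gives the absolute value, and translating back through the reduction yields $|S(\rho)-S(\sigma)|\leq\epsilon\log d+h(\epsilon)$. (One could equally run Fannes' original summation argument, bounding $\sum_i|p_i\log p_i-q_i\log q_i|$ after the same reduction, but the coupling route is shorter.)

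The main obstacle, such as it is, lies entirely in the reduction step: one must know, or re-derive, that the trace distance controls the $\ell_1$-distance of the sorted eigenvalue vectors, and this is the single place where the non-commutativity of $\rho$ and $\sigma$ is absorbed; everything downstream is classical and a few lines. I should also flag that this route produces exactly the $\log d$ form stated here and not Audenaert's optimal $\log(d-1)$: the conditional law of $U$ given $U\neq V$ need not avoid any symbol, so the coupling argument cannot recover the ``$-1$'', and sharpening it requires the more delicate concavity analysis of \cite{Audenaert2007}. Since only the $\log d$ form is used elsewhere in the paper, the argument above suffices.
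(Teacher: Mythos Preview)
The paper does not actually prove this lemma: it is stated in Appendix~\ref{Miscellaneous_Facts} as a known fact, with citations to Fannes~\cite{Fannes1973} and Audenaert~\cite{Audenaert2007}, and is used as a black box throughout. So there is no ``paper's own proof'' to compare against.

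That said, your proposed argument is correct and self-contained. The reduction via the Mirsky/Lidskii inequality $\sum_i |p_i^\downarrow - q_i^\downarrow| \leq \|\rho-\sigma\|_1$ is precisely the right way to pass to the commutative case, and the maximal-coupling derivation of the classical continuity bound is clean and complete: the chain-rule/subtraction step, the identification $H(U\mid Z=0)=H(V\mid Z=0)$, the monotonicity of $t\mapsto h(t)+t\log d$ on $[0,d/(d+1)]$, and the trivial edge case $\epsilon>d/(d+1)$ are all handled correctly. Your closing remark that this route recovers only the $\log d$ constant rather than Audenaert's optimal $\log(d-1)$ is accurate, and, as you note, only the $\log d$ form is ever invoked in the paper. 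In short: nothing to fix; you have supplied a proof the paper chose to outsource.
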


\aw{There is also an extension} of the Fannes inequality for the conditional entropy; 
this lemma is very useful \aw{especially} when the dimension of the 
system \aw{conditioned on} is unbounded.

\begin{lemma}[{Alicki-Fannes~\cite{Alicki2004}; Winter~\cite{Winter2016}}]
\label{AFW lemma}
Let $\rho$ and $\sigma$ be two states on a bipartite Hilbert space 
$A\otimes B$ with trace distance $\frac12\|\rho-\sigma\|_1 \leq \epsilon$, then
\begin{align*}
  \phantom{==================}
  |S(A|B)_{\rho}-S(A|B)_{\sigma}| \leq 2\epsilon \log |A| + 2h(\epsilon).
  \phantom{==================}\blacksquare
\end{align*}
\end{lemma}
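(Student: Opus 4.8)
The plan is to adapt the ``classical-flag coupling'' argument of Winter~\cite{Winter2016}, which is the tight refinement of the original Alicki-Fannes method~\cite{Alicki2004}. Write $\epsilon' := \frac12\norm{\rho-\sigma}_1 \leq \epsilon$; I will prove a bound in terms of $\epsilon'$ and pass to $\epsilon$ only at the very end. First I would take the Hahn-Jordan decomposition $\rho-\sigma = \Delta_+-\Delta_-$ into orthogonal positive and negative parts; since $\Tr(\rho-\sigma)=0$ we have $\Tr\Delta_+=\Tr\Delta_-=\epsilon'$. The key structural observation is that $\rho+\Delta_- = \sigma+\Delta_+ = (1+\epsilon')\,\Omega^{AB}$ for a single state $\Omega$, which therefore simultaneously dominates $\rho$ and $\sigma$.

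I would then lift this to a classically flagged pair of states on $\mathbb{C}^2\otimes A\otimes B$: with a flag register $X$, put
\[
  \hat\rho^{XAB} = \frac{1}{1+\epsilon'}\ketbra{0}{0}^X\!\otimes\rho + \frac{1}{1+\epsilon'}\ketbra{1}{1}^X\!\otimes\Delta_-,
  \qquad
  \hat\sigma^{XAB} = \frac{1}{1+\epsilon'}\ketbra{0}{0}^X\!\otimes\sigma + \frac{1}{1+\epsilon'}\ketbra{1}{1}^X\!\otimes\Delta_+.
\]
These are genuine states, they have the \emph{same} $AB$-marginal $\Omega$, and their flag marginal is in both cases the distribution $\bigl(\frac{1}{1+\epsilon'},\frac{\epsilon'}{1+\epsilon'}\bigr)$, with entropy $h\!\bigl(\frac{\epsilon'}{1+\epsilon'}\bigr)$. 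Since the flag is classical, $S(A|BX)_{\hat\rho} = \frac{1}{1+\epsilon'}S(A|B)_\rho + \frac{\epsilon'}{1+\epsilon'}S(A|B)_{\Delta_-/\epsilon'}$, and analogously for $\hat\sigma$; solving for $S(A|B)_\rho$ and $S(A|B)_\sigma$ and subtracting gives
\[
  S(A|B)_\rho - S(A|B)_\sigma = (1+\epsilon')\bigl(S(A|BX)_{\hat\rho} - S(A|BX)_{\hat\sigma}\bigr) + \epsilon'\bigl(S(A|B)_{\Delta_+/\epsilon'} - S(A|B)_{\Delta_-/\epsilon'}\bigr).
\]

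The last bracket is at most $2\log|A|$ in absolute value, contributing $2\epsilon'\log|A|$. For the first bracket I would use the elementary fact that for any state classical on $X$ one has $0 \leq S(A|B) - S(A|BX) = I(A:X|B) \leq H(X)$, the upper bound because $I(A:X|B) = H(X|B)-H(X|AB) \leq H(X)$ (with $H(X|AB)\geq 0$, valid as $X$ is classical). Applying this to $\hat\rho$ from below and to $\hat\sigma$ from above, and using $S(A|B)_{\hat\rho}=S(A|B)_{\hat\sigma}$ (equal $AB$-marginals), one gets $|S(A|BX)_{\hat\rho}-S(A|BX)_{\hat\sigma}| \leq h\!\bigl(\frac{\epsilon'}{1+\epsilon'}\bigr)$. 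Collecting the two estimates, $|S(A|B)_\rho - S(A|B)_\sigma| \leq 2\epsilon'\log|A| + (1+\epsilon')h\!\bigl(\frac{\epsilon'}{1+\epsilon'}\bigr)$.

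The conceptual content is entirely in the coupling construction, and the thing to get right is the normalization $1+\epsilon'$ so that $\hat\rho,\hat\sigma$ are simultaneously normalized \emph{and} share the quantum marginal $\Omega$. The main (mild) obstacle is then purely analytic: one must check the elementary inequality $(1+\epsilon')h\!\bigl(\frac{\epsilon'}{1+\epsilon'}\bigr) \leq 2h(\epsilon')$ together with the monotonicity needed to replace $\epsilon'$ by $\epsilon$ on the error range of interest (for very large error one simply falls back on the trivial bound $2\log|A|$), which is where the precise numerical shape $2\epsilon\log|A|+2h(\epsilon)$ of the stated bound is finally pinned down.
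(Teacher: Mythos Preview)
The paper does not actually prove this lemma: it is quoted from \cite{Alicki2004,Winter2016} with the $\blacksquare$ appended directly to the statement. Your argument is precisely the coupling proof of \cite{Winter2016} and is correct; in particular you arrive at the sharper remainder $(1+\epsilon')h\!\bigl(\tfrac{\epsilon'}{1+\epsilon'}\bigr)$, which is the form actually established there, rather than the looser $2h(\epsilon)$ recorded in the present paper.

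One minor caveat on your closing remark. The inequality $(1+\epsilon')h\!\bigl(\tfrac{\epsilon'}{1+\epsilon'}\bigr)\leq 2h(\epsilon')$ fails once $\epsilon'$ exceeds roughly $0.72$, and in that range the fallback to the trivial bound $2\log|A|$ does not by itself recover $2\epsilon\log|A|+2h(\epsilon)$ when $|A|$ is large; so the passage from your (tight) bound to the specific constant $2h(\epsilon)$ quoted here is not quite as mechanical as you indicate. This is a wrinkle in the particular form the paper chose to state, not a gap in your proof: the lemma is only ever invoked in the paper with $\epsilon\to 0$, where your bound is strictly stronger and the relaxation to $2h(\epsilon)$ is immediate.
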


\begin{lemma}
\label{full_support_lemma}
Let $\rho$ be a state with full support on the Hilbert space \aw{$A$}, i.e.~it 
\aw{has positive minimum} eigenvalue $\lambda_{\min}$, and
let $\ket{\psi}^{AR}$ be a purification of $\rho$ on the Hilbert space \aw{$A \otimes R$}. 
Then any purification of another state $\sigma$ on \aw{$A$} is of the form 
\begin{align*}
  (\1_A \otimes T) \ket{\psi}^{AR},
\end{align*}
where $T$ is an operator acting on system $R$ with $\| T \|_{\infty} \le \frac{1}{\sqrt{\lambda_{\min}}}$.
\begin{proof}
Let $\rho=\sum_i \lambda_i \ketbra{e_i}{e_i}$ and $\sigma=\sum_j \mu_j \ketbra{f_j}{f_j}$ be spectral decompositions  of the states. The purification of $\rho$  is  $\ket{\psi}^{AR}=\sum_i \sqrt{\lambda_i} \ket{e_i} \ket{i}$. Define $\ket{\phi}^{AR}=\sum_j \sqrt{\mu_j} \ket{f_j} \ket{j}$.
Any purification of the state $\sigma$ is of the form  
$\1_A \otimes V \ket{\phi}^{AR}$ where $V$ is an isometry acting on system $R$. 
Write the eigenbasis $\set{\ket{f_j}}$ as linear combination of eigenbasis $\set{\ket{e_j}}$. Then, we have $\ket{\phi}^{AR}=\sum_{i,j} \sqrt{\mu_j} \alpha_{ij} \ket{e_i} \ket{j}$. Define the operator  $P=\sum_{jk} p_{jk} \ketbra{j}{k}$ where $p_{jk}=\alpha_{kj} \sqrt{\frac{\mu_j}{\lambda_k}}$. It is immediate to see that
\begin{align*}
    \ket{\phi}^{AR}=(\1_A \otimes P) \ket{\psi}^{AR}. 
\end{align*}
Thus, we have $(\1_A \otimes V) \ket{\phi}^{AR} = (\1_A \otimes VP) \ket{\psi}^{AR}$. 
Defining $T=VP$, we then have 
\begin{align*}
\lambda_{\max} (T^{\dagger}T)=\lambda_{\max} (P^{\dagger}P)   \leq \Tr (P^{\dagger}P) 
 =\sum_{j,k}|p_{jk}|^2  
 = \sum_{j,k}\frac{|\alpha_{kj}|^2\mu_j}{\lambda_k}  
 \leq \frac{1}{\lambda_{\min}},
\end{align*}
where the last inequality follows \aw{from the orthonormality of} the basis $\set{\ket{f_j}}$. 
\end{proof}
\end{lemma}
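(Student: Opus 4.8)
The plan is to reduce the statement to two standard ingredients: the isometric freedom in purifications, and an elementary bound of an operator norm by a Hilbert--Schmidt norm. First I would invoke that, on a fixed reference space $R$ of large enough dimension, any two purifications of the same state on $A$ differ only by an isometry acting on $R$. Consequently it is enough to exhibit \emph{one} purification of $\sigma$ of the form $(\1_A\otimes P)\ket{\psi}^{AR}$ together with the bound $\norm{P}_\infty\le\lambda_{\min}^{-1/2}$: any other purification of $\sigma$ is then $(\1_A\otimes V)(\1_A\otimes P)\ket{\psi}^{AR}=(\1_A\otimes VP)\ket{\psi}^{AR}$ for some isometry $V$, and $\norm{VP}_\infty\le\norm{V}_\infty\norm{P}_\infty=\norm{P}_\infty$ because $V$ is an isometry, so $T:=VP$ works.

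To construct $P$ I would pass to the canonical purifications built from spectral decompositions $\rho=\sum_i\lambda_i\ketbra{e_i}{e_i}$ (full rank, so $\lambda_i\ge\lambda_{\min}>0$) and $\sigma=\sum_j\mu_j\ketbra{f_j}{f_j}$, namely $\ket{\psi}^{AR}=\sum_i\sqrt{\lambda_i}\ket{e_i}\ket{i}$ and a candidate purification $\ket{\phi}^{AR}=\sum_j\sqrt{\mu_j}\ket{f_j}\ket{j}$ of $\sigma$. Writing $\ket{f_j}=\sum_i\alpha_{ij}\ket{e_i}$ and matching the coefficients of $\ket{e_k}\ket{j}$ in the identity $(\1_A\otimes P)\ket{\psi}^{AR}=\ket{\phi}^{AR}$ forces $P=\sum_{j,k}\alpha_{kj}\sqrt{\mu_j/\lambda_k}\,\ketbra{j}{k}$ acting on $R$; this is the one place where the full-support hypothesis is used, as it is exactly what permits dividing by $\sqrt{\lambda_k}$. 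A one-line check then confirms $(\1_A\otimes P)\ket{\psi}^{AR}=\ket{\phi}^{AR}$.

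For the norm estimate I would bound the operator norm by the trace:
\begin{align*}
  \norm{P}_\infty^2 &= \lambda_{\max}(P^\dagger P) \le \Tr(P^\dagger P) = \sum_{j,k}|\alpha_{kj}|^2\frac{\mu_j}{\lambda_k} \\
                    &\le \frac{1}{\lambda_{\min}}\sum_j\mu_j\sum_k|\alpha_{kj}|^2 = \frac{1}{\lambda_{\min}},
\end{align*}
where the last equality uses $\sum_k|\alpha_{kj}|^2=1$ (each $\ket{f_j}$ is a unit vector expanded in the orthonormal basis $\{\ket{e_k}\}$) and $\sum_j\mu_j=\Tr\sigma=1$. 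Taking square roots gives $\norm{P}_\infty\le\lambda_{\min}^{-1/2}$, and setting $T=VP$ finishes the proof.

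I do not expect a genuine obstacle here; the argument is elementary linear algebra. The two steps one must be careful about are (i) the reduction via the isometric freedom in $R$, which is what collapses ``any purification of $\sigma$'' to the single explicit operator $P$, and (ii) estimating $\norm{P}_\infty$ through $\Tr(P^\dagger P)$ rather than directly --- the trace bound telescopes at once using the two normalization identities, whereas a direct estimate is clumsy. It is also worth stating explicitly in the write-up that the full support of $\rho$ is used essentially, being precisely the hypothesis that makes $P$ well-defined.
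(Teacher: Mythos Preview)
Your proposal is correct and follows essentially the same route as the paper: both take canonical Schmidt purifications, construct the same explicit operator $P=\sum_{j,k}\alpha_{kj}\sqrt{\mu_j/\lambda_k}\,\ketbra{j}{k}$, set $T=VP$ with $V$ the isometry relating purifications, and bound $\norm{P}_\infty^2$ by $\Tr(P^\dagger P)\le 1/\lambda_{\min}$. Your write-up is in fact a bit more explicit than the paper's in spelling out where full support is used and why the final sum collapses (via $\sum_k|\alpha_{kj}|^2=1$ and $\sum_j\mu_j=1$).
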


\section{Proof of Lemma \ref{decoupling condition} (decoupling condition)}
\label{decoupling_condition_proof}
In this subsection, we show that the fidelity criterion (\ref{F_QCSW_assisted}) 
implies that given $x^n$, the environments $W_X$, $W_B$ and $W_D$ of \aw{Alice's, Bob's and \aw{Debbie}'s isometries
are decoupled from the the rest of the output systems. For convenience, we restate the lemma
we are aiming to prove.}

\medskip\noindent
\textbf{Lemma \ref{decoupling condition}.}\ \emph{(Decoupling condition)}\
\aw{For a code of block length $n$ and error $\epsilon$ in the entanglement-assisted model,  
let $W_X$, $W_B$ and $W_D$ be the environments of Alice's and Bob's encoding and of
Debbie's decoding isometries, respectively. Then,
\[
  I(W_XW_BW_D B_0'D_0':\hat{X}^n\hat{B}^nR^n|{X'}^n)_\xi \leq n \delta(n,\epsilon) ,
\] 
where $\delta(n,\epsilon) = 4\sqrt{6\epsilon} \log(|X| |B|) + \frac2n h(\sqrt{6\epsilon})$, 
with the binary entropy $h(\epsilon)=-\epsilon\log\epsilon - (1-\epsilon)\log(1-\epsilon)$;
the mutual information is with respect to the state 
\[
  \xi^{{X'}^n \hat{X}^n \hat{B}^n B_0' D_0' W_XW_BW_D R^{n}}
      =\left(\mathcal{D} \circ (\mathcal{E}_X \otimes \mathcal{E}_{B} \otimes \id_{D_0}) \otimes \id_{{X'}^n R^n}\right) 
        \omega^{X^n {X'}^n B^n R^n } \otimes \Phi_K^{B_0D_0}.
\]}

\begin{proof}
The parties share $n$ copies of the state $\omega^{X^{\prime} X B R}$, where 
Alice and Bob have access to systems $X^n$ and $B^n$, respectively, and ${X'}^n$ and $R^n$ are the reference systems.  
Alice and Bob apply the following isometries to encode their systems, respectively:
\begin{align*}
    U_{X}:{X^n}         &\longrightarrow {C_X W_X},  \\
    U_{B}:{B^n B_0}     &\longrightarrow {C_B B_0' W_B},                   
\end{align*}
where Alice and Bob send respectively their compressed information $C_X$ and $C_B$ to \aw{Debbie} 
and keep the environment parts $W_X$ and $W_B$ of their respective isometries for themselves. 
\aw{Debbie} applies \aw{the} decoding isometry \aw{$V:{C_X C_B D_0} \longrightarrow {\hat{X}^n \hat{B}^n D_0' W_D}$ 
to the systems $C_XC_B$ and her part of the entanglement $D_0$,
to generate the output systems $\hat{X}^n \hat{B}^n D_0'$, with $W_D$ the environment of her isometry.
This leads to the following final state after decoding:
\begin{align*}
  \xi^{X'^n \hat{X}^n \hat{B}^n B_0'D_0' W_X W_B W_D R^n}
      =\sum_{x^n} p(x^n) \ketbra{x^n}^{X'^n} \otimes \ketbra{\xi_{x^n}}^{\hat{X}^n \hat{B}^n B_0'D_0' W_X W_B W_D R^n},     
\end{align*}
where 
\[
  \ket{\xi_{x^n}}^{\hat{X}^n \hat{B}^n B_0'D_0' W_X W_B W_D R^n} 
     \!\!= V^{{C_XC_BD_0 \to \hat{X}^n\hat{B}^nD_0'W_D}}
        \big( U_X^{X^n \to C_XW_X}\!\!\ket{x^n}^{X^n} 
              \!\otimes U_B^{B^nB_0 \to C_BB_0'W_B} (\ket{\psi_{x^n}}^{B^nR^n}\!\ket{\Phi_K}^{B_0D_0}) \bigr).
\]}

The fidelity defined in \aw{Eq.}~(\ref{F_QCSW_assisted}) is \aw{now} bounded as follows:
\begin{align} 
  \label{eq-B1}
  \overline{F} 
    &= F\left(\omega^{X^n X'^n B^n R^n} \otimes \Phi_L^{B_0'D_0'},
               \left(\mathcal{D} \circ (\id_{X^nD_0} \otimes \mathcal{E}_{B}) \otimes \id_{X'^n R^n} \right) 
                                                   \omega^{X^n X'^n B^n R^n } \otimes \Phi_K^{B_0D_0} \right) \nonumber \\
    &= F\left(\omega^{X'^n X^n B^n R^n} \otimes \Phi_L^{B_0'D_0'},
               \xi^{X'^n \hat{X}^n \hat{B}^n B_0' D_0' R^n} \right)                                  \nonumber \\  
   &\leq F\left(\omega^{X'^n X^n B^n R^n}, \xi^{X'^n \hat{X}^n \hat{B}^n R^n} \right)                          \nonumber \\  
   &=    \sum_{x^n \in \mathcal{X}^n} p(x^n) F\left(\ketbra{x^n}{x^n}^{X^n} \otimes \ketbra{\psi_{x^n}}{\psi_{x^n}}^{B^nR^n},
                                                     \xi_{x^n}^{\hat{X}^n \hat{B}^n R^n} \right)               \nonumber \\
   &=    \sum_{x^n} p(x^n) \sqrt{\bra{x^n}\bra{\psi_{x^n}}^{B^n R^n} 
                                  \xi_{x^n}^{\hat{X}^n\hat{B}^nR^n} \ket{x^n}\ket{\psi_{x^n}}^{B^n R^n}}       \nonumber \\
   &\leq \sum_{x^n} p(x^n) \sqrt{\| \xi_{x^n}^{\hat{X}^n\hat{B}^n R^n} \|},
\end{align}
where the \aw{inequality in the third line} is due to the monotonicity of fidelity under partial trace, 
and \aw{$\|\xi_{x^n}^{\hat{X}^n \hat{B}^n R^n}\|$ denotes the operator norm, which in this case
of a positive semidefinite operator is the maximum eigenvalue of $\xi_{x^n}^{\hat{X}^n \hat{B}^n R^n}$. 
Now, consider the Schmidt decomposition of the state 
$\ket{\xi_{x^n}}^{\hat{X}^n \hat{B}^n B_0'D_0' W_X W_B W_DR^n}$ with respect to the partition
$\hat{X}^n \hat{B}^n R^n$ : $B_0' D_0'W_X W_B W_D$, i.e.
\begin{align*}
\ket{\xi_{x^n}}^{\hat{X}^n \hat{B}^n B_0'D_0' W_X W_B W_DR^n}
     = \sum_{i} \sqrt{\lambda_{x^n}(i)}\ket{v_{x^n}(i)}^{\hat{X}^n \hat{B}^n R^n} \ket{w_{x^n}(i)}^{B_0'D_0' W_X W_B W_D}. 
\end{align*}}
High average fidelity $\overline{F} \geq 1-\epsilon$ implies that \emph{on average} 
the above state has Schmidt rank \aw{approximately one. In other words, the two subsystems are 
nearly independent:} 
\begin{align}
  \label{eq:almost-pure}
  \sum_{x^n} p(x^n) F&\left( \ketbra{\xi_{x^n}}{\xi_{x^n}}^{\hat{X}^n \hat{B}^n B_0'D_0' W_XW_B W_DR^n},
                             \xi_{x^n}^{\hat{X}^n \hat{B}^n R^n} \otimes \xi_{x^n}^{B_0'D_0' W_X W_B W_D} \right) \nonumber\\
    &= \sum_{x^n} p(x^n) \sqrt{\bra{\xi_{x^n}}
                                {\xi_{x^n}^{\hat{X}^n \hat{B}^n R^n} \otimes \xi_{x^n}^{B_0' D_0' W_X W_B W_D} 
                               \ket{\xi_{x^n}}}}                                                                 \nonumber\\
    &= \sum_{x^n} p(x^n) \sum_i \lambda_{x^n}(i)^{\frac32}                                                 \nonumber\\
    &\geq \sum_{x^n} p(x^n) \|\xi_{x^n}^{\hat{X}^n \hat{B}^n R^n}\|^{\frac32}                        \nonumber\\
    &\geq \left( \sum_{x^n} p(x^n) \sqrt{\|\xi_{x^n}^{\hat{X}^n \hat{B}^n R^n}\|} \right)^{3}  \nonumber\\
    &\geq (1-\epsilon)^3 
     \geq 1 - 3\epsilon,  
\end{align}
where the \aw{inequality in the fifth line} follows from the convexity of $x^3$ for $x \geq 0$, 
and \aw{in the sixth line we have used Eq.}~(\ref{eq-B1}). 
Based on the relation between fidelity and trace distance \aw{(Lemma \ref{lemma:FvdG}), 
we thus obtain for the product ensemble
\[
  \zeta^{X'^n \hat{X}^n \hat{B}^n B_0'D_0' W_X W_B W_D R^n} 
    := \sum_{x^n} p(x^n)\ketbra{x^n}^{X'^n} \otimes \xi_{x^n}^{\hat{X}^n \hat{B}^n R^n} \otimes \xi_{x^n}^{B_0'D_0' W_X W_B W_D},
\] 
that
\begin{align*}
  \| \xi - \zeta \|_1 
    &=    \sum_{x^n} p(x^n) \norm{\ketbra{\xi_{x^n}}{\xi_{x^n}}^{\hat{X}^n \hat{B}^n B_0'D_0' W_XW_B W_DR^n}
                                      \!-\! \xi_{x^n}^{\hat{X}^n \hat{B}^n R^n} \!\otimes\! \xi_{x^n}^{B_0'D_0' W_X W_B W_D}}_1 \\
    &\leq 2\sqrt{6\epsilon}.   
\end{align*}}
By \aw{the Alicki-Fannes inequality (Lemma \ref{AFW lemma}), this implies
\begin{align}
  \label{decoupling_I}
  I(\hat{X}^n\hat{B}^nR^n : B_0'D_0'W_XW_B W_D | {X'}^n)_\xi 
     &=    S(\hat{X}^n \hat{B}^n R^n | {X'}^n)_\xi - S(\hat{X}^n \hat{B}^n R^n | {X'}^n B_0'D_0' W_XW_B W_D)_\xi \nonumber \\ 
     &\leq 2\sqrt{6\epsilon} \log(|X|^n |B|^n |R|^n) + 2 h(\sqrt{6\epsilon}) \nonumber \\
     &\leq 2\sqrt{6\epsilon} \log(|X|^{2n} |B|^{2n}) + 2 h(\sqrt{6\epsilon}) 
           =: n \delta(n,\epsilon),  
\end{align}
where we note in the second line that 
$S(\hat{X}^n \hat{B}^n R^n|X'^n B_0'D_0' W_XW_B W_D)_\zeta 
 = S(\hat{X}^n \hat{B}^n R^n)_\zeta = S(\hat{X}^n \hat{B}^n R^n)_\xi$,
and in the third line that we can without loss of generality assume $|R| \leq |X| |B|$, 
since that is the maximum possible dimension of the support of $\omega^R$.}
\end{proof}

%

\end{document}